\documentclass[11pt]{article}

\usepackage[colorlinks=true,linkcolor=blue,citecolor=blue,urlcolor=blue]{hyperref}
\usepackage{xspace}
\usepackage{graphicx}
\usepackage{waingarten}
\usepackage[linesnumbered,ruled]{algorithm2e}
\usepackage{thmtools}
\usepackage{thm-restate}
\usepackage{color}
\usepackage{setspace}
\usepackage{array}

\def\FullBox{\hbox{\vrule width 8pt height 8pt depth 0pt}}
\newcommand{\QED}{\;\;\;\FullBox}
\renewenvironment{proof}{\noindent{{\textbf{Proof:}~}}} {\hfill\QED}
\providecommand{\email}[1]{\href{mailto:#1}{\nolinkurl{#1}\xspace}}

\def\FullBox{\hbox{\vrule width 8pt height 8pt depth 0pt}}


\makeatletter


\definecolor{b2}{RGB}{51,153,255}
\definecolor{mygreen}{RGB}{80,180,0}
\definecolor{brightmaroon}{rgb}{0.76, 0.13, 0.28}
\definecolor{lapislazuli}{rgb}{0.15, 0.38, 0.61}
\definecolor{indiagreen}{rgb}{0.07, 0.53, 0.03}




\newcommand{\ignore}[1]{}

\title{Polylogarithmic Sketches for Clustering}

\author {
Moses Charikar \and Erik Waingarten 
}

\begin{document}
\maketitle

\begin{abstract}
Given $n$ points in $\ell_p^d$, we consider the problem of partitioning points into $k$ clusters with associated centers.
The cost of a clustering is the sum of $p^{\text{th}}$ powers of distances of points to their cluster centers. 
For $p \in [1,2]$, we design sketches of size $\poly(\log(nd),k,1/\epsilon)$
such that the cost of the optimal clustering can be estimated to within factor $1+\epsilon$, 
despite the fact that the compressed representation does not contain enough information to recover the cluster centers or the partition into clusters.
This leads to a streaming algorithm for estimating the clustering cost with space $\poly(\log(nd),k,1/\epsilon)$.
We also obtain a distributed memory algorithm, where the $n$ points are arbitrarily partitioned amongst $m$ machines, each of which sends information to a central party who then computes an approximation of the clustering cost.
Prior to this work, no such streaming or distributed-memory algorithm was known with sublinear dependence on $d$ for $p \in [1,2)$.
\end{abstract}
\thispagestyle{empty}

\newpage

\tableofcontents

\thispagestyle{empty}

\newpage
\setcounter{page}{1}


\section{Introduction}

Given a large number of high-dimensional points, is it possible to compress the raw representation into a very compact sketch so that we can understand how clusterable the data is from just this highly compressed representation?
Given $n$ points in $d$ dimensions, we consider the problem of approximating the cost of clustering them into $k$ clusters from a compressed representation whose size is polylogarithmic in both $n$ and $d$.

For $n, d \in \N$, let $P = \{ x_1,\dots, x_n \} \in \R^d$ be any set of points with polynomially bounded entries (i.e., all coordinates may be represented with $O(\log(nd))$ bits). The $(k,z)$-clustering problem in $\ell_p^d$, asks to partition $P$ into at most $k$ clusters $C_1,\dots, C_k$ so as to minimize
\begin{align}
\sum_{\ell=1}^k \min_{c_{\ell} \in \R^d} \sum_{x \in C_{\ell}} \| x - c_{\ell} \|_p^z. \label{eq:k-z-clustering}
\end{align}
The problem is a generalization of the $k$-means and $k$-median problem; in particular, in Euclidean space $(p=2)$, $z = 2$ corresponds to $k$-means, and $z=1$ to $k$-median. 

We note that the raw representation of the dataset uses $O(nd \log(nd))$ bits, and that any algorithm which outputs optimal cluster centers $c_1,\dots, c_k \in \R^d$, or the optimal clustering $C_1,\dots, C_k$ must utilize $\Omega(kd)$, or  $\Omega(n \log k)$ bits of space, respectively. Hence, such algorithms cannot decrease the dependency on both $n$ and $d$ simultaneously. However, this does not rule out an exponential compression, from $O(nd \log(nd))$ bits to $\polylog(nd)$ bits (for constant $k$ and $z$), for algorithms which approximate the optimal clustering \emph{cost}, which only needs $O(\log(nd))$ bits. In this work, we show that it is possible to design sketches of size $\poly(\log(nd), 1/\eps)$ bits which $\eps$-approximate the optimal clustering \emph{cost}, despite the fact that we do not have enough information to compute the clusters nor the cluster centers which achieve such cost. 


Our results fit into a line of prior work on approximating the cost of optimization problems without necessarily computing an optimal solution. These have been investigated before for various problems and in various contexts, including estimating minimum spanning tree \cite{I04b,CEFMNRS05,CS09}, minimum cost matchings and Earth Mover's Distance \cite{I04b,AIK08,ABIW09,BI14}, minimum vertex cover and maximum matching \cite{PR07,NO08,YYI09,ORRR12,CKK20,KMNT20,B21} and model-fit \cite{KV18,KBV20,BGLT20}. Specifically for clustering problems, the value of the clustering cost is an important statistic; used, for example, in the ``elbow method'' for determining the number of clusters needed. We will show these sketches may be efficiently maintained on a stream as well as for distributed-memory models, implying $\polylog(nd)$-bit algorithms for these models of computation.


We start by reviewing a set of techniques in the literature to either reduce the dependence on the data set size or the dependence on the dimension.

\paragraph{Coresets.} The coreset technique is a ``dataset compression'' mechanism, aiming to reduce the dependency on $n$. From the $n$ points $P \subset \R^d$, an algorithm computes a much smaller (weighted) set of points $S \subset \R^d$, $w\colon S \to \R_{\geq 0}$, such that the cost of clustering the weighted points $S, w$ approximates that of $P$.  Following a long line of work~\cite{BHI02, HM04, AHV05, C09, LS10, FL11, FSS13, BFL16, SW18, HV20, CSS21}, the best coreset constructions for $(k,z)$-clustering in $\ell_p$ achieve sizes $\tilde{O}(k / \eps^4) \cdot \min\{ 1/\eps^{z-2}, k\}$ for a $(1\pm \eps)$-approximation. The ensuing streaming and distributed-memory algorithms maintain a coreset of the input; these algorithms find (approximately) optimal centers $c_1,\dots, c_k \in \R^d$ and use space complexity $d \cdot \tilde{O}(k/\eps^4) \cdot \min \{1/\eps^{z-2}, k\} \cdot \polylog(n)$.\footnote{The $\log n$-factors arise from utilizing the ``merge-and-reduce'' framework for maintaining coresets on a stream \cite{BS80, AHV05}, and the fact the coreset constructions are randomized.}

\paragraph{Dimension Reduction and Sketching.} In addition to constructing coresets, an algorithm may seek to optimize the dependence on $d$. 
There is a large body of work studying (oblivious) dimensionality reduction and sketching, where strong compression results are known for computing distances~\cite{AMS99, KOR00,SS02, BJKS04, CCF04, CM05, IW05, I06, JW09, KNW10, AKO11, AKR15, BBCKY17}. For example, for $p \in [1,2]$ there exists a (randomized) sketch $\sk \colon \R^d \to \R^t$ with $t$ much smaller than $d$ such that, for any two vectors $x, y \in \R^d$, an algorithm can approximate $\| x -y\|_p$ from $\sk(x)$ and $\sk(y)$ with high probability. While these results are encouraging, leveraging such sketches for distance computation in order to compress entire optimization problems like (\ref{eq:k-z-clustering}) is highly nontrivial. 
The challenge is that (\ref{eq:k-z-clustering}) implicitly considers distances among infinitely many vectors, and we need to rule out the possibility of spurious low cost solutions in the ``sketched'' space which do not have an analog in the original space. In particular, prior to this work, no streaming or distributed-memory algorithm was known which could reduce the dependence on $d$ for $p \in [1, 2)$.

There is one setting, of Euclidean space ($p = 2$), where one can sketch vectors while preserving (\ref{eq:k-z-clustering}). A sequence of works \cite{BZD10, CEMMP15, BBCGS19, MMR19} show that applying a Johnson-Lindenstrauss~\cite{JL84} map $\bPi \colon \R^d \to \R^t$ with $t = O(z^4 \log(k/\eps)/\eps^2)$, sketches Euclidean vectors to $O(t \log(nd))$ bits and preserves (\ref{eq:k-z-clustering}) up to $1\pm \eps$. We emphasize that Euclidean space $p = 2$ is special in this regard, because the Johnson-Lindenstrauss map achieves \emph{dimension reduction}, a property known not to hold in $\ell_1$~\cite{BC05, LN04, ACNN11}. In particular, $d$-dimensional vectors $x \in \R^d$ in Euclidean space are sketched to vectors $\bPi(x) \in \R^t$ in Euclidean space, i.e., one estimates $\| x - y\|_2$ by $\| \bPi(x) - \bPi(y)\|_2$. 
Thus the optimization problem (\ref{eq:k-z-clustering}) for $d$-dimensional Euclidean space reduces to the same optimization problem for a much smaller dimensional Euclidean space.
This can therefore be composed with known coreset constructions.
Importantly, the ``sketched'' space inherits all geometric properties of Euclidean spaces, a key aspect of prior works, and the reason they do not extend beyond Euclidean space. 
The technical challenge in applying sketches for $\ell_p$ when $p \neq 2$ is that the ``sketched'' space is non-geometric.\footnote{For example, the sketched space for $\ell_p$ with $p \neq 2$ does not satisfy the triangle inequality: it is not the case that for any $x, y, z \in \R^d$, the estimate of $(\sk(x), \sk(y))$ plus the estimate of $(\sk(y), \sk(z))$ is less than the estimate of $(\sk(x), \sk(z))$. On the other hand, for $p = 2$, the estimates of $(\sk(x), \sk(y))$, $(\sk(y), \sk(z))$, and $(\sk(x), \sk(z))$ are $\| \sk(x) - \sk(y)\|_2$, $\|\sk(y) - \sk(z)\|_2$, and $\|\sk(x) - \sk(z)\|_2$, so the triangle inequality does hold in the sketched space.}

\ignore{For Euclidean space ($p=2$), a sequence of works shows this is possible \cite{BZD10, CEMMP15, BBCGS19, MMR19}. Specifically, Makarychev, Makarychev and Razenshteyn \cite{MMR19} show that applying a Johnson-Lindenstrauss map $\bPi \colon \R^d \to \R^t$, where $t = O(z^4 \log(k/\eps)/\eps^2)$ preserves (\ref{eq:k-z-clustering}) up to $1\pm \eps$. From a technical perspective, the Johnson-Lindenstrauss map is qualitatively different from the $\ell_p$-sketches with $p \in [1,2)$,\footnote{Specifically, the Johnson-Lindenstrauss map is a \emph{dimension reduction} result: $\bPi \colon \R^d \to \R^t$ maps vectors in $\ell_2^d$ into a lower dimensional $\ell_2^t$, i.e., one estimates $\|x - y\|_2$ by $\| \bPi(x) - \bPi(y)\|_2$. For $p \in [1,2)$, the sketching map $\sk \colon \R^d \to \R^t$ is not a dimension reduction, i.e. we do not compute $\ell_p$ distances in the sketch space. Rather, the algorithm estimates $\| x - y\|_p$ from $\sk(x) \in \R^t$ and $\sk(y)\in\R^t$ by computing the median magnitude of the $t$ coordinates of $\sk(x) - \sk(y) \in \R^t$.} and the proof of \cite{MMR19} utilizes properties of Euclidean space which do not hold for other values of $p$. }

\ignore{\paragraph{Other Related Work.} Efficiently estimating the cost of various optimization problems without explicitly computing the solution has been studied previously for various problems and in various contexts. 

In the sketching and streaming setting, Indyk~\cite{I04b} studied the problem of approximating the cost of various geometric problems including minimum spanning tree and minimum weight matching (i.e., Earth Mover's Distance, or EMD), and gave algorithms which estimate the cost without computing a minimum spanning tree, or matching. EMD was further studied by Andoni, Ba, Indyk, and Woodruff \cite{ABIW09}, Andoni, Indyk, and Krauthgamer \cite{AIK08}, and Ba\"{c}kurs and Indyk \cite{BI14}. 

From a sublinear algorithms perspective, estimating the weight of the minimum spanning tree in sublinear time was studied by Czumaj et al. \cite{CEFMNRS05} in the Euclidean setting and by Czumaj and Sohler \cite{CS09} in the metric setting. Recently,

In a similar vein, Kong and Valiant \cite{KV18} showed that is is possible to accurately estimate
the ``learnability'' of data 
(i.e. how well a model class is capable of fitting
a distribution of labeled data) even when given an amount of data that is too small to reliably learn any accurate model.
They studied this for linear regression and binary classification.
Similar results have been obtained for estimating the optimal expected reward for contextual multi-armed bandits \cite{KBV20}
and for estimating the learnability of decision trees \cite{BGLT20}.
}

\subsection{Our results} 

We give a streaming and distributed-memory algorithm for $(k,p)$-clustering in $\ell_p$ with space complexity $\poly(\log(nd), k, 1/\eps)$ bits. 

\begin{theorem}[Streaming $(k,p)$-Clustering in $\ell_p$]\label{thm:streaming}
For $p \in [1,2]$, there exists an insertion-only streaming algorithm which processes a set of $n$ points $x_1,\dots, x_n \in \R^d$ utilizing $\poly(\log(nd), k, 1/\eps)$ bits which outputs a parameter $\boldeta \in \R$ satisfying
\[ (1-\eps) \min_{\substack{C_1,\dots, C_k \\ \text{partition $[n]$}}} \sum_{\ell=1}^k \min_{c_{\ell} \in \R^d} \sum_{i \in C_{\ell}} \| x_i - c_{\ell} \|_p^p \leq \boldeta \leq (1+\eps) \min_{\substack{C_1,\dots, C_k \\ \text{partition $[n]$}}} \sum_{\ell=1}^k \min_{c_{\ell} \in \R^d} \sum_{i \in C_{\ell}} \| x_i - c_{\ell} \|_p^p \]
with probability at least $0.9$.
\end{theorem}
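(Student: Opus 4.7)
The plan is to approximate the $(k,p)$-clustering cost via a randomly shifted quadtree decomposition, express the tree-metric clustering cost as a sum of levelwise combinatorial quantities, and estimate each of these quantities with a small streaming sketch. Since coordinates are polynomially bounded, a standard randomly-shifted quadtree $T$ on a bounding box $[0,M]^d$ with $M=\poly(nd)$ has depth $L=O(\log(nd))$, and each arriving point is associated with its sequence of enclosing cells at levels $0,1,\dots,L$; the sketches then track aggregate level-$i$ cell information rather than explicit points.

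The key structural step is to show that the optimal $(k,p)$-clustering cost in $\ell_p$ is preserved, up to a $(1+\eps)$ factor, by a level-decomposed quantity of the form $\sum_{i=0}^{L} w_i \cdot \phi_i(P)$, where $w_i \sim 2^{ip}$ is the ``price per merge'' at level $i$, and $\phi_i(P)$ depends only on how the points distribute among level-$i$ cells of $T$ (e.g.\ a count of non-empty level-$i$ cells truncated at $k$, or a weighted variant accounting for cell occupancies). The intuition is that, under the tree metric, an optimal clustering restricted to level $i$ uses at most $k$ non-empty cells; any additional non-empty cells must be merged upward, paying roughly $2^{ip}$ per merge. Once such a decomposition is established, each $\phi_i(P)$ can be estimated using a $\poly(\log(nd),k,1/\eps)$-bit sketch---distinct-elements ($F_0$) sketches keyed by the level-$i$ cell identifier of each arriving point, possibly combined with subsampling to implement the $k$-thresholding. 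Summing the $L+1$ weighted estimates returns $\boldeta$, and since all components are insertion-streaming sketches updated in $\poly(\log(nd),1/\eps)$ time per point, the overall space is $\poly(\log(nd),k,1/\eps)$ bits.

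The main obstacle is proving the $(1+\eps)$-quality of the decomposition. Quadtree embeddings generically distort single distances by an $O(d^{1/p})$ factor, so transferring the clustering cost from $\ell_p^d$ to a tree metric is lossy at the single-distance scale. To achieve $(1+\eps)$ rather than $\poly(d)$ approximation, the plan is to combine two observations: (i) we only need to preserve the \emph{sum} of $p$-th powers of distances across $n$ points, and averaging over the random shift of $T$ makes each level's $2^{ip}$-weighted contribution an (approximately) unbiased estimator of the true geometric cost contribution at scale $2^i$; and (ii) the cost telescopes across levels, so overestimates at a coarse scale can be charged against savings at a finer scale via a careful accounting argument. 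Formalizing this---and ensuring the approximation holds \emph{uniformly} over all $k$-partitions rather than just the optimum in the original metric---is the delicate step, likely requiring a $p$-th-moment concentration argument over the random shift together with a structural lemma relating the combinatorial $\phi_i$ to the correct geometric contribution at scale $2^i$ in expectation. Finally, boosting from constant success probability to the stated $0.9$, and carrying through the total $(1+\eps)$-error over $O(\log(nd))$ levels, would be handled by standard median-of-independent-copies and rescaling of $\eps' = \eps/\polylog(nd)$.
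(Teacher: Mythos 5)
Your proposal takes a genuinely different route from the paper, but it has a fundamental gap that I do not believe can be patched.

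The paper's approach goes through coresets plus a new $\ell_p^p$-median sketch: it first reduces the $n$ points to a weighted coreset of size $\poly(k\log n/\eps)$ via merge-and-reduce, and then estimates the single-cluster $\ell_p^p$-median cost of every candidate partition of the coreset using importance sampling over the $d$ coordinates (an ``$\ell_p$-sampling with meta-data'' scheme, where Count-Min is used to compress $d$ per-coordinate $\ell_p$-sketches). The dependence on $d$ is removed because the clustering objective with exponent $z=p$ is separable over coordinates, so the hard part is sketching a sum of $d$ one-dimensional minimization problems.

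Your proposal instead uses a randomly shifted quadtree and tries to write the cost as $\sum_i w_i\,\phi_i(P)$ with $w_i \sim 2^{ip}$ and $\phi_i$ a truncated nonempty-cell count. This is in the spirit of Indyk's streaming MST/EMD estimators, and sketching those $\phi_i$'s with $F_0$-type sketches is standard. The problem is the claimed $(1+\eps)$ quality of the decomposition, which is exactly where you acknowledge you need a new argument and only sketch a ``plan.'' That plan does not work. A randomly shifted quadtree embeds into a tree metric whose expected distance between two points is governed by the $\ell_1$ distance (and carries an inherent $\Theta(\log)$-factor or, for $p>1$, an additional $d$-dependent factor once you raise to the $p$-th power): the distortion is a systematic \emph{bias}, not a zero-mean fluctuation, so averaging over shifts cannot recover a $(1\pm\eps)$-faithful proxy. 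For this reason quadtree reductions for $k$-median/$k$-means are known to give $O(\log n)$- or $O(d)$-approximations, not $1+\eps$. The telescoping idea you gesture at does not cancel this bias; it only controls the variance of unbiased per-level estimators, and here the per-level quantities are not unbiased for the geometric cost at scale $2^i$. Moreover, even if the decomposition were tight, the quantity you propose to estimate is (essentially) a minimum over $k$-center placements inside the tree, but minimization does not commute with the expectation over the random shift, so ``preserved in expectation'' does not transfer to ``preserves the optimum''; this is the ``uniform over all $k$-partitions'' problem you flag but do not resolve. Finally, $\phi_i$ as a truncated count of nonempty level-$i$ cells discards the within-cell and cross-cell mass distribution that the actual $\ell_p^p$ cost depends on, so even the combinatorial side of the reduction is lossy beyond constant factors.

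In short, the approach as written would yield at best an $O(\log(nd))$- or $\poly(d)$-approximation, far from the $(1+\eps)$ claimed. The paper circumvents the embedding obstacle entirely by never passing to a tree metric: it exploits coordinate-separability of $\ell_p^p$ and uses an importance-sampling linear sketch in the original coordinate system, which is precisely why the restriction $z=p$ appears in the theorem statement.
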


\begin{theorem}[Distributed-Memory $(k,p)$-Clustering in $\ell_p$]\label{thm:distributed}
For $p \in [1,2]$, there exists a public-coin protocol where $m$ machines receive an arbitrary partition of $n$ points $x_1,\dots, x_n \in \R^d$, each communicates $\poly(\log(md), k, 1/\eps)$ bits to a central authority who outputs a parameter $\boldeta \in \R$ satisfying 
\[ (1-\eps) \min_{\substack{C_1,\dots, C_k \\ \text{partition $[n]$}}} \sum_{\ell=1}^k \min_{c_{\ell} \in \R^d} \sum_{i \in C_{\ell}} \| x_i - c_{\ell} \|_p^p \leq \boldeta \leq (1+\eps) \min_{\substack{C_1,\dots, C_k \\ \text{partition $[n]$}}} \sum_{\ell=1}^k \min_{c_{\ell} \in \R^d} \sum_{i \in C_{\ell}} \| x_i - c_{\ell} \|_p^p \]
with probability at least $0.9$.
\end{theorem}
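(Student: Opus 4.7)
The plan is to derive Theorem~\ref{thm:distributed} from Theorem~\ref{thm:streaming} by observing that the streaming sketch we construct should be, up to standard discretization, a \emph{linear} sketch built from shared randomness. Linear sketches are naturally mergeable across arbitrary partitions of the input, so a distributed-memory protocol falls out of the same primitive at essentially no additional cost.

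Concretely, the protocol would go as follows. Using the public coins, all $m$ machines agree on the random objects (sketch matrices, hash functions, etc.) used by the streaming algorithm. Each machine $j$ runs the Theorem~\ref{thm:streaming} sketch on its own subset of points $P_j$, producing a sketch state $\mathsf{sk}(P_j)$, and transmits it to the central authority. By linearity, $\sum_{j=1}^m \mathsf{sk}(P_j) = \mathsf{sk}\bigl(\bigcup_j P_j\bigr)$, so the central authority can feed the merged sketch into the Theorem~\ref{thm:streaming} estimator to output $\boldeta$. Correctness, including the $0.9$ success probability, is inherited directly from the streaming guarantee, since from the estimator's perspective the input it receives is indistinguishable from the sketch produced by a single pass over the concatenated stream $P_1, P_2, \dots, P_m$.

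The main technical obstacle is the bit complexity per machine. A priori a sketch entry can have magnitude $\poly(nd)$, and $n$ need not be bounded by any function of $m$ and $d$. To meet the advertised $\poly(\log(md), k, 1/\eps)$ budget (which is $n$-free), each machine truncates its sketch entries to a precision depending only on $\log(md)$, $k$, and $1/\eps$ before sending them. The truncation errors accumulate over at most $m$ messages, so carrying an extra $\Theta(\log(m/\eps))$ bits on top of the natural scale is enough to preserve the $(1\pm\eps)$-approximation. The substantive check — verifying that the sketch underlying Theorem~\ref{thm:streaming} is genuinely linear (so that machines can sum their messages coordinatewise) and that the downstream cost estimator remains stable under such coarse-grained rounding — is the crux of obtaining Theorem~\ref{thm:distributed}; both properties should hold by construction of our sketch.
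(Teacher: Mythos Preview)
Your reduction has a real gap: the streaming algorithm of Theorem~\ref{thm:streaming} is \emph{not} a linear sketch. It has two components: (i) a strong coreset maintained via merge-and-reduce, and (ii) for each surviving coreset point, a collection of linear sub-sketches (the $\ell_p$-sketches and the vectors $\bS_j^{(m)} x_i$ of Section~\ref{sec:streaming}). Part (ii) is indeed linear, but part (i) is not: the coreset is a data-dependent \emph{subset} of the input, and which points it retains depends on all of $P$. The identity $\sum_j \mathsf{sk}(P_j) = \mathsf{sk}\bigl(\bigcup_j P_j\bigr)$ you rely on is simply false here---summing the memory states of $m$ independent coreset runs does not produce anything meaningful. (This nonlinearity is also why the paper's streaming algorithm is insertion-only rather than turnstile.)

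The paper's route uses a different structural property: \emph{composability of strong coresets}. Each machine runs the coreset algorithm of Corollary~\ref{cor:strong-coresets} on its local points, keeping---as in Section~\ref{sec:streaming}---only the $\ell_p$-sketches and the median-sketch sub-sketches of its selected coreset points rather than the $d$-dimensional points themselves. It transmits these sketches and the associated weights to the center. Because the union of strong coresets for disjoint subsets is a strong coreset for the union (the same property that underlies merge-and-reduce), the center now holds a sketched coreset of the full dataset, together with exactly the sub-sketches needed to invoke Lemma~\ref{lem:ell-p-median} on any candidate clustering of it, and finishes exactly as in the proof of Theorem~\ref{thm:streaming-alg}. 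Your truncation discussion, premised on coordinatewise summation of $m$ linear-sketch vectors, addresses the wrong issue.
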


Both algorithms will follow from applying a coreset and compressing the representation of the coreset points into sketches to recover single-cluster cost. Specifically, the bottleneck for our algorithm will be estimating the cost of $(k,p)$-clustering in $\ell_p$ for $k = 1$. We give a linear sketch such that given a set of points $x_1,\dots, x_n \in \R^d$, one may approximate the $\ell_p^p$-median cost:
\[ \min_{y \in \R^d} \sum_{i=1}^n \| x_i - y\|_p^p. \]
Most of the technical work will be devoted to sketching this ``$\ell_p^p$-median cost'' objective. Then, the streaming and distributed-memory algorithm will evaluate the sum of $\ell_p^p$-median costs for all possible partitions of the coreset points into $k$ parts. The following theorem gives a linear sketch for approximating the $\ell_p^p$-median cost.\footnote{A related although different work is that of approximating the $\ell_p^p$-median (for instance, see Appendix F of \cite{BIRW16}). An $\ell_p^p$-median is a vector in $\R^d$ which means the sketch outputs $d$ numbers; however, we will sketch the $\ell_p^p$-median cost, which is a real number. Hence, our sketch will use $\poly(\log(nd), 1/\eps)$ space, as opposed to $\Omega(d)$ space needed to describe an $\ell_p^p$-median.} 

\begin{theorem}[$\ell_p^p$-Median Sketch]\label{thm:ell-p-median}
For $p \in [1,2]$, there exists a linear sketch which processes a set of $n$ points $x_1,\dots, x_n \in \R^d$ into a vector $\R^t$ with $t = \poly(\log(nd), 1/\eps)$ and outputs a parameter $\boldeta \in \R$ satsifying
\[ (1-\eps) \min_{y \in \R^d} \sum_{i=1}^n \| x_i - y\|_p^p \leq \boldeta \leq (1+\eps) \min_{y \in \R^d} \sum_{i=1}^n \|x_i - y\|_p^p \]
with probability at least $0.9$.
\end{theorem}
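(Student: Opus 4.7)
My plan starts from the coordinate-wise decomposition
\[ \sum_{i=1}^n \|x_i - y\|_p^p \;=\; \sum_{j=1}^d \sum_{i=1}^n |x_{ij} - y_j|^p, \]
which makes the outer minimization separable: $\mathrm{OPT} = \sum_{j=1}^d \mathrm{OPT}_j$, where $\mathrm{OPT}_j = \min_{t \in \R} \sum_i |x_{ij}-t|^p$ is a one-dimensional $\ell_p^p$-median cost on the $j$-th column. The high-level strategy is therefore to (i) design a small linear sketch for the one-dimensional problem and (ii) bundle all $d$ coordinates into a single sketch of size $\polylog(d)$ using $p$-stable projections, which are linear and closed under linear combinations.

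For the one-dimensional building block I would exploit that, for $p \in [1,2]$, the convex function $t \mapsto \sum_i |a_i - t|^p$ is approximately flat near its minimizer. Concretely, $\min_t \sum_i |a_i - t|^p = \Theta_p(1)\cdot \sum_i |a_i - \mu|^p$ whenever $\mu$ is a constant-factor-good center such as the coordinate mean, and perturbing such a $\mu$ by an $\eps$-small amount changes the cost only by a lower-order term. So the 1D task reduces to (a) locating an $\eps$-approximately-optimal scalar $\mu^\star$ and (b) estimating an $F_p$-type moment of the $\mu^\star$-centered column, both of which admit small linear sketches.

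At the $d$-dimensional level, I would apply a $p$-stable matrix $S \in \R^{r \times d}$ with $r = \polylog(d, 1/\eps)$ to each input point and, for a candidate center $c$, estimate $\sum_i \|x_i - c\|_p^p$ via the standard median-of-absolutes $F_p$ estimator of Indyk applied to $S(x_i - c)$. The candidate center itself would come from a sensitivity-based coreset: sample $\polylog(n)/\eps^{O(1)}$ indices with probabilities proportional to $1/n + \|x_i - \hat\mu\|_p^p/\widehat{\mathrm{OPT}}$, where $\hat\mu$ and $\widehat{\mathrm{OPT}}$ are crude constant-factor estimates obtained from an auxiliary sketch (e.g.\ a global $p$-stable $F_p$ sketch of $x_i - \bar x$ plus a $\polylog$-size reservoir of sketched coordinates). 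Retaining the sketches $Sx_i$ of the sampled points together with the global linear aggregate $S(\sum_i x_i)$ should then suffice to evaluate any candidate 1-median to within $(1+\eps)$ and to take the minimum.

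The main obstacle I anticipate is the chicken-and-egg nature of sensitivity sampling: the ideal probabilities depend on quantities we are trying to approximate. In a one-shot linear-sketch setting (as required by the theorem) this forces running $O(\log(nd))$ scales of samplers in parallel and a careful argument that the ``correct'' scale yields a valid coreset. A secondary obstacle is composing the one-dimensional refinement with the $F_p$-estimator while preserving $(1\pm\eps)$ accuracy: the $p$-stable estimator is only $(1\pm\eps)$-accurate after $\Omega(1/\eps^2)$ median-of-means repetitions, and the per-point multiplicative errors must be aggregated over both the coreset sampling and the $F_p$ sketches without blow-up, which is where the bulk of the technical work is likely to go.
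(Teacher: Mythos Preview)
Your proposal correctly identifies the coordinate decomposition $\mathrm{OPT}=\sum_{j=1}^d \mathrm{OPT}_j$ and the fact that, after centering, $\mathrm{OPT}_j$ is within a constant factor of $\|x_{\cdot,j}\|_p^p$; the paper uses exactly these two observations. The genuine gap is in how you pass from the one-dimensional building block to the full $d$-dimensional problem.

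Your concrete plan at the $d$-dimensional level is a \emph{point} coreset: sample $\polylog(n)/\eps^{O(1)}$ indices by sensitivity, keep the $p$-stable sketches $Sx_i$ of the sampled points, and evaluate candidate centers drawn from the coreset. The problem is the last step. Sketches $Sx_i$ let you estimate $\sum_i w_i\|x_i-c\|_p^p$ for any $c$ whose sketch $Sc$ you can form, but they do not let you \emph{optimize} over $c\in\R^d$: once you apply $S$ you destroy the coordinate structure, and the sketched space for $p\neq 2$ is non-geometric (no triangle inequality), so you cannot search for a minimizer there. Falling back on data points as candidate centers is exactly the medoid problem, which only gives a constant-factor approximation and, as the paper shows in Appendix~\ref{app:medoid-cost}, any one-pass $(2-\eps)$-approximation to the medoid cost already requires $\Omega(\eps n)$ space. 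So the ``take the minimum over candidate centers from the coreset'' step cannot yield $(1+\eps)$ with your sketch budget.

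The paper's route is essentially orthogonal to yours: instead of sampling \emph{points}, it importance-samples \emph{coordinates} $j\in[d]$ from the distribution $\calD$ with $\Pr[\bj=j]\propto\|x_{\cdot,j}\|_p^p$, so that the bounded ratio $\mathrm{OPT}_j/\|x_{\cdot,j}\|_p^p\in[2^{-p},1]$ makes a few samples suffice to estimate $\mathrm{OPT}=Z\cdot\Ex_{\bj\sim\calD}[\mathrm{OPT}_{\bj}/\|x_{\cdot,\bj}\|_p^p]$. The sampling is realized by the exponential-scaling/$\ell_p$-sampling trick of \cite{JW18}, and the key device you are missing is ``sketching the sketches'': one prepares, in parallel, $d$ per-coordinate $\ell_p$ sketches $\sk(x_{\cdot,j})$ and compresses them into a single Count-Min structure of size $\polylog(d)/\eps^{O(1)}$; because the sampled coordinate $\bj$ is a heavy hitter after the exponential scaling, Count-Min recovers $\sk(x_{\cdot,\bj})$ with small enough additive error that the one-dimensional optimization (trying $O(1/\eps)$ candidate scalars, as you correctly outline) still yields $(1\pm\eps)$. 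Your proposal has all the right one-dimensional ingredients but lacks this coordinate-sampling-plus-sketch-compression mechanism, which is precisely what allows the $(1+\eps)$ optimization to survive the reduction to $\polylog(d)$ space.
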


There are a few important remarks to make:
\begin{itemize}
\item The requirement that $p \leq 2$ is necessary for the exponential compression we desire. For $p > 2$, there are strong lower bounds for sketching distances which show that such sketches require $\Omega(d^{1-2/p})$ space~\cite{BJKS04}. For $p < 1$, we are not aware of small coresets.
\item The focus of this work is on optimizing the space complexity of the sketch, and while we do not explicitly specify the running time of the sketching and streaming algorithms, a naive implementation runs in time $(k\log(nd) / \eps)^{(k\log n/\eps)^{O(1)}}$. The exponential factor is due to the fact that we evaluate the cost of all possible partitions of the $(k\log(n)/\eps)^{O(1)}$-coreset points into $k$ clusters. One could alleviate the exponential dependence to $(k/\eps)^{O(1)}$ (as opposed to $(k \log n/\eps)^{O(1)}$) by running more sophisticated approximation algorithms \cite{BHI02, KSS04} on the sketched representation of the coreset.\footnote{The one subtlety is that the algorithm should be implemented without explicitly considering the $d$-dimensional representation of the points. Instead, it should only use the sketches of Theorem~\ref{thm:ell-p-median}.} We note that a super-polynomial dependence on $k$ should is unavoidable, because $(1\pm\eps)$-approximations for $(k,z)$-clustering problems, for \emph{non-constant} $k$, are NP-hard \cite{ACKS15, LMW17, CK19}. 
\item It would be interesting to generalize Theorem~\ref{thm:streaming} to dynamic streams. The reason our algorithm works in the insertion-only model is that we utilize the coreset of \cite{HV20} with the merge-and-reduce framework \cite{BS80, AHV05} which do not support deletions. While there exist dynamic coreset constructions~for the streaming model~\cite{BFLSY17, HSFZ19}, our use of coresets is not entirely black-box. Other dynamic coresets, like \cite{HK20}, focus on update time and do not optimize the space complexity. We must ensure that the algorithm for constructing the coreset does not utilize the $d$-dimensional representation of the dataset points. The coreset construction of \cite{HV20} only consider distances between the dataset points, so it suffices for us to only maintain a sketch of the dataset points. 
\item The fact that $z = p$ in our theorems above is a consequence of our techniques. It is unclear to us whether this assumption is necessary, although our approach hinges on the fact $\ell_p^p$ is additive over the $d$ coordinates. We leave this as a problem for future work.
\end{itemize}

A similar, yet importantly different notion of $(k,z)$-clustering considers \emph{medoid} cost, where the centers of the $k$ clusters $c_1,\dots, c_k$ are restricted to be dataset points. While seemingly similar to the $(k,z)$-clustering objective where centers are unrestricted, these two are qualitatively very different from a sketching perspective. In Appendix~\ref{app:medoid-cost}, we show that while a two-pass sketching algorithm may $\eps$-approximate the medoid cost, $\eps$-approximations for one-pass sketching algorithm require polynomial space.


\subsection{Technical Overview}

We give an overview of Theorem~\ref{thm:ell-p-median}. Once that is established, combining the $\ell_p^p$-median sketch with coresets, thereby establishing Theorems~\ref{thm:streaming} and~\ref{thm:distributed} is (relatively) straight-forward (for more details, see Section~\ref{sec:streaming}). Recall that for $p \in [1, 2]$, we will process $n$ points $x_1,\dots, x_n \in \R^d$, and aim to return an approximation to the $\ell_p^p$-median cost:
\begin{align}
\min_{c \in \R^d} \sum_{i=1}^n \| x_i - c\|_p^p. \label{eq:ell-p-median}
\end{align}
It will be useful to assume that the points are centered, i.e., $\sum_{i=1}^n x_i = 0 \in \R^d$ (we can enforce this because our sketches will be linear). 
The approach will come from the fact that the above optimization problem decomposes into a sum of $d$ independent optimizations, one for each coordinate, and (\ref{eq:ell-p-median}) seeks to evaluate the sum. Specifically, we may write
\begin{align*}
\min_{c \in \R^d} \sum_{i=1}^n \| x_i - c \|_p^p = \sum_{j=1}^d \min_{c_{j} \in \R} \sum_{i=1}^n |x_{ij} - c_j|^p.  
\end{align*}
Furthermore, for any fixed $j \in [d]$, estimating 
\begin{align}
\min_{c_j \in \R} \sum_{i=1}^n |x_{ij} - c_j|^p \label{eq:one-coord}
\end{align}
is much more amenable to $\ell_p$-sketching. Specifically, we let $x_{\cdot, j} \in \R^n$ be the vector containing the $j$-th coordinates of all $n$ points, and $\ind \in \R^n$ be the all-1's vector. Then, the quantity $\sum_{i=1}^n |x_{ij} - c_j|^p = \| x_{\cdot, j} - c_j \ind \|_p^p$, and since the $\ell_p$-sketches are linear, an algorithm may maintain $\sk(x_{\cdot, j}) \in \R^t$ (for $t = \poly(\log(nd))$) and after processing, could iterate through various values of $c_j \in \R$ to evaluate
\[ \left(\sum_{i=1}^n |x_{ij} - c_j|^p\right)^{1/p} = \|x_{\cdot, j} - c_j \ind \|_p \approx_{1\pm \eps} \text{ estimate of }(\sk(x_{\cdot, j}), \sk(c_j \ind)), \] 
and output the smallest value of $c_j \in \R$ found. In order to guarantee an $(1\pm \eps)$-approximation of (\ref{eq:one-coord}), only $\poly(1/\eps)$ values of $c_j$ need to be tried (since first evaluating $c_j = 0$ will specify the range where the optimal $c_j$ may lie). A simple union bound implies that for any fixed $j \in [d]$, we can prepare a small sketch $\sk(x_{\cdot, j})$ from which we can approximate (\ref{eq:one-coord}).

In summary, we want to estimate the sum of $d$ minimization problems. Even though each of the $d$ problems could be solved independently with a linear sketch, we do not want to process $d$ linear sketches (as this increases space). In addition, we do not know which of the $d$ minimizations will significantly affect the sum; hence, if we only (uniformly) sampled few $\bj_1, \dots, \bj_t \sim [d]$ and only processed $t \ll d$ sketches along the sampled dimensions, the variance of the estimator may be too large, making it completely useless. The technique we will use was recently developed in \cite{CJLW22}, building on \cite{AKO10, JW18}, under the name ``$\ell_p$-sampling with meta-data.'' In this paper, we further develop the ideas, and apply them to sketches for clustering in a simple and modular way. We refer the reader to Remark~\ref{rem:comparison} (following this technical overview), where we expand on the comparison to \cite{CJLW22}.

The goal is to approximate the sum of the $d$ minimization problems by \emph{importance sampling} (see Chapter~9 of \cite{O13}). While importance sampling is a well-known technique, it's use in (one-pass) linear sketching algorithms is counter-intuitive, and we are not aware of any linear sketches which use importance sampling in the literature, expect for this and the recent work of~\cite{CJLW22, CJKVY22}. Importance sampling will aim to estimate (\ref{eq:ell-p-median}) by sampling with respect to an alternate distribution $\calD$. In particular, (\ref{eq:ell-p-median}) may be re-written as
\begin{align}
d \cdot \Ex_{\bj \sim [d]}\left[ \min_{c_{\bj} \in \R} \sum_{i=1}^n |x_{i\bj} - c_{\bj}|^p \right] = d \Ex_{\bj \sim \calD}\left[ \bY_{\bj} \right] \qquad \text{where}\qquad \bY_{\bj} \eqdef \min_{c_{\bj} \in \R} \sum_{i=1}^n |x_{i\bj} - c_{\bj}|^p \cdot \dfrac{1}{\Pr_{\calD}[\bj]}, \label{eq:importance-sampling}
\end{align}
where $\calD$ is a distribution chosen so the variance of the random variable $\bY_{\bj}$ for $\bj \sim \calD$ is bounded. Once the variance of the random variable is bounded, only a few samples are needed to estimate its expectation in (\ref{eq:importance-sampling}). In general, the alternate distribution $\calD$ depends on the data in order to decrease the variance; for instance, coordinates $j \in [d]$ whose value of (\ref{eq:one-coord}) is higher should be sampled more often. Hence, importance sampling inherently interacts with the data in a two-stage process: 1) first, it samples $\bj \sim \calD$ (where the distribution is data-dependent), and 2) second, it evaluates $\bY_{\bj}$ by using (\ref{eq:one-coord}) and $\Prx_{\calD}[\bj]$ for the value $\bj \in [d]$ specified in the first step. 

In a two-pass algorithm, the two steps may be implemented sequentially. A \emph{sampling} sketch, like that of  \cite{JW18}, is used to sample $\bj \sim \calD$ in the first pass. In the second pass, the algorithm knows the value of the sampled $\bj$, so it maintains a sketch $\sk(x_{\cdot, \bj})$ of size $t$ and a sketch $\sk'(\Prx_{\calD}[\bj])$ of size $t'$ (to estimate $\Pr_{\calD}[\bj]$) from which it can evaluate the random variable $\bY_{\bj}$. The counter-intuitive aspect is that, in this case, we will perform both steps in one-pass: 
\begin{itemize}
\item We will use an $\ell_p$-sampling sketch of \cite{JW18} to sample from an importance sampling distribution $\calD$, and
\item Concurrently, we prepare $2d$ linear sketches: $d$ sketches $\sk(x_{\cdot, j})$ to evaluate (\ref{eq:one-coord}), one for each $j \in [d]$, and $d$ sketches $\sk'(\Pr_{\calD}[j])$ to evaluate $\Pr_{\calD}[j]$, one for each $j \in [d]$. The non-trivial part is to \emph{sketch the sketches}. by compressing the $2d$ linear sketches into a $O(\polylog(nd))$-bit Count-Min data structure \cite{CM05}.
\end{itemize}
The guarantee will be that the $\ell_p$-sampling sketch of \cite{JW18} generates a sample $\bj \sim \calD$, and the Count-Min data structure can recover an approximation 
\[ \widehat{\sk_1} \approx \sk(x_{\cdot, \bj}) \qquad \text{and}\qquad \widehat{\sk_2} \approx \sk'(\Pr_{\calD}[\bj]). \]
Furthermore, the sketch evaluation algorithm, which executes on the approximation $\widehat{\sk_1}$ and $\widehat{\sk_2}$, should be able to recover $(1\pm \eps)$-approximations to (\ref{eq:one-coord}) and $\Pr_{\calD}[\bj]$, so that the ratio of the two is a $(1\pm 2\eps)$-approximation to $\bY_{\bj}$. 

While the above plan provides a general recipe for importance sampling, the idea of ``sketching the sketches'' may not be applied in a black-box manner. First, the alternate distribution $\calD$ should admit a sampling sketch. Second, the sketch evaluation algorithm for $\sk(x_{\cdot, j})$ and $\sk'(\Pr_{\calD}[j])$ should be robust to the errors introduced by the Count-Min compression. Bounding the errors introduced by the Count-Min data structure, and ensuring that the approximate sketches $\widehat{\sk_1}$ and $\widehat{\sk_2}$ constitutes the bulk of the technical work. Specifically for us, the plan is executed as follows: when $\sum_{i=1}^n x_i = 0 \in \R^d$, every $j$ satisfies (see Appendix~\ref{app:importance-sampling-probs})
\begin{align}
\dfrac{\min_{c_j \in \R} \sum_{i=1}^n |x_{ij} - c_j|^p}{\| x_{\cdot, j}\|_p^p} \in [2^{-p}, 1]. \label{eq:to-estimate}
\end{align}
Hence, we will let $\calD$ be the distribution supported on $[d]$ given by setting, for each $j \in [d]$,
\begin{align*}
\Prx_{\bj \sim \calD}\left[\bj = j \right] &= \frac{\| x_{\cdot,j}\|_p^p}{Z} \qquad \text{where} \qquad Z = \sum_{j=1}^d \| x_{\cdot, j}\|_p^p = \sum_{i=1}^n \sum_{j=1}^d |x_{ij}|^p.
\end{align*}
Note that (\ref{eq:to-estimate}) implies the variance of $\bY_{\bj}$ for $\bj \sim \calD$ is appropriately bounded. Furthermore, since $\calD$ is an $\ell_p$-sampling distribution, the $\ell_p$-sampling sketches of \cite{JW18} are useful for sampling $\bj \sim \calD$. Finally, the approach of \cite{JW18} is particularly suited for bounding the errors incurred by Count-Min on $\widehat{\sk_1}$ and $\widehat{\sk_2}$, which we overview below.

At a high level, the $\ell_p$-sampling sketch of \cite{JW18} generates a sample $\bj$ from $[d]$ by identifying a \emph{heavy hitter} in a random scaling of the vector specifying the sampling probabilities. In particular, the algorithm generates $\bu_1,\dots, \bu_d \sim \mathrm{Exp}(1)$ and identifies an entry $j \in [d]$ in the vector
\[ \left( \frac{\|x_{\cdot, 1}\|_p}{\bu_{1}^{1/p}}, \frac{\|x_{\cdot, 2}\|_p}{\bu_{2}^{1/p}}, \dots, \frac{\|x_{\cdot, d-1}\|_p}{\bu_{d-1}^{1/p}},  \frac{\|x_{\cdot, d}\|_p}{\bu_{d}^{1/p}}\right) \in \R^d,\]
whose value satisfies 
\begin{align}
\dfrac{\|x_{\cdot, j}\|_p}{\bu_{j}^{1/p}} \gsim \left(\sum_{j'=1}^d \frac{\|x_{\cdot, j'}\|_p^p}{\bu_{j'}} \right)^{1/p}, \label{eq:entry-lb}
\end{align}
and is the largest among those heavy hitters. For the coordinate $\bj \in [d]$ recovered by the $\ell_p$-sampling sketch \cite{JW18}, the inequality (\ref{eq:entry-lb}) gives a lower bound on how large $1/\bu_{\bj}^{1/p}$ will be. In particular, by applying the same transformation to the vector of \emph{sketches},
\begin{align} 
\left(\dfrac{\sk(x_{\cdot, 1})}{\bu_1^{1/p}},\dots, \dfrac{\sk(x_{\cdot, d})}{\bu_d^{1/p}} \right) \in (\R^{t})^d \qquad\text{and}\qquad \left(\dfrac{\sk'(\Pr_{\calD}[1])}{\bu_1^{1/p}}, \dots, \dfrac{\sk'(\Pr_{\calD}[d])}{\bu_{d}^{1/p}} \right) \in (\R^{t'})^d,  \label{eq:sketched-vectors}
\end{align}
the $t$ and $t'$ coordinates corresponding to the sketches $\sk(x_{\cdot, \bj}) \in \R^t$ and $\sk'(\Pr_{\calD}[\bj]) \in \R^{t'}$ will be heavy hitters of those vectors as well. \ignore{Namely, 
\newcommand{\berr}{\boldsymbol{\mathrm{err}}}
\begin{align*}
\forall h \in [t] \qquad&: \qquad \left| \dfrac{\sk(x_{\cdot, \bj})_h}{\bu_{\bj}^{1/p}} \right| \gsim \left( \sum_{j'=1}^d \dfrac{\sk(x_{\cdot, j'})_h^p}{\bu_{j'}}\right)^{1/p} \eqdef \berr_{1, h},\text{ and } \\
\forall h \in [t'] \qquad&:\qquad \left|\dfrac{\sk(\Pr_{\calD}[\bj])_h}{\bu_{\bj}^{1/p}} \right| \gsim \left( \sum_{j'=1}^d \dfrac{\sk(\Pr_{\calD}[j'])_h^p}{\bu_{j'}}\right)^{1/p} \eqdef \berr_{2,h}. 
\end{align*}}
Namely, with only $\poly(\log(nd), 1/\eps)$-bits, the Count-Min data structure will recover the entries of $\sk(x_{\cdot, \bj})$ and $\sk'(\Pr_{\calD}[\bj])$ up to a small additive error, proportional to the $\ell_1$-norm of (\ref{eq:sketched-vectors}). We know the distribution of sketched vectors (\ref{eq:sketched-vectors}) (since these are simply $\ell_p$-sketches~\cite{I06}), so we will be able to bound the additive error and show that the sketch evaluation algorithms of $\widehat{\sk_1}$ and $\widehat{\sk_2}$ return the desired $(1\pm \eps)$-approximations.
\ignore{

We will use the Count-Min~\cite{CM05} sketch to compressing the $d$ linear sketches (the vectors $\sk(x_{\cdot, j})$ for all $j$ and $\sk() into $\poly(\log(nd), 1/\eps)$ bits. This means that the we can recover the entries of the $d$ linear sketches up to an additive error, which depends on the sum of all entries 

Technically, we will not be able to recover the sketch $\sk(x_{\cdot, \bj})$ exactly. The reason is that 

There is one more challenge: a sample $\bk \sim \calD$ that we produce, and the sketches to evaluate the numerator and denominator \emph{for that sample} are both data-dependent. Technically, we will not be able to obtain an error-less  sketch for evaluating the quantity in the expectation (because we needed to have prepared for potentially other $k' \in [d]$ to have been sampled), and these will incur an additive error. As we will see, the additive error produced will be controlled via a transformation of the input akin to the ``precision sampling'' framework of \cite{AKO10, JW18}. This technique of using precision sampling to sample entire sketches was recently used in \cite{CJLW22} under the name ``$\ell_p$-sampling with meta-data.'' Here, the meta-data will correspond to the sketch for the numerator and the denominator.}

\begin{remark}[Comparison to \cite{CJLW22}]\label{rem:comparison}
The technique, ``$\ell_p$-sampling with meta-data'', arises in \cite{CJLW22} in the following context. They seek a linear sketch $\sk \colon \R^d \to \R^t$ which can process a vector $y \in \R^d$ and evaluate a weighted $\ell_1$-norm, $\sum_{i=1}^d w_i(y) \cdot |y_i|$, where the weights $w_1(y), \dots, w_d(y) \in \R_{\geq 0}$ are themselves dependent on the vector $y$. This arises as an algorithmic step in streaming algorithms for geometric minimum spanning tree and the earth-mover's distance. Mapping the above formulation to our setting, we want to evaluate a weighted $\ell_1$-norm as well, where the $i$-th weight corresponds to $\Prx_{\bj \sim \calD}[\bj = i]$, and the $i$-th value seek to sum is $\bY_i$ (as in (\ref{eq:importance-sampling})). The perspective of this technique as importance sampling (as presented in this work) is new. Indeed, the appropriate setting of weights is only apparent once one multiplies and divides the contribution of the $j$-th coordinate by $\| x_{\cdot, j}\|_p^p$ to define $\calD$.
\end{remark}

\section{Sketching Median Costs}

\subsection{Statement of Main Lemma} 

\begin{theorem}\label{thm:median-sketch}
Fix $n, d \in \N$, as well as $p \in [1,2]$ and $\eps, \delta \in (0,1)$. There exists a linear sketch using $\poly(\log d, 1/\eps, \log(1/\delta))$ space which processes a set of $n$ points $x_1, \dots, x_n \in \R^d$, and outputs a parameter $\boldeta \in \R$ which satisfies
\[ \min_{y \in \R^d} \sum_{i=1}^n \| y - x_i\|_p^p \leq \boldeta \leq (1+\eps) \min_{y \in \R^d} \sum_{i=1}^n \| y - x_i \|_p^p\]
with probability at least $1 - \delta$. 
\end{theorem}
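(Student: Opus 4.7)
The plan follows the recipe laid out in the technical overview. By linearity of the sketch I may assume without loss of generality that $\sum_{i=1}^n x_i = 0 \in \R^d$ (subtract off the sketched empirical mean at readout). Under centering the objective decomposes as
\[ \min_{c \in \R^d} \sum_{i=1}^n \| x_i - c \|_p^p \;=\; \sum_{j=1}^d M_j, \qquad M_j \eqdef \min_{c_j \in \R} \sum_{i=1}^n |x_{ij} - c_j|^p, \]
and by (\ref{eq:to-estimate}) each $M_j \in [2^{-p} \|x_{\cdot, j}\|_p^p,\, \|x_{\cdot, j}\|_p^p]$. Set $Z = \sum_{j=1}^d \|x_{\cdot, j}\|_p^p$ and define the $\ell_p$-sampling distribution $\calD$ on $[d]$ by $\Prx_{\calD}[j] = \|x_{\cdot, j}\|_p^p / Z$. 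With $\bY_{\bj} \eqdef M_{\bj} / \Prx_{\calD}[\bj]$ as in (\ref{eq:importance-sampling}), importance sampling gives $\Ex_{\bj \sim \calD}[\bY_{\bj}] = \sum_j M_j$, while the per-coordinate inequality forces $\bY_{\bj} \in [2^{-p} Z,\, Z]$ surely, so $\bY_{\bj}$ has constant relative variance. Averaging $O(1/\eps^2)$ independent copies yields a $(1 \pm \eps)$-approximation of $\sum_j M_j$ by Chebyshev, and a median-of-means over $O(\log(1/\delta))$ parallel blocks boosts the success probability to $1 - \delta$.

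The challenge is that naively one would maintain $d$ per-coordinate sketches, so the heart of the construction is a single linear sketch simulating one draw of $\bY_{\bj}$ using only $\poly(\log(nd), 1/\eps, \log(1/\delta))$ bits. It runs three ingredients in parallel, all linear in $x$. First, a Jayaram--Woodruff $\ell_p$-sampler \cite{JW18} produces an index $\bj \sim \calD$ by drawing $\bu_1, \ldots, \bu_d \sim \mathrm{Exp}(1)$ and reporting the heavy hitter of the rescaled vector $(\|x_{\cdot, j}\|_p / \bu_j^{1/p})_{j \in [d]}$, whose returned coordinate satisfies (\ref{eq:entry-lb}). Second, a Count-Min data structure \cite{CM05} hashes the $d$ linear $\ell_p$-sketches $\sk(x_{\cdot, j}) \in \R^t$---for concreteness, Indyk's $p$-stable sketch \cite{I06} of dimension $t = \poly(\log(nd), 1/\eps)$---from which, for any fixed $j$ and any candidate $c_j$, one may estimate $\sum_i |x_{ij} - c_j|^p$ up to a $(1 \pm \eps)$-factor; a grid search over $\poly(1/\eps)$ values of $c_j$ (after first using the $c_j = 0$ evaluation to localize the optimum) yields a $(1 \pm \eps)$-estimate of $M_j$. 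Third, a second Count-Min structure hashes linear sketches $\sk'(\|x_{\cdot, j}\|_p^p)$ from which one reads off a $(1 \pm \eps)$-estimate of $\Prx_{\calD}[j]$. At readout, all three structures are queried at the single index $\bj$ to build $\widehat{\bY}_{\bj}$.

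The main obstacle, and where most of the technical work will lie, is controlling the additive error that Count-Min introduces when recovering $\widehat{\sk_1} \approx \sk(x_{\cdot, \bj})$ and $\widehat{\sk_2} \approx \sk'(\|x_{\cdot, \bj}\|_p^p)$. Count-Min guarantees that each coordinate of a hashed vector is recovered up to additive error proportional to the $\ell_1$-mass of its tail---and the relevant hashed vectors are exactly those in (\ref{eq:sketched-vectors}), obtained from the per-coordinate sketches by the very same coordinatewise rescaling $1/\bu_j^{1/p}$ used by the sampler. The crucial point is that $p$-stable sketches scale as $\ell_p$-norms, so the heavy-hitter inequality (\ref{eq:entry-lb}) that $\bj$ satisfies in the norms vector is inherited, coordinate-by-coordinate of the sketch, by the vectors in (\ref{eq:sketched-vectors}). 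Using standard tail bounds on sums of $p$-stable variables weighted by $1/\bu_j^{1/p}$---the precision-sampling analysis underlying \cite{JW18}---I would argue that, with high probability over the exponentials and the sketch randomness, the Count-Min errors on $\widehat{\sk_1}$ and $\widehat{\sk_2}$ are at most an $\eps$-fraction of the quantities governing their respective readout subroutines, so both the grid-search estimate of $M_{\bj}$ and the estimate of $\Prx_{\calD}[\bj]$ remain $(1 \pm \eps)$-accurate. Propagating these two approximations through the ratio yields a $(1 \pm O(\eps))$-approximation of $\bY_{\bj}$, which combined with the outer median-of-means layer proves Theorem~\ref{thm:median-sketch} after rescaling $\eps$ by a constant.
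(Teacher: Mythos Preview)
Your proposal is correct and follows essentially the same approach as the paper: center the points, rewrite the cost as $Z$ times a bounded expectation over the $\ell_p$-sampling distribution $\calD$, realize one sample of $\bY_{\bj}$ via the Jayaram--Woodruff exponential rescaling combined with a Count-Min compression of the per-coordinate $p$-stable sketches (exactly the ``sketch-the-sketches'' scheme of the overview), grid-search for the one-dimensional minimizer, and finish with an average over $O(1/\eps^2)$ copies plus median-of-means. The paper organizes the same ingredients slightly differently---it does not keep a separate sampler but instead reads $\hat{\bj}$ off as the argmax of the recovered $\balpha_j$'s, outputs the clamped ratio $(\bbeta_\ell/\balpha_\ell)^p$ and multiplies by a single global $\ell_p$-sketch estimate of $Z$---but this is packaging, not substance.
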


We work with the following representation of a linear sketch. The processed set of $n$ points in $\R^d$ are stacked to form a vector $x \in \R^{nd}$. A linear sketch using space $s$ is a distribution $\calM$ supported on $s \times (nd)$ matrices. The theorem states that for any fixed $x_1, \dots, x_n \in \R^d$, with probability $1-\delta$ over the draw of $\bS \sim \calM$, an algorithm with access to the vector $\bS x \in \R^s$ and $\bS$ can output $\boldeta$ satisfying the above guarantees.

Linear sketches of the above form imply efficient streaming algorithms, albeit with some subtleties. It is useful to first assume that the streaming algorithm can store its randomness for free (we will address this in Subsection~\ref{sec:randomness}) so that it knows the matrix $\bS$. In particular, since $\bS \in \R^{s \times nd}$ acts on the vector $x \in \R^{nd}$ which vertically stacks $x_1,\dots, x_n \in \R^d$, the columns of $\bS$ may be broken up into $n$ groups of size $d$, so
\[ \bS = \left[\begin{array}{cccc} \bS_1 & \bS_2 & \dots & \bS_n \end{array} \right], \qquad \text{and} \qquad \bS x = \sum_{i=1}^n \bS_i x_i. \] In the \emph{insertion-only} model, an algorithm would process the points one-at-a-time, and at time-step $j$, maintain $\sum_{i=1}^j \bS_i x_i \in \R^{s}$. In the \emph{turnstile} model of streaming, there is a subtlety in the implementation; namely, as the algorithm receives insertions and deletions of points in $\R^d$, it must know which index $i \in [n]$ it is considering. The reason is that the algorithm should know which of the sub-matrix $\bS_i$ to update the point with. 

For our application of the $\ell_p^p$-median sketch to $(k,p)$-clustering in $\ell_p$, we consider a weighted $\ell_p^p$-median. Namely, for points $x_1,\dots, x_n \in \R^d$ and weights $\lambda_1,\dots, \lambda_n \in [0,1]$ with $\sum_{i=1}^n \lambda_i = 1$, the $\ell_p^p$-median cost with respect to weights $\lambda_1,\dots, \lambda_n$ is
\[ \min_{y \in \R^d} \sum_{i=1}^n \lambda_i \| y - x_i \|_p^p. \]
It is useful to first consider of $\lambda_1 = \dots = \lambda_n = 1/n$. For general weights, the sketch will receive as input $\bS = [\bS_1,\dots, \bS_n] \in \R^{s \times (nd)}$, the vector $\sum_{i=1}^n \lambda_i^{1/p} \bS_i x_i \in \R^s$, and the weights $\lambda_1,\dots, \lambda_n$.

\paragraph{Centering Points} There is a straight-forward way to process the points so as to assume they are centered. Specifically, the average point may be subtracted from every point by applying a linear map, and since our sketch is linear, subtracting the average point may be incorporated into the sketch. For weights $\lambda_1,\dots, \lambda_n \in [0,1]$ satisfying $\sum_{i=1}^n \lambda_i = 1$, we consider the linear map
\[ (x_1, \dots, x_n) \mathop{\mapsto} \left(x_1 - \sum_{i=1}^n \lambda_i x_i, \dots, x_n - \sum_{i=1}^n \lambda_i x_i\right) \in \R^{nd}.  \]
Hence, we assume, without loss of generality, that the points $x_1, \dots, x_n \in \R^d$ satisfy
\begin{align}
\sum_{i=1}^n \lambda_i x_i = 0 \in \R^d. \label{eq:centered}
\end{align}
\ignore{Equivalently, the algorithm receives access to the sketched vector
\[ \sum_{i=1}^n \lambda_i^{1/p}\bS_i \left(x_i - \sum_{j=1}^n \lambda_j x_j \right) = \sum_{i=1}^n \lambda_i^{1/p} \bS_i x_i -  \sum_{i=1}^n \sum_{j=1}^n \lambda_i^{1/p} \lambda_{j} \bS_i x_{j} = \sum_{i=1}^n \lambda_i^{1/p} \bS_i - \sum_{j=1}^n \lambda_j^{1/p}.  \]}

The centering is useful for deriving the following set of inequalities, which will be useful for our sketching procedures. Suppose we denote $y \in \R^d$ as the point which minimizes $\sum_{i=1}^n \lambda_i \| y - x_i\|_p^p$. Then, for every $j \in [d]$, 
\begin{align*} 
\sum_{i=1}^n \lambda_i | y_j - x_{ij} |^p \leq \sum_{i=1}^n \lambda_i | x_{ij} |^p \leq 2^p \sum_{i=1}^n \lambda_i |y_j - x_{ij}|^p. 
\end{align*}
Importantly for us, every $j \in [d]$ satisfies
\begin{align}
2^{-p} \leq \dfrac{\min_{y_j \in \R} \sum_{i=1}^n \lambda_i |y_j - x_{ij}|^p}{\sum_{i=1}^n \lambda_i |x_{ij}|^p} \leq 1. \label{eq:ineq}
\end{align}

We let $\calD$ be the distribution supported on $[d]$ given by setting, for each $j \in [d]$,
\begin{align*}
\Prx_{\bj \sim \calD}\left[\bj = j \right] &= \frac{1}{Z} \sum_{i=1}^n \lambda_i |x_{ij}|^p \qquad \text{where} \qquad Z = \sum_{j' = 1}^d \sum_{i=1}^n \lambda_i |x_{ij'}|^p = \sum_{i=1}^n \lambda_i \|x_i\|_p^p. 
\end{align*}
Then, the quantity we want to estimate may be equivalently re-written as:
\begin{align}
\sum_{j=1}^d \min_{y_j \in \R} \sum_{i=1}^n \lambda_i |y_j - x_{ij}|^p &= Z \cdot \Ex_{\bj \sim \calD}\left[ \frac{\min_{y_{\bj} \in \R} \sum_{i=1}^n \lambda_i |y_{\bj} - x_{i\bj}|^p}{\sum_{i=1}^n \lambda_i |x_{i\bj}|^p} \right],\label{eq:quantity-to-estimate}
\end{align}
where the value within the expectation is bounded between $2^{-p}$ and $1$. Furthermore, the quantity $Z$ will be sketched with an $\ell_p$-sketch, and a sample $\bj \sim \calD$ will be drawn with an $\ell_p$-sampling sketch. Hence, the plan is to produce $t = O(1 / \eps^2)$ samples of $\bj_1, \dots, \bj_t \sim \calD$, and produce a sketch to evaluate the numerator inside the expectation, and the denominator inside the expectation. Taking an empirical average of the samples to estimate the expectation, and multiplying it by the estimate of $Z$ will give the desired estimator. 

\begin{lemma}[Main Lemma] \label{lem:main-sketch-lemma}
For any $n, d \in \N$, $p \in [1,2]$ and $\eps, \delta \in (0, 1)$, let $s = \poly(\log d, 1/\eps, 1/\delta)$.\footnote{See (\ref{eq:space-complexity}) for the specific polynomial bounds.} There exists a distribution $\calS$ over $s \times (nd)$ matrices, and an algorithm such that for any $n$ vectors $x_1,\dots, x_n \in \R^d$ and any $\lambda_1,\dots, \lambda_n \in [0,1]$ with $\sum_{i=1}^n \lambda_i = 1$ and $\sum_{i=1}^n \lambda_i x_i = 0$, the following occurs:
\begin{itemize}
\item We sample $\bS = [\bS_1, \dots, \bS_n] \sim \calS$, and we give the algorithm as input $\bS$, $\sum_{i=1}^n \bS_i (\lambda_i^{1/p} x_i)$, and $\lambda_1,\dots, \lambda_n$. 
\item The algorithm outputs a tuple of three numbers $(\bj, \balpha, \bbeta) \in [d] \times \R_{\geq 0} \times \R_{\geq0}$. With probability at least $1 - \delta$ over the draw of $\bS \sim \calS$, we have the following two inequalities:
\begin{align*}
(1-\eps) \left(\sum_{i=1}^n \lambda_i |x_{i \bj}|^p\right)^{1/p} &\leq \balpha \leq (1+\eps) \left( \sum_{i=1}^n \lambda_i |x_{i \bj}|^p \right)^{1/p}, \\
(1-\eps) \min_{z \in \R} \left( \sum_{i=1}^n \lambda_i |x_{i\bj} - z|^p\right)^{1/p} &\leq \bbeta \leq (1+\eps) \min_{z \in \R} \left( \sum_{i=1}^n \lambda_i |x_{i\bj} - z|^p \right)^{1/p}.
\end{align*}
\item Furthermore, the distribution of the random variable $\bj$ is $\eps 2^{-p}$-close in total variation distance to $\calD$. 
\end{itemize}
\end{lemma}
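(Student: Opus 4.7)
The plan is to implement the ``sketching the sketches'' paradigm outlined in the technical overview. Conceptually, for each $j \in [d]$ we would maintain two linear $\ell_p$-sketches of the weighted column $v_j := (\lambda_1^{1/p} x_{1j}, \ldots, \lambda_n^{1/p} x_{nj}) \in \R^n$: a $p$-stable sketch $\sk_1(v_j) \in \R^{t_1}$ (Indyk~\cite{I06}) from which one reads off an $(1\pm\eps)$-approximation of $\|v_j\|_p = \bigl(\sum_i \lambda_i |x_{ij}|^p\bigr)^{1/p}$, and a sketch $\sk_2(v_j) \in \R^{t_2}$ from which, by linearity, one estimates $\min_{z \in \R} \|v_j - z \mathbf{1}\|_p$. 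Storing $2d$ such sketches is prohibitive, so we compress them using Count-Min.

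Using public randomness, draw $\bu_1,\dots,\bu_d \sim \mathrm{Exp}(1)$. For each output coordinate $h$ of $\sk_1$ (resp.\ $\sk_2$) we maintain one Count-Min data structure over the $d$-vector whose $j$-th entry is $(\sk_1(v_j))_h / \bu_j^{1/p}$ (resp.\ $(\sk_2(v_j))_h / \bu_j^{1/p}$). These Count-Min sketches are linear in $(x_1,\dots,x_n)$, so the whole matrix $\bS$ can be assembled from their coefficients. In parallel, we run the $\ell_p$-sampling sketch of \cite{JW18} on the $d$-dimensional vector $(\|v_1\|_p, \ldots, \|v_d\|_p)$, reusing the same exponentials, which outputs an index $\bj$ satisfying the heavy-hitter guarantee~(\ref{eq:entry-lb}). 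The third bullet of the lemma is then immediate from the correctness of~\cite{JW18}: since $\|v_j\|_p^p = \sum_i \lambda_i |x_{ij}|^p$, the distribution $\calD$ is exactly the $\ell_p$-sampling distribution over the $\|v_j\|_p$'s, so $\bj$ is $\eps 2^{-p}$-close to $\calD$ in total variation.

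To produce $\balpha$ and $\bbeta$, we query each Count-Min at index $\bj$ to obtain entrywise approximations $\widehat{\sk_1} \approx \sk_1(v_{\bj})/\bu_{\bj}^{1/p}$ and $\widehat{\sk_2} \approx \sk_2(v_{\bj})/\bu_{\bj}^{1/p}$, rescale by $\bu_{\bj}^{1/p}$, and invoke the standard $\ell_p$-sketch recovery algorithms. For $\balpha$, the Indyk estimator applied to the rescaled $\widehat{\sk_1}$ yields $(1\pm \eps)\|v_{\bj}\|_p$ whenever the Count-Min error is small. For $\bbeta$, the centering $\sum_i \lambda_i x_i = 0$ together with (\ref{eq:ineq}) localizes the optimal one-dimensional center $z^\star$ to an interval of length $O(\balpha)$ around $0$; hence a $\poly(1/\eps)$-size grid of candidates suffices. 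By linearity of $\sk_2$, the sketch of $v_{\bj} - z \cdot (\lambda_1^{1/p},\dots,\lambda_n^{1/p})$ is affine in $z$ and can be evaluated from $\widehat{\sk_2}$ together with a fixed sketch of the weight vector; we return the smallest recovered value across the grid.

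The main technical obstacle is bounding the additive error introduced by Count-Min so that both $\balpha$ and $\bbeta$ land in $(1\pm \eps)$. A Count-Min of width $W$ produces an entry estimate with additive error at most $O(1/W)$ times the $\ell_1$-norm of the target vector, which here is $\sum_{j'} |(\sk(v_{j'}))_h|/\bu_{j'}^{1/p}$. One must show that, conditional on $\bj$, this sum is dominated by the $\bj$-th term up to a $\poly(1/\eps)$ factor. Three ingredients combine: (i)~the heavy-hitter inequality~(\ref{eq:entry-lb}) for the selected $\bj$; (ii)~the $p$-stable (or Gaussian, for $p=2$) distributional tails of the sketch coordinates $(\sk(v_{j'}))_h$, which concentrate around $\|v_{j'}\|_p$ up to constant factors; and (iii)~tail bounds for the $\mathrm{Exp}(1)$ variables. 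The estimator $\bbeta$ is the more delicate one: because $\min_z \sum_i \lambda_i |x_{i\bj} - z|^p$ may be small relative to $\sum_i \lambda_i |x_{i\bj}|^p$, the absolute error must be bounded against the minimum itself, not against $\|v_{\bj}\|_p^p$; the lower bound $2^{-p}$ in~(\ref{eq:ineq}) is precisely what converts the absolute Count-Min error into relative error. Setting $W$ and the number of independent sketch repetitions to $\poly(1/\eps, \log(1/\delta))$, and conditioning on a typical event for the $\bu_j$'s, yields the stated space bound.
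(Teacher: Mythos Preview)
Your plan is essentially the paper's: exponential rescaling by $\bu_j^{-1/p}$, Count-Min compression of per-coordinate $p$-stable sketches, recovery of the sketches at the sampled index, and a grid search over candidate centers $z$ using the localization coming from~(\ref{eq:ineq}). Two differences are worth noting.

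First, the paper does not run a separate \cite{JW18} sampler. It recovers \emph{all} $\balpha_j$'s from the same Count-Min structure and sets $\hat\bj=\argmax_j \balpha_j$; the exponential scaling alone makes this argmax distributed as $\calD$ (the elementary fact $\Pr[\argmin_j \bu_j/a_j = j_0]=a_{j_0}/\sum_j a_j$). This matters because the correctness of $\hat\bj$ must be tied to the \emph{same} randomness that makes $\bj$ a heavy hitter in the Count-Min buckets; a black-box sampler with its own internal randomness would not automatically give you~(\ref{eq:entry-lb}) for the Count-Min you built. Your phrase ``reusing the same exponentials'' shows you see this, but be explicit that the sampler and the compression share all randomness.

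Second, and more substantively, your Count-Min error bound via the $\ell_1$-norm $\sum_{j'} |(\sk(v_{j'}))_h|/\bu_{j'}^{1/p}$ and ``concentration of $(\sk(v_{j'}))_h$ around $\|v_{j'}\|_p$'' does not go through as stated: $p$-stable coordinates have heavy tails (no finite first moment at $p=1$), so neither the expected $\ell_1$ mass nor a per-entry concentration argument is available. The paper sidesteps this by observing that the bucket collision error is itself a $p$-stable random variable with scale $\bigl(\sum_{j'\in\text{bucket}} \|v_{j'}\|_p^p/\bu_{j'}\bigr)^{1/p}$, and then bounds the \emph{scale} (not the value) by Markov over the hash. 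This is exactly what lets you compare the error to $(\sum_{j'} \|v_{j'}\|_p^p/\bu_{j'})^{1/p}$ and invoke~(\ref{eq:entry-lb}). The remaining analysis then reduces to a deterministic event on the $\bu_j$'s (the max $w_{j^*}$ dominates both the sum and the second-largest by a $\poly(\eps\delta)$ margin), which holds with probability $1-O(\eps\delta)$ and converts all additive errors to multiplicative ones via~(\ref{eq:ineq}). With this correction your plan matches the paper's proof.
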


\newcommand{\minmax}{\mathrm{minmax}}

\begin{proof}[Proof of Theorem~\ref{thm:median-sketch} assuming Lemma~\ref{lem:main-sketch-lemma}]
Given Lemma~\ref{lem:main-sketch-lemma}, the proof of Theorem~\ref{thm:median-sketch} is straight-forward. We fix $\lambda_1 = \dots = \lambda_n = 1/n$, and we first handle the centering. We will utilize Lemma~\ref{lem:main-sketch-lemma} which requires vectors $x_1,\dots, x_n$ to satisfy $\sum_{i=1}^n \lambda_i x_i = 0$; hence, we sketch the vectors $x_1', \dots, x_n'$ given by $x_i' = x_i - \sum_{h=1}^n \lambda_h x_h$, which are now centered. By linearity, this is equivalent to maintaining the vector
\[ \sum_{i=1}^n \lambda_i^{1/p} \bS_i(x_i - \sum_{h=1}^n \lambda_h x_h) = \sum_{i=1}^n \left( \lambda_i^{1/p} \bS_i - \lambda_i \sum_{h=1}^n \lambda_h^{1/p} \bS_h \right) x_i \in \R^s. \]
We take $t = \omega(1/\eps^2)$ independent sketches from Lemma~\ref{lem:main-sketch-lemma} with accuracy parameter $\eps/2$ and error probability $\delta = o(1/t)$. This, in turn, gives us $t$ independent samples $(\bj_1, \balpha_1, \bbeta_1), \dots, (\bj_{t}, \balpha_1, \bbeta_t)$. By taking a union bound over the $t$ executions of Lemma~\ref{lem:main-sketch-lemma}, with high probability, every $\balpha_1,\dots, \balpha_t$ and $\bbeta_1,\dots, \bbeta_t$ satisfy 
\[ \balpha_{\ell}^p \approx_{(1+\eps p/2)} \sum_{i=1}^n \lambda_i |x_{i\bk}|^p \qquad\text{and}\qquad \bbeta_{\ell}^p \approx_{(1+\eps p/2)} \min_{z \in \R} \sum_{i=1}^n \lambda_i |x_{i\bk} - z|^p,   \]
and $\bj_{1}, \dots, \bj_{t}$ are independent draws from a distribution $\calD'$ which is $\eps 2^{-p}$-close to $\calD$. For estimating $Z$, we use an $\ell_p$-sketch to accuracy $\eps/2$ and failure probability $\delta = o(1)$. For example, the sketch for $Z$ may proceed by applying an $\ell_p$-sketch \cite{I06} to the stacked vector $x' \in \R^{nd}$ where
\[x_{ij}' = \lambda_i^{1/p} \cdot x_{ij},\]
so that the $\ell_p$ norm of $x'$ is exactly $Z^{1/p}$. Let $\hat{\bZ}$ be the estimate for the $Z$. For our estimate $\boldeta$ that we will output, we set
\[ \boldeta = \hat{\bZ} \cdot \frac{1}{t} \sum_{\ell = 1}^t \minmax\left\{2^{-p}, \left(\frac{\bbeta_{\ell}}{\balpha_{\ell}}\right)^p, 1 \right\},\]
where $\minmax(l, x, u)$ is $l$ if $x \leq l$, $u$ if $u \geq x$, and $x$ otherwise. To see why our estimator approximates (\ref{eq:quantity-to-estimate}), we have $\hat{\bZ}$ is a $(1\pm \eps/2)$-approximation of $Z$. The latter quantity is the empirical average of $t$ i.i.d random variables, each of which is bounded by $2^{-p}$ and $1$. In particular, we have that with probability at least $1 -o(1)$, Chebyshev's inequality, and the conditions of $\bbeta_{\ell}$ and $\balpha_{\ell}$, 
\[ \Ex_{\bj \sim \calD'}\left[\dfrac{\min_{z \in \R} \sum_{i=1}^n \lambda_i |x_{i\bj} - z|^p}{\sum_{i=1}^n \lambda_i |x_{i\bj}|^p}  \right] \approx_{(1+2\eps p)} \frac{1}{t} \sum_{\ell=1}^t \minmax\left\{ 2^{-p}, \left( \frac{\bbeta_{\ell}}{\balpha_{\ell}}\right)^p, 1 \right\}. \]
It remains to show that
\[ \Ex_{\bj \sim \calD'}\left[\dfrac{\min_{z \in \R} \sum_{i=1}^n \lambda_i |x_{i\bj} - z|^p}{\sum_{i=1}^n \lambda_i |x_{i\bj}|^p}  \right]  \approx_{(1\pm \eps)} \Ex_{\bj \sim \calD}\left[\dfrac{\min_{z \in \R} \sum_{i=1}^n \lambda_i|x_{i\bj} - z|^p}{\sum_{i=1}^n \lambda_i |x_{i\bj}|^p}  \right]. \] 
This follows from two facts: (1) $\calD'$ and $\calD$ are $\eps 2^{-p}$ close, since the random variable is at most $1$, the expectations are off by at most an additive $\eps 2^{-p}$-factor, and (2) both quantities above are the average of random variables which are at least $2^{-p}$, so an additive $\eps 2^{-p}$ error is less than a multiplicative $(1\pm \eps)$-error.

The above gives an estimate which is a $1\pm \eps$-approximation with probability $1 - o(1)$, in order to boost the probability of success to $1 - \delta$, we simply repeat $O(\log(1/\delta))$ times and output the median estimate.
\end{proof}

The remainder of the section is organized as follows. We give in the (next) Subsection~\ref{subsec:single-coord}, the necessary sketches for obtaining $\balpha$ and $\bbeta$ for a fixed coordinate $j$. Then, in the following Subsection~\ref{subsec:grouping}, we show how we combine various sketches from Subsection~\ref{subsec:single-coord} for different $j \in [d]$ to obtain $\balpha$ and $\bbeta$ up to some additive error. Finally, the proof of Lemma~\ref{lem:main-sketch-lemma} appears in Subsection~\ref{subsec:proof-of-lemma}, where we apply a randomized transformation to the input so that the additive error from Subsection~\ref{subsec:grouping} is a multiplicative error for the specific sampled $\bj$. 

\subsection{Sketch for Optimizing a Single Coordinate}\label{subsec:single-coord}

\newcommand{\err}{\mathrm{err}}

In this subsection, we give linear sketches which are useful for optimizing over a single coordinate. Specifically, given the $n$ vectors $x_1,\dots, x_n \in \R^d$ and $j \in [d]$, we consider the $k$-th coordinate of the $n$ vectors $x_{1j}, x_{2j},\dots, x_{n j} \in \R$. Hence, the linear sketches in this section will act on vectors in $\R^n$, corresponding to the $j$-th coordinates of the points, and will give approximations to 
\[ \sum_{i=1}^n \lambda_j |x_{ij}|^p \quad \text{(Corollary~\ref{cor:cross-ps})}\quad\text{and} \qquad \min_{y_j \in \R} \sum_{i=1}^n \lambda_j |y_j - x_{ij}|^p. \quad \text{(Lemma~\ref{lem:single-coord-opt})}\]
The lemma statements also consider an additive error term, $\err \in \R_{\geq0}$, which will be necessary when combining these sketches in Subsection~\ref{subsec:grouping}; however, it may be helpful to consider $\err = 0$ on first reading. 

\begin{lemma}\label{lem:one-y}
For any $n \in \N$, $p \in [1,2]$ and $\eps, \delta \in (0, 1)$, let $s = O(\log(1/\delta) / \eps^2)$. There exists a distribution $\calM$ over $s \times n$ matrices, and an algorithm such that for any $x \in \R^n$, $\lambda_1,\dots, \lambda_n \in [0,1]$, and $y \in \R$, the following occurs:
\begin{itemize}
\item We sample $\bS \sim \calM$ as well as a random vector $\bchi = (\bchi_1,\dots, \bchi_s) \in \R^s$ where each is an i.i.d $p$-stable random variable. For any $\err \in \R_{\geq 0}$, we give the algorithm as input $\bS$, $\bS (\lambda^{1/p} \circ x) + \err \cdot\bchi $, the parameters $\lambda_1,\dots, \lambda_n$, and $y$.\footnote{The notation $\lambda^{1/p} \circ x \in \R^n$ denotes the Hadamard product, where $(\lambda^{1/p} \circ x)_i = \lambda_i^{1/p} \cdot x_i$. }
\item The algorithm outputs a parameter $\hat{\boldeta} \in \R_{\geq 0}$, which depends on $\bS, \bS (\lambda^{1/p} \circ x) + \err \cdot \bchi$, the parameters $\lambda_1, \dots, \lambda_n$, and $y$ which satisfies with probability at least $1 - \delta$ over $\bS$ and $\bchi$,
\begin{align}
(1-\eps) \left(\sum_{i=1}^n \lambda_i |x_i - y|^p + \err^p \right)^{1/p} \leq \hat{\boldeta} \leq (1+\eps) \left(\sum_{i=1}^n \lambda_i |x_i - y|^p + \err^p\right)^{1/p} . \label{eq:eta-guarantee}
\end{align} 
\end{itemize}
Furthermore, for every $j \in [s]$, the random variables $(\bS (\lambda^{1/p} \circ x))_j \in \R$ are distributed as $\| \lambda^{1/p} \circ x\|_p \cdot \bchi_j$, where $\bchi_j$ are independent, $p$-stable random variables.
\end{lemma}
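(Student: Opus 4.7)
The plan is to instantiate Indyk's $p$-stable sketch~\cite{I06}, set up so that the additive noise $\err \cdot \bchi$ is absorbed cleanly into the scale of the resulting $p$-stable samples. First, I would take $\bS \in \R^{s \times n}$ with i.i.d.\ standard $p$-stable entries and $s = O(\log(1/\delta)/\eps^2)$. The ``furthermore'' claim is then just the $p$-stability property: for any $v \in \R^n$, each $(\bS v)_j = \sum_i \bS_{ji} v_i$ is distributed as $\|v\|_p \cdot W_j$, with $W_1,\dots,W_s$ i.i.d.\ standard $p$-stables, so specialising to $v = \lambda^{1/p} \circ x$ gives the claim.

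Next I would describe the algorithm. Since it has access to $\bS$, $\lambda$, and $y$, it can compute $\bS \lambda^{1/p} \in \R^s$ (by feeding the vector $(\lambda_1^{1/p},\dots,\lambda_n^{1/p})$ through $\bS$) and subtract to form
\[ \mathbf{v} \;:=\; \bigl(\bS(\lambda^{1/p} \circ x) + \err \cdot \bchi\bigr) - y \cdot \bS\lambda^{1/p} \;=\; \bS\bigl(\lambda^{1/p}\circ(x - y\mathbf{1})\bigr) + \err \cdot \bchi. \]
By $p$-stability, the $j$-th coordinate of $\bS\bigl(\lambda^{1/p}\circ(x-y\mathbf{1})\bigr)$ is distributed as $\|\lambda^{1/p}\circ(x-y\mathbf{1})\|_p \cdot W_j$ for i.i.d.\ standard $p$-stables $W_j$. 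Because $\bchi$ is independent of $\bS$ and its entries are $p$-stable, the independent-sum stability law yields that each $\mathbf{v}_j$ is distributed as $\sigma \cdot Z_j$ with $Z_1,\dots,Z_s$ i.i.d.\ standard $p$-stables and
\[ \sigma \;=\; \bigl(\|\lambda^{1/p}\circ(x-y\mathbf{1})\|_p^p + \err^p\bigr)^{1/p} \;=\; \Bigl(\sum_{i=1}^n \lambda_i |x_i - y|^p + \err^p\Bigr)^{1/p}. \]

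Finally, the algorithm returns Indyk's median-of-magnitudes estimator $\hat{\boldeta} := \mathrm{median}(|\mathbf{v}_1|,\dots,|\mathbf{v}_s|) / m_p$, where $m_p$ is the population median of $|Z|$ for standard $p$-stable $Z$; the classical analysis of~\cite{I06} then gives $\hat{\boldeta} \in (1\pm\eps)\sigma$ with probability at least $1-\delta$ at $s = O(\log(1/\delta)/\eps^2)$, which is precisely (\ref{eq:eta-guarantee}). I do not foresee a substantive obstacle: the design choice of drawing $\bchi$ with independent $p$-stable entries was made precisely so that its inclusion shifts the scale of each sketch coordinate by adding $\err^p$ \emph{inside} the $p$-th power, which is exactly the quantity (\ref{eq:eta-guarantee}) asks us to approximate. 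The rest is a textbook invocation of Indyk's $p$-stable sketching analysis.
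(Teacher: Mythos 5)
Your proposal is correct and takes essentially the same approach as the paper: both compute $\bS(\lambda^{1/p}\circ(x - y\mathbf{1})) + \err\cdot\bchi$ by subtracting $y\cdot\bS\lambda^{1/p}$, observe via $p$-stability that each coordinate is a standard $p$-stable scaled by $(\sum_i\lambda_i|x_i-y|^p + \err^p)^{1/p}$, and invoke Indyk's median-of-magnitudes estimator. The only cosmetic difference is that the paper phrases the absorption of $\err\cdot\bchi$ as sketching an augmented vector in $\R^{n+1}$ with an extra coordinate equal to $\err$, whereas you invoke the independent-sum stability law directly; these are the same argument.
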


\begin{proof}
We notice that this simply corresponds to an $\ell_p$-sketch of the vector $z \in \R^n$, which is given by letting each $z_i = \lambda^{1/p} \circ (x_i - y)$, so that the $\ell_p$-sketch of \cite{I06} would accomplish this task. Since the algorithm receives $\bS$, $\bS(\lambda^{1/p} \circ x) + \err \cdot \chi$ and $y \in \R$, the algorithm may compute $\bS (\lambda^{1/p} \circ y \cdot \ind)$, where $\ind \in \R^n$ is an all-1's vector, and evaluate the sketch 
\[ \bS(\lambda^{1/p} \circ x) + \err \cdot \bchi - \bS (\lambda^{1/p} \circ y \cdot \ind) = \bS (\lambda^{1/p} \circ (x - y\cdot \ind)) + \err \cdot \bchi\] 
by linearity. Furthermore, note that the error simply corresponds to an $\ell_p$-sketch of the vector $z' \in \R^{n+1}$ which sets $z'_i = z_i$ for $i \neq n+1$ and $z_{n+1}' = \err$. \ignore{We specify the details to ensure we can support an adversarial corruption. Specifically, the distribution $\calM$ is supported on $s \times n$ matrices whose entries are i.i.d $p$-stable random variables, which establishes the third bullet. Upon receiving the vector $\bS x \in \R^n$, the algorithm forms the vector $\ol{y} \in \R^n$ given by letting $\ol{y}_i = y$ for all $i \in [n]$. The algorithm may evaluate $\bS \ol{y} \in \R^s$ since it has access to $\bS$, and by the $p$-stability property, each $j \in [s]$ has
\begin{align*}
(\bS x)_j - (\bS \ol{y})_j = \left(\bS (x - \ol{y}) \right)_j \sim \left( \sum_{i=1}^n |x_i - y|^p  \right)^{1/p} \cdot \bchi_j,
\end{align*}
where $\bchi_j$ is an independent $p$-stable random variable. The statement then follows from Theorem~4 of \cite{I06}. Specifically, as per Lemma~4 of \cite{I06}, there are constants $c_1, c_2 \in [0,1]$ (depending on $p$ but independent of $\eps$), such that there is some $t \in [c_1,c_2]$ (which does depend on $\eps$), and the algorithm sets $\hat{\boldeta}$ to be the $t$-th quantile of $\{ (\bS(x-\ol{y}))_j : j \in [s] \}$, rescaled by the appropriate quantity. Namely, the re-scaling is by the inverse c.d.f of the magnitude of a $p$-stable random variable at $t$; since $c_1$ and $c_2$ are constants bounding $t$, the inverse c.d.f. at $t$ is bounded below and above by a fixed constant. In particular, the algorithm outputs the $t$-th quantile of $\{ (\bS(x - \ol{y})_j + \err_j : j \in [s] \}$ and rescales by a constant, giving us the multiplicative $(1\pm \eps)$ error, and the additive error of $O(\|\err\|_{\infty})$.}
\end{proof}

\begin{corollary}\label{cor:cross-ps}
For any $n \in \N$, $p \in [1, 2]$ and $\eps, \delta \in (0,1)$, let $s = O(\log(1/\delta)/\eps^2)$. There exists a distribution $\calM$ over $s \times n$ matrices, and an algorithm such that for any $x \in \R^n$, and any $\lambda_1,\dots, \lambda_n \in [0,1]$, the following occurs:
\begin{itemize}
\item We sample $\bS \sim \calM$ and a random vector $\bchi = (\bchi_1,\dots, \bchi_{s}) \in \R^s$ of i.i.d $p$-stable random variables. For any $\err \in \R_{\geq 0}$. We give the algorithm as input $\bS$ and $\bS (\lambda^{1/p} \circ x) + \err \cdot \bchi$.
\item With probability at least $1 -\delta$ over $\bS$ and $\bchi$, the algorithm outputs a parameter $\hat{\bgamma} \in \R_{\geq 0}$, which depends on $\bS$, and $\bS (\lambda^{1/p} \circ x)$ which satisfies
\begin{align*}
(1-\eps) \left(\sum_{i=1}^n \lambda_i|x_i|^p + \err^p\right)^{1/p} \leq \hat{\bgamma} \leq (1+\eps) \left(\sum_{i=1}^n \lambda_i |x_i|^p + \err^p\right)^{1/p}. 
\end{align*} 
\end{itemize}
Furthermore, for every $j \in [s]$, the random variables $(\bS (\lambda^{1/p} \circ x))_j\in \R$ are independent and distributed as $\| \lambda^{1/p} \circ x\|_p\cdot \bchi_j$, where $\bchi_j$ are independent, $p$-stable random variables.
\end{corollary}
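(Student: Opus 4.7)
The corollary is a direct specialization of Lemma~\ref{lem:one-y}. The plan is simply to apply Lemma~\ref{lem:one-y} with the parameter $y$ set to $0$, and read off the conclusion.

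More specifically, I would invoke Lemma~\ref{lem:one-y} with the same parameters $n, p, \eps, \delta$, the same vector $x$, the same weights $\lambda_1, \dots, \lambda_n$, the same additive error $\err$, and $y = 0$. The distribution $\calM$ over $s \times n$ matrices and the vector $\bchi$ of i.i.d.\ $p$-stable random variables are inherited from the lemma, as is the requirement $s = O(\log(1/\delta)/\eps^2)$. The sketch input available to the algorithm, namely $\bS(\lambda^{1/p} \circ x) + \err \cdot \bchi$, is exactly what Lemma~\ref{lem:one-y} expects. Substituting $y = 0$ into the conclusion (\ref{eq:eta-guarantee}) of Lemma~\ref{lem:one-y} yields
\[ (1-\eps)\left( \sum_{i=1}^n \lambda_i |x_i|^p + \err^p \right)^{1/p} \leq \hat{\boldeta} \leq (1+\eps)\left( \sum_{i=1}^n \lambda_i |x_i|^p + \err^p\right)^{1/p}, \]
which is precisely the inequality that $\hat{\bgamma}$ is required to satisfy, so we set $\hat{\bgamma} := \hat{\boldeta}$.

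The final ``furthermore'' clause about the distribution of the entries $(\bS(\lambda^{1/p} \circ x))_j$ is a verbatim restatement of the ``furthermore'' clause in Lemma~\ref{lem:one-y} and requires no additional argument. There is no real obstacle here; the only thing worth noting is that when $y = 0$, the algorithm of Lemma~\ref{lem:one-y} need not form $\bS(\lambda^{1/p} \circ y \cdot \ind) = 0$ and can directly apply the $\ell_p$-sketch estimation procedure of~\cite{I06} to the received vector $\bS(\lambda^{1/p} \circ x) + \err \cdot \bchi$.
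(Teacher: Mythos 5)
Your proof is correct and is exactly the paper's argument: the paper's proof of Corollary~\ref{cor:cross-ps} is the one-line observation that one applies Lemma~\ref{lem:one-y} with $y=0$. Nothing to add.
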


\begin{proof}
We apply Lemma~\ref{lem:one-y} to the vector $x \in \R^{n}$ with $y = 0$.
\end{proof}

\begin{lemma}\label{lem:single-coord-opt}
For any $n \in \N$, $p \in [1,2]$ and $\eps, \delta \in (0, 1)$, let $s = O(\log(1/(\eps \delta))/\eps^2)$. There exists a distribution $\calM$ over $s \times n$ matrices, and an algorithm such that for any $x \in \R^n$ and any $\lambda_1,\dots, \lambda_n \in [0,1]$ with $\sum_{i=1}^n \lambda_i = 1$, whenever $\sum_{i=1}^n \lambda_i x_i = 0$, the following occurs:
\begin{itemize}
\item We sample $\bS \sim\calM$ and a random vector $\bchi =(\bchi_1,\dots, \bchi_s)$ of i.i.d $p$-stable random variables. For any $\err \in \R_{\geq 0}$. We give the algorithm as input $\bS$, $\bS (\lambda^{1/p}\circ x) + \err\cdot \bchi$, the parameters $\lambda_1,\dots, \lambda_n$, and a parameter $\gamma \in \R_{\geq 0}$ satisfying
\begin{align*}
(1-\eps) \left(\sum_{i=1}^n \lambda_i |x_i|^p\right)^{1/p} \leq \gamma \leq (1+\eps) \left(\sum_{i=1}^n \lambda_i |x_i|^p\right)^{1/p}.
\end{align*}
\item The algorithm outputs a parameter $\hat{\bbeta} \in \R_{\geq0}$ which satisfies 
\begin{align} 
(1-\eps) \min_{z \in \R} \left(\sum_{i=1}^n \lambda_i |x_i - z|^p + \err^p\right)^{1/p} \leq \hat{\bbeta} \leq (1+\eps) \min_{z\in \R} \left(\sum_{i=1}^n \lambda_i |x_i - z|^p + \err^p\right)^{1/p}.  \label{eq:minimize}
\end{align}
\end{itemize}
Furthermore, for every $j \in [s]$, the random variables $(\bS (\lambda^{1/p} \circ x))_j$ are independent and distributed as $\| \lambda^{1/p} \circ x\|_p \cdot \bchi_j$, where $\bchi_j$ are independent, $p$-stable random variables.
\end{lemma}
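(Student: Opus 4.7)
The plan is to reduce the minimization over $z \in \R$ to estimating $f(z) := (\sum_i \lambda_i |x_i - z|^p + \err^p)^{1/p}$ at a small grid of candidate values of $z$, invoking Lemma~\ref{lem:one-y} to produce an estimate at each grid point from the single sketched input $\bS(\lambda^{1/p} \circ x) + \err \cdot \bchi$, and returning the minimum estimate.

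First, I would use the centering assumption $\sum_i \lambda_i x_i = 0$ to bound the location of the minimizer $z^\star$. Writing $|z^\star| = |\sum_i \lambda_i (x_i - z^\star)| \leq \sum_i \lambda_i |x_i - z^\star| \leq (\sum_i \lambda_i |x_i - z^\star|^p)^{1/p}$ by Jensen and $p \geq 1$, and comparing against the choice $z = 0$, this gives $|z^\star|^p \leq \sum_i \lambda_i |x_i|^p \leq (\gamma/(1-\eps))^p$, so $|z^\star| \leq 2\gamma$ for small $\eps$. Next, I would view $f(z)$ as the $\ell_p$-norm of the vector $(\lambda_1^{1/p}(x_1-z), \dots, \lambda_n^{1/p}(x_n-z), \err) \in \R^{n+1}$, from which the triangle inequality and $\sum_i \lambda_i = 1$ give that $f$ is $1$-Lipschitz in $z$. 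Finally, taking $p$-th roots of the centering inequality (\ref{eq:ineq}) yields $f(z^\star) \geq \tfrac12 (\sum_i \lambda_i |x_i|^p)^{1/p} \geq \tfrac12 (1-\eps)\gamma$, so in particular $f(z^\star) = \Omega(\gamma)$.

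With these three facts in hand, the grid $G \subset [-2\gamma, 2\gamma]$ consisting of $O(1/\eps)$ equally spaced points with spacing $c\eps\gamma$ (for a small absolute constant $c$) has the property that some $\tilde z \in G$ satisfies $|\tilde z - z^\star| \leq c\eps\gamma$, whence $f(\tilde z) \leq f(z^\star) + c\eps\gamma \leq (1 + O(\eps)) f(z^\star)$ by the Lipschitz bound. For each $\tilde z \in G$, I would invoke the algorithm of Lemma~\ref{lem:one-y} applied to the same sketch $\bS(\lambda^{1/p} \circ x) + \err \cdot \bchi$ with parameter $y = \tilde z$, accuracy $\eps$, and failure probability $\delta/|G| = \Theta(\eps \delta)$. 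This requires $\bS$ to have $s = O(\log(1/(\eps\delta))/\eps^2)$ rows, matching the stated bound. The output is $\hat{\bbeta} := \min_{\tilde z \in G} \hat{\boldeta}(\tilde z)$.

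By a union bound over $|G|$ invocations, with probability at least $1 - \delta$ every estimate $\hat{\boldeta}(\tilde z)$ is a $(1\pm\eps)$-approximation of $f(\tilde z)$. In that event, $\hat{\bbeta} \leq (1+\eps) f(\tilde z) \leq (1+O(\eps)) f(z^\star)$ on one side, and $\hat{\bbeta} \geq (1-\eps) \min_{\tilde z \in G} f(\tilde z) \geq (1-\eps) f(z^\star)$ on the other (since every $f(\tilde z) \geq f(z^\star)$), giving (\ref{eq:minimize}) after rescaling $\eps$ by a constant. The only subtlety worth flagging is that we reuse a single sketch matrix $\bS$ to evaluate Lemma~\ref{lem:one-y} at all $|G|$ grid points, which is valid because the grid is chosen independently of $\bS$ (depending only on $\gamma$) and the failure probability of Lemma~\ref{lem:one-y} is per fixed $y$; no part of the argument is genuinely hard, and the slight technical point is propagating the $\err$ term uniformly through the Lipschitz bound, which is automatic from the $\ell_p$-norm view of $f$.
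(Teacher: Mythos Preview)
Your proposal is correct and follows essentially the same approach as the paper's proof: discretize a bounded interval containing the true minimizer into $O(1/\eps)$ grid points, invoke Lemma~\ref{lem:one-y} at each grid point with failure probability $\Theta(\eps\delta)$, and return the minimum estimate. The only cosmetic differences are that the paper uses the slightly looser bound $|z^\star| \leq 4\gamma$ (via the triangle inequality on $\ell_p$-norms rather than your direct Jensen argument) and phrases the grid-approximation step as a triangle-inequality computation rather than invoking $1$-Lipschitzness explicitly; your use of (\ref{eq:ineq}) to lower-bound $f(z^\star)$ is equivalent to the paper's implicit reliance on the same centering inequality.
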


\begin{proof}
We will utilize the sketch from Lemma~\ref{lem:one-y}, while varying the $y$'s to find the minimum. Specifically, let $t = 16 \cdot 2^p / \eps$, and let the distribution $\calM$ be the same as that of Lemma~\ref{lem:one-y} instantiated with error probability $1 - t\delta$ and accuracy parameter $\eps / 2$. We discretize the interval $[-4\gamma,4\gamma]$ into $t$, evenly-spaced out points $y_1, \dots, y_t \subset [-4\gamma, 4\gamma]$ such that $y_{\ell+1} -y_{\ell} = 8\gamma / t$. We utilize the algorithm in Lemma~\ref{lem:one-y} to obtain estimates $\hat{\boldeta}_{1},\dots, \hat{\boldeta}_t$ satisfying (\ref{eq:eta-guarantee}) with $y_1, \dots,y_t$, respectively. Then, we output
\[ \hat{\bbeta} = \min_{\ell \in [t]} \hat{\boldeta}_{\ell}. \]
Since we amplified the error probability to less than $t\delta$, we may assume, by a union bound, that all estimates $\{\boldeta_{\ell}\}_{\ell \in [t]}$ satisfy (\ref{eq:eta-guarantee}) with $y_{\ell}$ with probability at least $1-\delta$. First, for any $\ell \in [t]$, 
\begin{align*}
\min_{z \in \R} \left( \sum_{i=1}^n \lambda_i |x_i - z|^p \right)^{1/p} \leq \left( \sum_{i=1}^n \lambda_i |x_i - y_{\ell}|^p \right)^{1/p}, 
\end{align*}
and therefore, the lower bound in (\ref{eq:minimize}) is implied by (\ref{eq:eta-guarantee}). To prove the upper bound in (\ref{eq:minimize}), denote $z \in \R$ as the true minimizer of $(\sum_{i=1}^n \lambda_i |x_i - z|^p)^{1/p}$. By the fact $\sum_{i=1}^n \lambda_i = 1$ and the triangle inequality, we have
\begin{align*}
|z| = \left( \sum_{i=1}^n \lambda_i |z|^p\right)^{1/p} \leq \left( \sum_{i=1}^n \lambda_i |x_i - z|^{p} \right)^{1/p} + \left( \sum_{i=1}^n \lambda_i |x_i|^p\right)^{1/p} \leq 2 \left( \sum_{i=1}^n \lambda_i |x_i|^p\right)^{1/p},
\end{align*}
so $|z| \leq 2 (1+\eps) \gamma \leq 4\gamma$, and thus $z \in [-4\gamma, 4\gamma]$. Let $\ell \in [t]$ be such that $|y_t - z| \leq 4\gamma / t$. Then, again by the triangle inequality and the fact $\sum_{i=1}^n \lambda_i x_i = 0$,
\begin{align*}
\left( \sum_{i=1}^n \lambda_i |x_i - y_t|^p\right)^{1/p} \leq \left( \sum_{i=1}^n \lambda_i |x_i - z|^p\right)^{1/p} + 4 \gamma / t \leq \left(1 + 4(1+\eps) 2^p / t\right)\left(\sum_{i=1}^n \lambda_i |x_i - z|^p \right)^{1/p}.
\end{align*}
By the setting of $t$, $(\sum_{i=1}^n \lambda_i |x_i - y_t|^p)^{1/p} \leq (1+\eps/2)(\sum_{i=1}^n \lambda_i |x_i - z|^p)^{1/p}$, and by (\ref{eq:eta-guarantee}), we obtain the desired upper bound.
\end{proof}

\subsection{Grouping Single Coordinate Sketches}\label{subsec:grouping}

\newcommand{\Exp}{\mathrm{Exp}}
\newcommand{\bolderr}{\boldsymbol{\err}}

In this subsection, we show how to compress $d$ linear sketches (one for each coordinate) from Subsection~\ref{subsec:single-coord}. In the lemma that follows, the parameter $m \in \N$ should be considered the sketch size of the sketches in Subsection~\ref{subsec:single-coord}, and the linear sketch will take the $d$ sketches from Subsection~\ref{subsec:single-coord} (represented as a vector $\R^{dm}$). Each of the $d$ linear sketches have each coordinate of $\R^m$ distributed as an i.i.d scaled $p$-stable random variable (specified by the last sentence in Corollary~\ref{cor:cross-ps} and Lemma~\ref{lem:single-coord-opt}). Thus, we write the $d$ sketches as $\bPsi_1 v_1,\dots, \bPsi_d v_d \in \R^m$, where $v_j \in \R$ is a scaling, and $\bPsi_1,\dots, \bPsi_d \in \R^{m \times n}$ are i.i.d $p$-stable matrices. 

\begin{lemma}[$p$-stable Sketch Compression via Count-Min]\label{lem:count-sketch}
Let $d,m \in \N$, $\eps, \delta \in (0, 1)$, and let $t = O(\log(d/\delta))$. There exists a distribution $\calC$ over $(10tm/\eps^{p}) \times (dm)$ matrices, and an algorithm such that for any $v \in \R^{d}$, the following occurs:
\begin{itemize}
\item We sample $\bC \sim \calC$ and a $(dm) \times d$ matrix $\bPsi$, where $\bPsi = \diag(\bPsi_1,\dots, \bPsi_d)$, and each $\bPsi_j = (\bchi_{j1}, \dots, \bchi_{jm}) \in \R^m$ are independent $p$-stable random vectors.\footnote{Hence, the vector $\bPsi v \in \R^{dm}$ is given by vertically stacking $d$ vectors of the form $v_j \bPsi_j \in \R^m$.} The algorithm receives as input $\bC$ and $\bC \bPsi v$. 
\item The algorithm outputs, for each $j \in [d]$, a sequence of $t$ vectors $\hat{\bz}_j^{(1)}, \dots, \hat{\bz}_{j}^{(t)} \in \R^m$ which satisfy, for each $t' \in [t]$,
\begin{align}
\hat{\bz}_{j}^{(t')} = v_j \bPsi_j + \bolderr_j^{(t')} \cdot \bchi_j^{(t')}.\label{eq:lem-z-output}
\end{align}
where $\bchi_{j}^{(t')} \in \R^{m}$ is a vector of independent $p$-stable random variables, and $\bolderr_j^{(t')} \in \R_{\geq 0}$ only depends on $\bC$. With probability at least $1 - \delta$ over $\bC$, for every $j \in [d]$
\begin{align} 
\left| \left\{ t' \in [t] : \bolderr_{j}^{(t')} \leq \eps \|v\|_p \right\} \right| \geq t/2. \label{eq:error-bound}
\end{align}
\end{itemize}
\end{lemma}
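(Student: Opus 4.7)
The plan is to implement $\bC$ as the standard Count-Min sketch, but with each "cell" being an $m$-dimensional vector rather than a scalar. Concretely, I would pick $t = O(\log(d/\delta))$ independent pairwise-independent hash functions $\bh_1,\dots,\bh_t\colon [d]\to [B]$ with $B = 10/\eps^p$, and let $\bC$ be the $(tBm)\times(dm)$ block matrix whose $(t',b)$-th block of $m$ rows sums the $m$-dimensional blocks of its input over all $j$ with $\bh_{t'}(j)=b$. Applied to $\bPsi v \in \R^{dm}$, this produces, in bucket $(t',b)$, the $m$-vector $\sum_{j\,:\,\bh_{t'}(j)=b} v_j\bPsi_j$. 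The total number of rows is $tBm = 10tm/\eps^p$, matching the claim.

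For recovery, the algorithm reads off, for each $j \in [d]$ and $t' \in [t]$, the content of bucket $\bh_{t'}(j)$, namely
\[ \hat{\bz}_j^{(t')} \;=\; v_j \bPsi_j \;+\; \sum_{\substack{j' \neq j \\ \bh_{t'}(j')=\bh_{t'}(j)}} v_{j'}\bPsi_{j'}. \]
Since the $\bPsi_{j'}$ are independent $p$-stable random vectors and $p$-stability applies coordinate by coordinate, the second sum is distributed (conditionally on $\bC$) as $\bolderr_j^{(t')}\cdot\bchi_j^{(t')}$, where $\bchi_j^{(t')}\in\R^m$ is a vector of independent $p$-stable entries and
\[ \bolderr_j^{(t')} \;=\; \Bigl(\sum_{\substack{j' \neq j \\ \bh_{t'}(j')=\bh_{t'}(j)}} |v_{j'}|^p\Bigr)^{1/p}. \]
This gives (\ref{eq:lem-z-output}), and makes clear that $\bolderr_j^{(t')}$ depends only on $\bC$ (through the hash $\bh_{t'}$), not on $\bPsi$.

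It remains to verify (\ref{eq:error-bound}). Fix $j$. For a pairwise-independent hash $\bh_{t'}$,
\[ \E\bigl[(\bolderr_j^{(t')})^p\bigr] \;=\; \sum_{j' \neq j} |v_{j'}|^p \cdot \Prx[\bh_{t'}(j')=\bh_{t'}(j)] \;\leq\; \frac{\|v\|_p^p}{B} \;=\; \frac{\eps^p \|v\|_p^p}{10}, \]
so Markov's inequality gives $\Pr[\bolderr_j^{(t')}>\eps\|v\|_p]\leq 1/10$. Across the $t$ independent rows, the number of "bad" rows (those with $\bolderr_j^{(t')}>\eps\|v\|_p$) is a sum of independent $\{0,1\}$ indicators, each of expectation at most $1/10$; a Chernoff bound makes the probability of exceeding $t/2$ at most $e^{-\Omega(t)}\leq \delta/d$, using $t = \Theta(\log(d/\delta))$. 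A union bound over $j \in [d]$ yields the stated property.

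The main subtlety, and really the only non-mechanical point, is identifying the right output form: by keeping the errors as an explicit scaled $p$-stable vector $\bolderr_j^{(t')}\cdot \bchi_j^{(t')}$ rather than as an arbitrary additive perturbation, we make the subsequent sketches of Subsection~\ref{subsec:single-coord} (Corollary~\ref{cor:cross-ps} and Lemma~\ref{lem:single-coord-opt}) directly applicable to $\hat{\bz}_j^{(t')}$, because their input specification already allows for a $p$-stable corruption scaled by some $\err$. Verifying that $\bchi_j^{(t')}$ is indeed a vector of independent $p$-stable entries (as opposed to having correlated coordinates) uses that the $m$ coordinates of each $\bPsi_{j'}$ are independent, so $p$-stability can be applied to each coordinate separately with the same scaling $\bolderr_j^{(t')}$.
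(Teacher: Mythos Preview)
Your proposal is correct and follows essentially the same argument as the paper: implement $\bC$ as a Count-Min sketch with $t$ hash functions into $10/\eps^p$ buckets of $m$-vectors, read off the bucket containing $j$ for each repetition, invoke $p$-stability to express the collision noise as $\bolderr_j^{(t')}\cdot\bchi_j^{(t')}$, and then use Markov plus Chernoff plus a union bound over $j\in[d]$ to obtain (\ref{eq:error-bound}). Your explicit mention of pairwise independence for the hash functions is a small clarification the paper leaves implicit, but otherwise the two proofs coincide.
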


\begin{proof}
The matrix $\bC$ is a Count-Min matrix which given a vector $u \in \R^{dm}$ given by vertically stacking $d$ vectors $u_1,\dots, u_d \in \R^m$ repeats the following process: for each $t' \in [t]$, we sample a hash function $\bh_{t'} \colon [d] \to [10/\eps^p]$, and for each $\ell \in [10/\eps^p]$ store the vector
\[ \bb_{t', \ell} \eqdef \sum_{j \in [d]} \ind\{ \bh_{t'}(j) = \ell \} \cdot u_{j} \in \R^m. \]
In particular, the output $\bC u$ consists of stacking $t \cdot 10 / \eps^p$ vectors $(\bb_{t', \ell} \in \R^m : t' \in [t], \ell \in [10/\eps^p])$, which gives the desired bound of $10 m t / \eps^{p}$ on the output dimension of $\bC$. For each $j \in [d]$ and $t' \in [t]$, the algorithm lets $\bell = \bh_{t'}(j)$ and sets
\[ \hat{\bz}_j^{(t')} = \bb_{t', \ell} = v_j \bPsi_j + \sum_{j' \in [d] \setminus \{j\}} \ind\{ \bh_{t'}(j') = \bell \} \cdot v_{j'} \cdot \bPsi_{j'}. \]
We now apply the $p$-stability property to the right-most summand, to notice that 
\[ \bolderr_{j}^{(t')} = \left(\sum_{j' \in [d] \setminus \{j\}} \ind\{ \bh_{t'}(j') = \bell\} \cdot v_{j'}^p \right)^{1/p},\]
which only depends on $\bC$. Furthermore, the inner most summand is at most $\eps^p/10 \sum_{j' \in [d] \setminus \{j\}} v_{j'}^p$ in expectation. By Markov's inequality, each $\bolderr_{j}^{(t')} \leq \eps \| v\|_p$ with probability at least $9/10$. Since $t = O(\log(d / \delta))$, the probability that (\ref{eq:error-bound}) is not satisfied for each $j \in [d]$ is at most $\delta / d$ by a Chernoff bound, so that a union bound gives the desired guarantees.
\end{proof}

The above lemma allows us to compress $d$ many $p$-stable sketches into $O(\log(d/\delta)/\eps^{p})$ many $p$-stable sketches, albeit with some error. Since the $p$-stable sketches that we will use (from Corollary~\ref{cor:cross-ps} and Lemma~\ref{lem:single-coord-opt}) are exactly of the form $\bPsi v$ for some vector $v$, Lemma~\ref{lem:count-sketch} will allow us to compress them. Namely, we will consider $d$ sketches from Corollary~\ref{cor:cross-ps} and Lemma~\ref{lem:single-coord-opt} and utilize Lemma~\ref{lem:count-sketch}; for each $j \in [d]$, we will be able to recover $t$ noisy versions of the sketch of Corollary~\ref{cor:cross-ps} and Lemma~\ref{lem:single-coord-opt} for coordinate $j$. Importantly, the noise is of the form an error times a $p$-stable random variable, and these are the kinds of errors that Corollary~\ref{cor:cross-ps} and Lemma~\ref{lem:single-coord-opt} can easily handle. 

\begin{lemma}[$p$-stable Sketch Recovery for Sample]\label{lem:p-stable-recovery}
For $n, d \in \N$, $p \in [1,2]$ and $\eps, \delta \in (0, 1)$, let $s = O(\log^2(d/\delta) /\eps^{2+p})$. There exists a distribution $\calR$ over $s \times (nd)$ matrices, and an algorithm such that for any vectors $y_1, \dots, y_n \in \R^d$ and weights $\lambda_1,\dots, \lambda_n \in [0,1]$ with $\sum_{i=1}^n \lambda_i = 1$, the following occurs with probability at least $1 - \delta$:
\begin{itemize}
\item We sample $\bS = [\bS_1, \dots, \bS_n ] \sim \calR$ and we give the algorithm as input $\bS$, and the vector $\sum_{i=1}^n \bS_i (\lambda_i^{1/p} y_i) \in \R^{s}$.
\item The algorithm outputs $d$ numbers $\balpha_1, \dots, \balpha_d \in \R_{\geq 0}$ such that each $j \in [d]$ satisfies
\[ (1-\eps)\left( \sum_{i=1}^n \lambda_i |y_{ij}|^p\right)^{1/p} - \err \leq \balpha_j \leq (1+\eps)\left( \sum_{i =1}^n \lambda_i |y_{ij}|^p \right)^{1/p} + \err,\]
where $\err \in \R_{\geq0}$ is an additive error satisfying
\[ \err \leq \eps \left(\sum_{j=1}^d \sum_{i=1}^n \lambda_i |y_{ij}|^p \right)^{1/p}. \] 
\end{itemize}
\end{lemma}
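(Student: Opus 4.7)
The plan is to combine Corollary~\ref{cor:cross-ps}, applied independently along each coordinate $j\in[d]$, with the Count-Min compression of Lemma~\ref{lem:count-sketch}, and then take a median of recovered estimates. Concretely, for each $j \in [d]$ let $y_{\cdot,j}\in\R^n$ denote the $j$-th coordinate of all $n$ points, and let $v_j := (\sum_{i=1}^n \lambda_i |y_{ij}|^p)^{1/p} = \|\lambda^{1/p} \circ y_{\cdot,j}\|_p$. The sketch will maintain an $\ell_p$-sketch of $\lambda^{1/p}\circ y_{\cdot,j}$ for every $j$ in parallel, and then apply a single Count-Min matrix across the coordinate index $j$; after compression we will need to invoke the $p$-stability-friendly estimator from Corollary~\ref{cor:cross-ps} on each recovered noisy sketch.

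First, for each $j\in[d]$ I would take the sketch $\bPsi_j(\lambda^{1/p}\circ y_{\cdot,j})\in\R^m$ from Corollary~\ref{cor:cross-ps} with accuracy $\eps$ and failure probability $\delta'=\delta/(dt)$, so that $m = O(\log(dt/\delta)/\eps^2)$ and $t := O(\log(d/\delta))$. By the $p$-stability property stated at the end of Corollary~\ref{cor:cross-ps}, each coordinate of $\bPsi_j(\lambda^{1/p}\circ y_{\cdot,j})$ is distributed as $v_j \cdot \bchi$ for an independent $p$-stable $\bchi$; hence the stacked vector $\bPsi v \in \R^{dm}$ (with $\bPsi=\mathrm{diag}(\bPsi_1,\dots,\bPsi_d)$ and $v=(v_1,\dots,v_d)$) has exactly the form required by Lemma~\ref{lem:count-sketch}. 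I would then apply the Count-Min matrix $\bC$ from Lemma~\ref{lem:count-sketch} with parameters $\eps$ and $\delta$, producing a total sketch of size $O(mt/\eps^p)=O(\log^2(d/\delta)/\eps^{2+p})$, which matches the claimed $s$. The full sketching matrix $\bS$ in the lemma statement is the composition of the per-coordinate $\bPsi_j$'s (which act on each $y_i$) with $\bC$.

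Next, I would decode as follows. Lemma~\ref{lem:count-sketch} returns, for each $j\in[d]$ and each $t'\in[t]$, a vector $\hat{\bz}_j^{(t')} = v_j\bPsi_j + \bolderr_j^{(t')}\cdot \bchi_j^{(t')}$, with at least $t/2$ of the error magnitudes $\bolderr_j^{(t')}$ bounded by $\eps \|v\|_p = \eps(\sum_j v_j^p)^{1/p}$ (simultaneously for all $j$, with probability $1-\delta$). This is exactly the ``$\ell_p$-sketch plus an additive $p$-stable noise term'' input that Corollary~\ref{cor:cross-ps} is designed to handle. So for each $(j,t')$ I would run the Corollary's estimator on $\hat{\bz}_j^{(t')}$ to produce $\hat{\bgamma}_j^{(t')}$ satisfying
\[
(1-\eps)(v_j^p + (\bolderr_j^{(t')})^p)^{1/p} \le \hat{\bgamma}_j^{(t')} \le (1+\eps)(v_j^p + (\bolderr_j^{(t')})^p)^{1/p};
\]
a union bound over the $dt$ invocations holds with probability $1-\delta$ by the choice of $\delta'$. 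Finally, I would output $\balpha_j := \mathrm{median}_{t'\in[t]} \hat{\bgamma}_j^{(t')}$.

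For the analysis, every $\hat{\bgamma}_j^{(t')}$ is at least $(1-\eps) v_j$ (since $(v_j^p+(\bolderr_j^{(t')})^p)^{1/p}\ge v_j$), so the median is $\ge (1-\eps) v_j$. For the upper bound, on the at-least-$t/2$ good indices one has $(v_j^p+(\bolderr_j^{(t')})^p)^{1/p} \le v_j + \bolderr_j^{(t')} \le v_j + \eps\|v\|_p$, hence the median is $\le (1+\eps)(v_j + \eps\|v\|_p) \le (1+\eps)v_j + 2\eps\|v\|_p$. Rescaling $\eps$ by a constant factor yields the claimed bounds with $\err \le \eps\|v\|_p = \eps(\sum_{j,i} \lambda_i|y_{ij}|^p)^{1/p}$. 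The main thing to be careful about, and what I expect to be the only nontrivial point, is the $p$-stability reduction that identifies the per-coordinate $\ell_p$-sketches with the $v_j\bPsi_j$ form demanded by Lemma~\ref{lem:count-sketch}: this is what makes the Count-Min noise appear as a clean additive $p$-stable perturbation, which is precisely the regime where Corollary~\ref{cor:cross-ps} still recovers a tight estimate.
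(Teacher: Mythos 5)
Your proposal is correct and matches the paper's proof essentially step for step: both compose the per-coordinate $\ell_p$-sketches of Corollary~\ref{cor:cross-ps} with the Count-Min compression of Lemma~\ref{lem:count-sketch}, interpret the Count-Min residual as an additional $p$-stable noise term absorbed by the Corollary's $\err$ parameter, and take a median over the $t$ repetitions (using that the lower bound holds for every repetition and the upper bound holds on the good ones). The only differences are cosmetic constant-factor allocations of $\eps$ and $\delta$ between the two sub-sketches.
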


\begin{proof}
We combine Corollary~\ref{cor:cross-ps} and Lemma~\ref{lem:count-sketch}. We describe the distribution $\calR$ over $s \times (nd)$ matrices by giving a procedure for sampling $\bS \sim \calR$. $\bS$ can be naturally expressed as a concatenation of matrices $\bS = [\bS_1, \dots, \bS_n]$.
\begin{itemize}
\item We let $\calM$ be the distribution over $s_0 \times n$ matrices of Corollary~\ref{cor:cross-ps} with error probability at most $\delta/(2dt)$ and accuracy $\eps$ (so that we may union bound over $d$ sketches later) so that $s_0 = O(\log(dt/\delta) / \eps^2)$. We take $d$ independent samples $\bS_1',\dots,\bS_d' \sim \calM$.
\item For each $j \in [d]$, we let $P_j$ be the $n \times (nd)$ matrix where given the vector $y' \in \R^{nd}$ given by vertically stacking $\lambda_1^{1/p} y_1,\dots, \lambda_n^{1/p} y_n \in \R^d$, sets $y_{\cdot, j}' = P_k j'$, where $y_{\cdot, j}' = \lambda^{1/p} \circ (y_{i, j})_{i\in[n]} \in \R^n$. Let $P$ be the $(nd) \times (nd)$ matrix which stacks these matrices vertically. 
\item We sample $\bC \sim \calC$ as in Lemma~\ref{lem:count-sketch} with $m = s_0$, where we set the accuracy parameter $\eps/2$ and the failure probability $\delta/2$. We let 
\[ \bS = \bC \cdot \diag(\bS_1',\dots, \bS_d') \cdot P. \]
\end{itemize}
Intuitively, we will apply our sketch $\bS$ on the vector the matrix $\bS$ may be interpreted as first applying $d$ sketches of Corollary~\ref{cor:cross-ps} to the vectors $(\lambda^{1/p} \circ y_{\cdot, 1}), \dots, (\lambda^{1/p} \circ y_{\cdot, d}) \in \R^n$, and then applying $\bC$ from Lemma~\ref{lem:count-sketch}. The algorithm for producing the estimates $\balpha_1,\dots,\balpha_d$ proceeds by applying the algorithm of Lemma~\ref{lem:count-sketch} to obtain, for each $j \in [d]$ a sequence of $t$ vectors $\hat{\bz}_{j}^{(1)},\dots, \hat{\bz}_{j}^{(t)} \in \R^{s_0}$. We apply the algorithm of Corollary~\ref{cor:cross-ps} to each of the $t$ vectors to obtain estimates $\balpha_j^{(1)},\dots, \balpha_j^{(t)} \in \R_{\geq 0}$, and we let $\balpha_j = \median \{ \balpha_{j}^{(t')}: t' \in [t] \}$.

To see why this works, consider the collection of $d$ vectors
\[ \bz_j = \bS_j' (\lambda^{1/p} \circ y_{\cdot, j}) \in \R^{s_0},\]
and notice that by Corollary~\ref{cor:cross-ps}, every $j\in [d]$ and $\ell \in [s]$, $\bz_{j,\ell} \sim (\sum_{i=1}^n \lambda_i |y_{i,j}|^p )^{1/p} \cdot \bchi_{j,\ell}$, where $\bchi_{k,\ell}$ are independent, $p$-stable random variables. Indeed, if we write $v \in \R^{d}$ as the vector which sets
\[ v_{j} = \left(\sum_{i=1}^n \lambda_i |y_{i,j}|^p\right)^{1/p},\]
then vertically stacking the vectors $\bz_1,\dots, \bz_d \in \R^{s_0}$ gives a vector which is equivalently distributed as $\bPsi v$, where $\bPsi$ is the matrix from Lemma~\ref{lem:count-sketch}. In particular, with probability at least $1 - \delta / 2$, the algorithm of Lemma~\ref{lem:count-sketch} outputs $dt$ vectors $(\hat{\bz}_j^{(t')} : j \in [d], t' \in [t])$ which satisfy 
\begin{align}
\hat{\bz}_j^{(t')} = \bS_k' (\lambda^{1/p}\circ y_{\cdot, j}) + \bolderr_{j}^{(t')} \cdot \bchi_{j, t'}.
\end{align}
Hence, with probability at least $1 - \delta / (2dt)$, the algorithm of Corollary~\ref{cor:cross-ps} applied to $\hat{\bz}_{j}^{(t')}$ outputs an estimate $\balpha_{j}^{(t')}$ satisfying $(1-\eps) ( v_j^p + (\bolderr_j^{(t')})^p )^{1/p} \leq \balpha_j^{(t')} \leq (1+\eps) ( v_j^p + (\bolderr_j^{(t')})^p )^{1/p}$, and therefore, we have that each $\balpha_j^{(t')}$ satisfies
\[ (1-\eps) v_{j} - \bolderr_j^{(t')} \leq \balpha_{j}^{(t')} \leq (1+\eps) v_j + 2 \cdot \bolderr_j^{(t')} .\]
Since at least $t/2$ of $t' \in [t]$ satisfies $\bolderr_{j}^{(t')} \leq \eps/2 \cdot \| v\|_p$, the median $\balpha_j^{(t')}$ satisfies the desired error guarantee. Applying a union bound over all $dt$ applications of Corollary~\ref{cor:cross-ps} and Lemma~\ref{lem:count-sketch} gives the desired guarantees.
\end{proof}

\begin{lemma}[$p$-stable Sketch Recovery for Optimizer]\label{lem:p-stable-recovery-opt}
For $n, d \in \N$, $p \in [1,2]$ and $\eps, \delta \in (0,1)$, let $s = O(\log(d/\delta) \cdot \log(\log d/(\eps \delta)) /\eps^{2+p})$. There exists a distribution $\calO$ over $s \times (nd)$ matrices, and an algorithm such that for any vectors $y_1,\dots, y_n \in \R^d$ and any set of weights $\lambda_1,\dots, \lambda_n \in [0,1]$ where $\sum_{i=1}^n \lambda_i = 0$, whenever $\sum_{i=1}^n \lambda_i y_i = 0$, the following occurs with probability at least $1 - \delta$:
\begin{itemize}
\item We sample $\bS = [\bS_1,\dots, \bS_n ]\sim \calO$ and we give the algorithm as input $\bS$, $\sum_{i=1}^n \bS_i (\lambda_i^{1/p} y_i)$, the parameters $\lambda_1,\dots, \lambda_n$, an index $j_0 \in [d]$, and a parameter $\gamma \in \R_{\geq 0}$ satisfying
\[ (1-\eps) \left( \sum_{i=1}^n \lambda_i |y_{ij_0}|^p\right)^{1/p} \leq \gamma \leq (1+\eps) \left( \sum_{i=1}^n \lambda_i |y_{ij_0}|^p\right)^{1/p}.\]
\item The algorithm outputs a parameter $\hat{\bbeta} \in \R_{\geq 0}$ which satisfies
\[ (1-\eps) \min_{z \in \R} \left( \sum_{i=1}^n \lambda_i |y_{ij_0} - z|^p\right)^{1/p} - \err \leq \hat{\bbeta} \leq (1+\eps) \min_{z \in \R} \left( \sum_{i=1}^n \lambda_i |y_{ij_0} - z|^p\right)^{1/p} + \err, \]
where $\err \in \R_{\geq 0}$ is an additive error satisfying
\[ \err \leq \eps \left(\sum_{j=1}^d \sum_{i=1}^n \lambda_i|y_{i, j}|^p \right)^{1/p}. \]
\end{itemize}
\end{lemma}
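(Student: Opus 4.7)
The plan is to mirror the proof of Lemma~\ref{lem:p-stable-recovery} line by line, but with Lemma~\ref{lem:single-coord-opt} playing the role that Corollary~\ref{cor:cross-ps} played there. I would prepare $d$ independent single-coordinate optimization sketches from Lemma~\ref{lem:single-coord-opt}, one matrix $\bS_j' \in \R^{s_0 \times n}$ per coordinate $j \in [d]$, each instantiated with accuracy $\eps$ and failure probability $\delta/(2dt)$, so that $s_0 = O(\log(dt/(\eps\delta))/\eps^2)$. With $t = O(\log(d/\delta))$ as in Lemma~\ref{lem:count-sketch} and $\bS = \bC \cdot \mathrm{diag}(\bS_1', \ldots, \bS_d') \cdot P$, where $P$ is the permutation extracting the coordinate vectors $\lambda^{1/p} \circ y_{\cdot, j}$ from the stacked input and $\bC \sim \calC$ is a Count-Min matrix with $m = s_0$, accuracy $\eps$, and failure $\delta/2$, the total row-count is $10 t s_0/\eps^p$, matching the claimed $s$.

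For decoding, given the query coordinate $j_0$ and the parameter $\gamma$, I would run the Count-Min decoder of Lemma~\ref{lem:count-sketch} to extract $t$ noisy copies
\[ \hat{\bz}_{j_0}^{(t')} = \bS_{j_0}' (\lambda^{1/p} \circ y_{\cdot, j_0}) + \bolderr_{j_0}^{(t')} \cdot \bchi_{j_0}^{(t')}, \qquad t' \in [t]. \]
The crucial observation is that this is \emph{exactly} the corrupted-input format handled by Lemma~\ref{lem:single-coord-opt}: an honest $p$-stable sketch of $\lambda^{1/p} \circ y_{\cdot, j_0}$ plus a scalar $\bolderr_{j_0}^{(t')}$ times an i.i.d.\ $p$-stable vector. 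I would feed each $\hat{\bz}_{j_0}^{(t')}$ together with $\lambda_1, \ldots, \lambda_n$ and the supplied $\gamma$ (which by hypothesis $(1\pm\eps)$-approximates $(\sum_i \lambda_i |y_{i j_0}|^p)^{1/p}$, as Lemma~\ref{lem:single-coord-opt} requires) into the single-coordinate optimization algorithm, obtain estimates $\hat{\bbeta}^{(t')}$, and output $\hat{\bbeta} = \mathrm{median}\{\hat{\bbeta}^{(t')} : t' \in [t]\}$.

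For correctness, Lemma~\ref{lem:single-coord-opt} returns a $(1\pm\eps)$-approximation of $(B^p + (\bolderr_{j_0}^{(t')})^p)^{1/p}$, where $B = \min_{z \in \R}(\sum_i \lambda_i |y_{i j_0} - z|^p)^{1/p}$. Since $B \leq (B^p + \err^p)^{1/p} \leq B + \err$, each $\hat{\bbeta}^{(t')}$ is a $(1\pm\eps)$-multiplicative estimate of $B$ up to an additive $O(\bolderr_{j_0}^{(t')})$. Setting $v_j = (\sum_i \lambda_i |y_{ij}|^p)^{1/p}$, so that $\|v\|_p = (\sum_j \sum_i \lambda_i |y_{ij}|^p)^{1/p}$ is precisely the quantity controlling the target additive error, Lemma~\ref{lem:count-sketch} guarantees that at least $t/2$ of the $\bolderr_{j_0}^{(t')}$ satisfy $\bolderr_{j_0}^{(t')} \leq \eps \|v\|_p$, so the median $\hat{\bbeta}$ inherits the required bound $\err \leq \eps \|v\|_p$. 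A union bound over the $t$ calls to Lemma~\ref{lem:single-coord-opt} and the single call to Lemma~\ref{lem:count-sketch} gives failure probability at most $\delta$; constant factors are absorbed by running the sub-routines with accuracy $\eps/c$ for an appropriate constant.

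The main obstacle is structural rather than computational: I must confirm that the noise emerging from the Count-Min buckets truly decomposes as $\bolderr_{j_0}^{(t')} \cdot \bchi_{j_0}^{(t')}$ with $\bchi_{j_0}^{(t')}$ an i.i.d.\ $p$-stable vector independent of $\bS_{j_0}'$. This is inherited from two facts already used in Lemma~\ref{lem:p-stable-recovery}: (i) the final clause of Lemma~\ref{lem:single-coord-opt} asserts that each coordinate of $\bS_{j'}' (\lambda^{1/p} \circ y_{\cdot, j'})$ is $v_{j'}$ times an independent $p$-stable, and (ii) by $p$-stability the sum over colliding indices $j' \neq j_0$ is distributed as $(\sum_{j' \neq j_0} \mathbf{1}\{\bh_{t'}(j') = \bh_{t'}(j_0)\} v_{j'}^p)^{1/p}$ times a fresh $p$-stable vector, which is independent of $\bS_{j_0}'$ because the $\bS_{j'}'$ are mutually independent. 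Once this identification is in place, the remainder of the argument is completely parallel to Lemma~\ref{lem:p-stable-recovery}.
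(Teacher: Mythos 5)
Your overall approach is exactly the paper's: replace Corollary~\ref{cor:cross-ps} with Lemma~\ref{lem:single-coord-opt}, compress via Lemma~\ref{lem:count-sketch}, decode the $t$ noisy copies at coordinate $j_0$, and take the median. The identification of the Count-Min noise as $\bolderr_{j_0}^{(t')}\cdot\bchi_{j_0}^{(t')}$ with $\bchi_{j_0}^{(t')}$ independent $p$-stable is correct (this is already packaged into the guarantee of Lemma~\ref{lem:count-sketch}, which states that $\bolderr_j^{(t')}$ depends only on $\bC$), and the median argument is sound.

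However, there is a concrete error in the space accounting. You instantiate each per-coordinate sketch $\bS_j'$ with failure probability $\delta/(2dt)$, which gives $s_0 = O(\log(dt/(\eps\delta))/\eps^2)$. The resulting total row-count is $10ts_0/\eps^p = O(\log(d/\delta)\cdot\log(d/(\eps\delta))/\eps^{2+p}) = O(\log^2(d/(\eps\delta))/\eps^{2+p})$. This does \emph{not} match the claimed $s = O(\log(d/\delta)\cdot\log(\log d/(\eps\delta))/\eps^{2+p})$ — you are off by a factor of roughly $\log d/\log\log d$. Your assertion that the row-count "matches the claimed $s$" is therefore incorrect. The key observation you are missing — and which distinguishes this lemma from Lemma~\ref{lem:p-stable-recovery} — is that here the algorithm is \emph{given} $j_0$ as input: it only decodes and runs Lemma~\ref{lem:single-coord-opt} on the $t$ copies $\hat{\bz}_{j_0}^{(1)},\dots,\hat{\bz}_{j_0}^{(t)}$, never on the other $d-1$ coordinates. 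So the union bound only needs to cover these $t$ evaluations plus the single Count-Min event, and $\delta/(2t)$ suffices, giving $s_0 = O(\log(t/(\eps\delta))/\eps^2) = O(\log(\log d/(\eps\delta))/\eps^2)$ and the claimed bound. (In Lemma~\ref{lem:p-stable-recovery}, by contrast, the algorithm must output all $d$ estimates $\balpha_1,\dots,\balpha_d$, so the union bound over $d$ coordinates is genuinely necessary there.)
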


\begin{proof}
The proof follows similarly to that of Lemma~\ref{lem:p-stable-recovery}; the only difference is that instead of using the sketch of Corollary~\ref{cor:cross-ps}, we use the sketch of Lemma~\ref{lem:single-coord-opt}. For completeness, we describe the distribution $\calO$ over $s \times (nd)$ matrices by giving a procedure for sampling $\bS \sim \calO$:
\begin{itemize}
\item We let $\calM$ be the distribution over $s_0 \times n$ matrices from Lemma~\ref{lem:single-coord-opt} with accuracy $\eps$ and failure probability $\delta /(2t)$, where $s_0 = O(\log(t/(\eps \delta))/\eps^2)$. We take $d$ independent samples $\bS_1', \dots, \bS_d' \sim \calM$. Note that even though we take $d$ independent samples, we will only require that the sketch $t$ evaluations of the $\bS_{j_0} \sim \calM$ succeed (hence, we amplify the error probability to $\delta / (2t)$, as opposed to $\delta / (2td)$ as in Lemma~\ref{lem:p-stable-recovery}).
\item We sample $\bC \sim \calC$ as in Lemma~\ref{lem:count-sketch} with $m = s_0$, where we set the accuracy parameter $\eps/2$ and failure probability $\delta /2$. Recalling the definition of $P$ (see Item 2 in the proof of Lemma~\ref{lem:p-stable-recovery}, we let
\[ \bS = \bC \cdot \diag(\bS_1', \dots, \bS_d') \cdot P. \]
\end{itemize}
Similarly to the proof of Lemma~\ref{lem:p-stable-recovery}, $\bS$ may be interpreted as applying the sketch of Lemma~\ref{lem:count-sketch} to $d$ vectors in $\R^{s_0}$, each $j \in [d]$ of which is an independent sketch $\bS_j' (\lambda^{1/p}\circ y_{\cdot j}) \in \R^{s_0}$, where $\bS_j' \sim \calM$ is the sketch of Lemma~\ref{lem:single-coord-opt}. Again, we consider the collection of $d$ vectors $\bz_j = \bS_j' (\lambda^{1/p} \circ y_{\cdot, j}) \in \R^{s_0}$, for all $j \in [d]$, and by Lemma~\ref{lem:single-coord-opt}, every $j \in [d]$ has $\bz_{j} \sim \|\lambda^{1/p} \circ y_{\cdot, j}\|_p \cdot \bPsi_{j} \in \R^{s_0}$, where $\bPsi_{j}$ is an independent, $p$-stable random vector. Writing $v \in \R^{d}$ by $v_{j} = \|\lambda^{1/p} \circ y_{\cdot, j}\|_p$, and we apply the algorithm of Lemma~\ref{lem:count-sketch} and focus on the $t$ vectors $\hat{\bz}_{j_0}^{(1)},\dots, \hat{\bz}_{j_0}^{(t)} \in \R^{s_0}$ which satisfy
\[ \hat{\bz}_{j_0}^{(t')} = \bS_{j_0}' (\lambda^{1/p} \circ y_{\cdot, j_0}) + \bolderr_{j_0}^{(t')} \cdot \bchi_{j, t'}. \]
We apply the algorithm of Lemma~\ref{lem:single-coord-opt} to each of the vectors $\hat{\bz}_{j_0} \in \R^{s}$, while giving as input the parameter $\gamma$ to obtain the estimate $\hat{\bbeta}^{(1)}, \dots, \hat{\bbeta}^{(t)}$. Then, we set $\hat{\bbeta} = \median \{ \hat{\bbeta}^{(t')} : t' \in [t] \}$. Similarly to the proof of Lemma~\ref{lem:p-stable-recovery}, $\hat{\bbeta}$ provides the desired approximation guarantees.
\end{proof}

\subsection{Proof of Lemma~\ref{lem:main-sketch-lemma}}\label{subsec:proof-of-lemma}

The distribution over matrices $\calS$ will be given by utilizing the sketches in Lemmas~\ref{lem:p-stable-recovery} and~\ref{lem:p-stable-recovery-opt}. In particular, let $\calR$ denote the distribution over $s_1 \times (nd)$ matrices from Lemma~\ref{lem:p-stable-recovery} where $s_1 = O(\log^2(d/(\delta_1 \eps_1)) / \eps_1^{2+p})$ using failure probability at most $\delta_1 = \delta \eps 2^p/ 3$ and accuracy $\eps_1$, and let $\calO$ denote the distribution over $s_2 \times (nd)$ matrices from Lemma~\ref{lem:p-stable-recovery-opt} where $s_2 = O(\log(d/\delta_2)\log(\log d/(\eps_2 \delta_2))/\eps_2^{2+p})$ using failure probability at most $\delta_2 = \delta /3$ and accuracy $\eps_2$. Specifically, we let
\[ \eps_1 = c_1 \cdot \eps^{1 + 2/p} \cdot \delta^{1+1/p} \qquad\text{and} \qquad \eps_2 = c_2 \cdot \eps^{1+1/p} \cdot \delta^{1/p},\]
for small enough constant $c_1, c_2 > 0$. Thus, the total space
\begin{align}
s = s_1 + s_2 = O\left(\frac{\log^2(d/(\eps\delta))}{\eps^{4 + p + 4/p} \delta^{3 + p + 2/p}} \right) + O\left( \frac{\log(d/\delta) \log(\log d/(\eps \delta))}{\eps^{3 + p + 2/p}} \right). \label{eq:space-complexity}
\end{align}
A draw $\bS = [\bS_1, \dots, \bS_n] \sim \calS$ is an $(s_1 + s_2) \times (nd)$ matrix which is generated as follows:
\begin{itemize}
\item We first sample a collection $\bu_1,\dots, \bu_d \sim \Exp(1)$ independently at random. We let $\bU = \diag(1/\bu_{1}^{1/p}, \dots, 1/\bu_d^{1/p})$ be the $d \times d$ matrix with entries $1/\bu_j^{1/p}$ across the diagonal, and let $\bV = \diag(\bU, \dots, \bU)$ denote the $(nd) \times (nd)$ matrix placing $n$ copies of the matrix $\bU$ in the diagonals.
\item Then, we sample $\bS^{(1)} = [\bS^{(1)}_1,\dots, \bS^{(1)}_n] \sim \calR$ and $\bS^{(2)} = [\bS^{(2)}_1,\dots,\bS^{(2)}_n] \sim \calO$. The final matrix is given by
\[ \bS = \left[ \begin{array}{c} \bS^{(1)} \\ \bS^{(2)} \end{array}\right] \cdot \bV. \]
Intuitively, the above operation consists of keeping two linear sketches: one of $\sum_{i=1}^n \bS^{(1)}_i (\lambda_i^{1/p} \by_i)$, and one of $\sum_{i=1}^n \bS^{(2)}_i (\lambda_i^{1/p} \by_i)$, where $\by_1,\dots, \by_n \in \R^d$ are given by letting, for each $j \in [d]$, 
\begin{align}
\by_{ij} &= \frac{1}{\bu_{j}^{1/p}} \cdot x_{ij}. \label{eq:transform-x-to-y}
\end{align}
\end{itemize}
We note that $\sum_{i=1}^n \lambda_i x_i = 0$ implies $\sum_{i=1}^n \lambda_i \by_i = 0$, as we simply re-scaled all coordinates of $x_{\cdot, j}$ by the same value. We now specify the description of the algorithm for outputting the triple $(\hat{\bj}, \hat{\balpha}, \hat{\bbeta})$, when given as input $\bS$ and $\sum_{i=1}^n \bS_i (\lambda_i^{1/p} x_i)$. We will proceed in the following way:
\begin{enumerate}
\item We will first apply the algorithm of Lemma~\ref{lem:p-stable-recovery} with the matrix $\bS^{(1)}$ and the vector $\sum_{i=1}^n \bS^{(1)}_i (\lambda_i^{1/p} \by_i)$. Notice that this algorithm outputs a sequence of $d$ numbers, $\balpha_1,\dots, \balpha_d \in \R_{\geq 0}$. We will set
\begin{align}
\hat{\bj} \eqdef \argmax_{j \in [d]} \balpha_j \qquad \text{and}\qquad \hat{\balpha} \eqdef \balpha_{\hat{\bj}} \cdot \bu_{\hat{\bj}}^{1/p}.  \label{eq:find-max}
\end{align}
\item Then, we will apply the algorithm of Lemma~\ref{lem:p-stable-recovery-opt} with the matrix $\bS^{(2)}$ and the vector $\sum_{i=1}^n \bS^{(2)} (\lambda_i^{1/p} \by_i)$, parameters $\lambda_1,\dots, \lambda_n$, and the input $\hat{\bj}$ and $\gamma = \hat{\balpha}$. The algorithm produces an output $\bbeta$, and we let
\[ \hat{\bbeta} \eqdef \bbeta \cdot \bu_{\bj}^{1/p}. \]
\end{enumerate}

We now consider the following event $\calE$, which is a function of the random variables $\bu_1,\dots, \bu_d$ and $x_1,\dots, x_n\in \R^d$. The goal is to show that (1) the event $\calE$ is satisfied for a random $\bu_1,\dots, \bu_d \sim \Exp(1)$ with high probability, and (2) if the event $\calE$ is satisfied, then $(\hat{\bj}, \hat{\balpha}, \hat{\bbeta})$ satisfies the conditions of Lemma~\ref{lem:main-sketch-lemma}.
\begin{definition}\label{def:event-e}
Consider a fixed setting of $u_1,\dots, u_d \in \R_{> 0}$, and let
\[ w_j \eqdef \frac{1}{u_j} \cdot \sum_{i=1}^n \lambda_i |x_{ij}|^p, \qquad j^* \eqdef \argmax_{j \in [d]} w_j, \qquad\text{and}\qquad j^{**} \eqdef \argmax_{j \in [d] \setminus \{ j^*\}} w_j. \] 
Finally, we also let
\[ \eta \eqdef \eps\delta 2^{-p} / 1000. \]
We say the event $\calE$ is satisfied for the setting of $u_1,\dots, u_d$ when the following conditions hold:
\begin{align}
w_{j^*} &\geq \eta \sum_{j=1}^d w_j \label{eq:E-condition-2} \\
w_{j^*} &\geq (1+\eta) w_{j^{**}}. \label{eq:E-condition-3}
\end{align}
\end{definition}

\begin{claim}\label{cl:claim-uno}
Consider a random setting of $\bu_1,\dots, \bu_d \sim \Exp(1)$, and let $\bj^* \in [d]$ be the random variable (which depends on $\bu_1,\dots, \bu_d$) which sets $\bj^*$ to the maximum $\bw_{j}$, as in Definition~\ref{def:event-e}. Then, $\bj^*$ is drawn from $\calD$.
\end{claim}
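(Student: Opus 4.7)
The plan is to reduce the argmax of $\bw_j = a_j/\bu_j$ (where for brevity I set $a_j \eqdef \sum_{i=1}^n \lambda_i |x_{ij}|^p$, so that $\Pr_{\bj \sim \calD}[\bj = j] = a_j/Z$ with $Z = \sum_{j=1}^d a_j$) to the argmin of a collection of independent exponential random variables with rates $a_1, \dots, a_d$, for which the distribution of the argmin is a classical identity. This is the same principle underlying the Gumbel-max trick and the $\ell_p$-sampling sketch of~\cite{JW18} on which Lemma~\ref{lem:p-stable-recovery} builds.

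First I would rewrite $\bj^* = \argmax_{j \in [d]} a_j/\bu_j = \arg\min_{j \in [d]} \bu_j/a_j$; this is well-defined (the maximum is unique) with probability $1$ since the $\bu_j$ are continuously distributed, and one can ignore indices with $a_j = 0$ since they contribute $0$ both to the sampling probability under $\calD$ and to $\bw_j$. Next, I would apply the scaling property of the exponential distribution: since $\bu_j \sim \Exp(1)$, the rescaled variable $\bv_j \eqdef \bu_j/a_j$ is distributed as $\Exp(a_j)$, and the $\bv_j$'s remain mutually independent. Therefore, it suffices to show that for independent $\bv_j \sim \Exp(a_j)$, we have $\Pr[\arg\min_{j \in [d]} \bv_j = j] = a_j/Z$.

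The last step is the classical computation of the argmin of independent exponentials. I would write $\Pr[\bj^* = j] = \Pr[\bv_j < \min_{j' \neq j} \bv_{j'}]$, use the fact that $\min_{j' \neq j} \bv_{j'} \sim \Exp\bigl(\sum_{j' \neq j} a_{j'}\bigr)$ (the minimum of independent exponentials has rate equal to the sum of rates) and is independent of $\bv_j$, and finally apply the elementary identity $\Pr[\bX < \bY] = \alpha/(\alpha+\beta)$ for independent $\bX \sim \Exp(\alpha)$ and $\bY \sim \Exp(\beta)$. This yields $\Pr[\bj^* = j] = a_j/\bigl(a_j + \sum_{j' \neq j} a_{j'}\bigr) = a_j/Z$, which is exactly the definition of $\calD$.

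There is essentially no technical obstacle here — the entire argument is a standard consequence of the scaling and minimum-of-rates properties of the exponential distribution. The only mild care is in handling the measure-zero event of ties and the degenerate case where some $a_j = 0$, both of which are easily dispatched.
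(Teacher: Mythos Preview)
Your proposal is correct and takes essentially the same approach as the paper: the paper simply states that this is ``a simple identity of the exponential distribution'' and defers to Fact~5.3 of \cite{CJLW22} for the computation, which is exactly the scaling-plus-minimum-of-exponentials argument you spell out. Your version is a self-contained exposition of that cited fact, with the minor edge cases (ties, $a_j=0$) handled appropriately.
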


\begin{proof}
The proof is a simple identity of the exponential distribution. See Fact~5.3 of \cite{CJLW22} for the complete computation.
\end{proof}

\begin{claim}\label{cl:claim-uno-y-medio}
If we sample $\bu_1,\dots, \bu_{d} \sim \Exp(1)$, the event $\calE$ is satisfied with probability at least $1 - \eps\delta 2^{-p}/3$.
\end{claim}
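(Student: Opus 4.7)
The plan is to bound the failure probabilities of the two conditions in $\calE$ separately and then take a union bound. Throughout I condition on the identity of $\bj^*$; by the memoryless property of exponentials (the same identity underlying Claim~\ref{cl:claim-uno}), the conditional distribution is $\bu_j \sim \Exp(A/a_j)$, where $A = \sum_j a_j$ and $a_j = \sum_i \lambda_i |x_{ij}|^p$, while for each $j' \neq j$ we may write $\bu_{j'} = (a_{j'}/a_j)\bu_j + \bE_{j'}$ where the $\bE_{j'}$ are independent $\Exp(1)$ variables, also independent of $\bu_j$.

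For condition (\ref{eq:E-condition-3}), substituting the above into $w_{j'}/w_j = (a_{j'}/a_j)(\bu_j/\bu_{j'})$ gives $w_{j'}/w_j = 1/(1 + (a_j/a_{j'})(\bE_{j'}/\bu_j))$, so the bad event $w_{j'} > w_j/(1+\eta)$ is equivalent to $\bE_{j'} < \eta(a_{j'}/a_j)\bu_j$, which has conditional probability at most $\eta(a_{j'}/a_j)\bu_j$. A union bound over $j' \neq j$ and integration against $\bu_j \sim \Exp(A/a_j)$ yield conditional failure probability at most $\eta(A - a_j)/A \leq \eta$. Averaging over $\bj^* \sim \calD$ (Claim~\ref{cl:claim-uno}) gives $\Pr[\text{(\ref{eq:E-condition-3}) fails}] \leq \eta \leq \eps\delta 2^{-p}/6$.

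For condition (\ref{eq:E-condition-2}), set $v = w_j$ under the same conditioning and write
\[ W' \;:=\; \sum_{j' \neq j} w_{j'} \;=\; \sum_{j' \neq j} \frac{v}{1 + v\bZ_{j'}}, \qquad \bZ_{j'} = \bE_{j'}/a_{j'} \sim \Exp(a_{j'}) \text{ independent}. \]
The condition is equivalent to $W' \leq v(1-\eta)/\eta$. I would bound its failure probability by splitting each summand according to whether $v\bZ_{j'} \leq 1$ or $v\bZ_{j'} > 1$: the ``large'' contributions ($v \bZ_{j'} \leq 1$) total at most $v \cdot |\{j' : \bZ_{j'} \leq 1/v\}|$ whose mean is at most $v \cdot (A-a_j)/v = A - a_j$, while the ``small'' contributions ($v \bZ_{j'} > 1$) are bounded by $\sum_{j'} \bZ_{j'}^{-1} \cdot \mathbf{1}[\bZ_{j'} > 1/v]$, with mean $\sum_{j'} a_{j'} E_1(a_{j'}/v)$ (where $E_1$ denotes the exponential integral). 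Applying Markov's inequality to each piece and integrating over the conditional distribution $v \sim A/\Exp(1)$, using the lower tail estimate $\Pr[v \geq c A] = 1 - e^{-1/c}$ to restrict attention to the regime where $v$ is large, yields the desired bound.

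The main obstacle is condition (\ref{eq:E-condition-2}): the ``small'' contributions are controlled by $\sum_{j'} a_{j'} E_1(a_{j'}/v)$, and since $E_1(a_{j'}/v) \asymp \log(v/a_{j'})$ for small $a_{j'}/v$, the naive Markov bound accumulates a factor scaling with the entropy $\sum_{j'} (a_{j'}/A)\log(A/a_{j'})$ of $\calD$, which can be as large as $\log d$. Taming this factor requires carefully leveraging the upper tail of $v$ (which places $v$ well above $A$ with probability proportional to $\eta$) so that the rescaling $v\bZ_{j'}$ pushes the integrand into the regime where $E_1$ is bounded. A union bound over the two conditions completes the proof and gives $\Pr[\calE] \geq 1 - \eps\delta 2^{-p}/3$.
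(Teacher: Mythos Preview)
Your treatment of condition (\ref{eq:E-condition-3}) is correct and clean: conditioning on $\bj^* = j$ via the memoryless decomposition $\bu_{j'} = (a_{j'}/a_j)\bu_j + \bE_{j'}$ and then union-bounding over $j' \neq j$ gives failure probability at most $\eta$, exactly as you compute.

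The argument for condition (\ref{eq:E-condition-2}), however, is incomplete. You have correctly identified the obstacle: the conditional expectation of $\sum_{j'\neq j} w_{j'}$ given $w_{j^*} = v$ picks up a logarithmic factor (through $E_1(a_{j'}/v) \asymp \log(v/a_{j'})$), and applying Markov to that expectation yields a bound that degrades by an entropy term which can be as large as $\log d$. But the paragraph that follows (``Taming this factor requires carefully leveraging the upper tail of $v$\dots'') is not a proof. In particular, your observation that $v$ exceeds a large multiple of $A$ only with probability $\Theta(\eta)$ does not help: you need the \emph{complement} of the bad event to hold with probability $1 - O(\eta)$, so restricting to the rare regime where $v$ is large is moving in the wrong direction. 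The first-moment route you have set up appears to be genuinely stuck at the $\log d$ loss; eliminating it requires a different decomposition of $\sum_{j'} w_{j'}$ (for instance, a level-set argument on the $w_{j'}$ coupled with control on how many indices land in each level), not merely a more careful integration over $v$.

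For comparison, the paper does not give a self-contained argument here at all: it simply invokes Lemma~5.5 of \cite{CJLW22}, which packages precisely the two statements (\ref{eq:E-condition-2}) and (\ref{eq:E-condition-3}) with failure probability $4\eta$. So your proposal is attempting something strictly harder than what the paper does, and the hard half is not finished.
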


\begin{proof}
This follows from Lemma~5.5 of \cite{CJLW22}. Namely, consider the vector $z \in \R^d$ where $z_j = \sum_{i=1}^n \lambda_i |x_{ij}|^p$. Lemma~5.5 implies $|z_{j^*}| / \bu_{j^*} \geq \eta \| z\|_1$ and $|z_{j^*}| / \bu_{j^*} \geq (1+\eta) |z_{j^{**}}| / \bu_{j^{**}}$ with probability at least $1 - 4\eta$, giving us (\ref{eq:E-condition-2}) and (\ref{eq:E-condition-3}).
\end{proof}

\begin{claim}\label{cl:claim-dos}
Let $u_1,\dots, u_d \in \R_{>0}$ be a fixed setting where event $\calE$ is satisfied. Then, let $y_1,\dots, y_n$ be given from $x_1,\dots, x_n$ by (\ref{eq:transform-x-to-y}), and run the algorithm of Lemma~\ref{lem:p-stable-recovery} while given access to $\bS^{(1)}$ and $\sum_{i=1}^n \bS^{(1)}_i (\lambda_i^{1/p} y_i)$, to output $\balpha_1,\dots, \balpha_d$. Then, with probability at least $1 - \delta_1$, we have
\begin{align*}
\argmax_{j \in [d]} \balpha_j &= j^* \qquad \text{and}\\
(1-\eps_2) \left( \sum_{i=1}^n \lambda_i |y_{ij^*}|^p\right)^{1/p}&\leq \balpha_{j^*} \leq (1+\eps_2) \left( \sum_{i=1}^n \lambda_i |y_{ij^*}|^p\right)^{1/p} 
\end{align*}
\end{claim}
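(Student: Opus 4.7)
The plan is to invoke Lemma~\ref{lem:p-stable-recovery} as a black box on the sketched vectors $\by_1,\dots,\by_n$ with accuracy $\eps_1$ and failure probability $\delta_1$, and then do a deterministic calculation conditioned on both the high-probability event from Lemma~\ref{lem:p-stable-recovery} and on event $\calE$. Writing $w_j \eqdef \sum_{i=1}^n \lambda_i |y_{ij}|^p$ (so that $w_j$ as defined here agrees with Definition~\ref{def:event-e} by the transformation (\ref{eq:transform-x-to-y})), Lemma~\ref{lem:p-stable-recovery} gives, with probability at least $1-\delta_1$, estimates satisfying
\[ (1-\eps_1)\,w_j^{1/p} - \err \;\le\; \balpha_j \;\le\; (1+\eps_1)\,w_j^{1/p} + \err \]
for every $j \in [d]$ simultaneously, where $\err \le \eps_1 \bigl(\sum_{j'} w_{j'}\bigr)^{1/p}$.

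The key step is to use the two conditions defining $\calE$ to convert the additive $\err$ into a small multiplicative fraction of $w_{j^*}^{1/p}$. Condition (\ref{eq:E-condition-2}) gives $\sum_{j'} w_{j'} \le w_{j^*}/\eta$, hence $\err \le \eps_1 \eta^{-1/p} w_{j^*}^{1/p}$. With $\eps_1 = c_1 \eps^{1+2/p}\delta^{1+1/p}$ and $\eta = \Theta(\eps\delta 2^{-p})$, a direct computation shows $\eps_1/\eta^{1/p} = O(\eps^{1+1/p}\delta) \le \eps_2/10$ provided $c_1$ is chosen small enough relative to $c_2$.

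Armed with this, the two conclusions follow quickly. For the argmax claim, for any $j \neq j^*$, condition (\ref{eq:E-condition-3}) gives $w_j^{1/p} \le (1+\eta)^{-1/p} w_{j^*}^{1/p}$, so using $(1+\eta)^{1/p} \ge 1 + \eta/(2p)$ together with $\eps_1, \err/w_{j^*}^{1/p} \ll \eta/p$ (which is exactly what the parameter choice buys us) yields
\[ \balpha_j \;\le\; \frac{1+\eps_1}{(1+\eta)^{1/p}}\,w_{j^*}^{1/p} + \err \;<\; (1-\eps_1)\,w_{j^*}^{1/p} - \err \;\le\; \balpha_{j^*}, \]
so $\argmax_j \balpha_j = j^*$. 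For the approximation quality, combining the multiplicative $(1\pm\eps_1)$ factor with the additive error $\err \le (\eps_2/10)\,w_{j^*}^{1/p}$ yields $(1\pm\eps_2)\,w_{j^*}^{1/p}$ as required.

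The main obstacle is really just the parameter balancing: one has to verify that the chosen values $\eps_1 = c_1\eps^{1+2/p}\delta^{1+1/p}$, $\eta = \Theta(\eps\delta 2^{-p})$, and $\eps_2 = c_2\eps^{1+1/p}\delta^{1/p}$ simultaneously make (i) $\eps_1/\eta^{1/p} \le \eps_2$ (for the approximation bound) and (ii) $\eps_1/\eta^{1/p} + \eps_1 \ll \eta/p$ (for the strict inequality needed for argmax). Both inequalities reduce to $\eps_1 \ll \eta^{1+1/p}$ up to constants depending on $p$, which is exactly the ratio $\eps^{1+2/p}\delta^{1+1/p}$ vs.\ $(\eps\delta)^{1+1/p}$, i.e.\ an $O(\eps^{1/p})$ slack absorbed into $c_1$.
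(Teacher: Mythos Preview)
Your proposal is correct and follows essentially the same route as the paper's proof: invoke Lemma~\ref{lem:p-stable-recovery}, use condition~(\ref{eq:E-condition-2}) to convert the additive $\err$ into $(\eps_1/\eta^{1/p})\,w_{j^*}^{1/p}$, use condition~(\ref{eq:E-condition-3}) together with $(1+\eta)^{1/p}\geq 1+\eta/(2p)$ for the argmax inequality, and verify $\eps_1(1+\eta^{-1/p})\leq \eps_2$ for the approximation bound. Your parameter-balancing discussion (reducing both requirements to $\eps_1 \ll \eta^{1+1/p}$ with $O(\eps^{1/p})$ slack) is slightly more explicit than the paper's, but the argument is the same.
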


\begin{proof}
We first note that applying Lemma~\ref{lem:p-stable-recovery}, as well as condition (\ref{eq:E-condition-2}), with probability at least $1 - \delta_1$, the algorithm outputs $\balpha_1, \dots, \balpha_d$ satisfying
\begin{align}
(1-\eps_1) w_j^{1/p} - \frac{\eps_1}{\eta^{1/p}} \cdot w_{j^*}^{1/p} \leq \balpha_j \leq (1+\eps_1) w_j^{1/p} + \frac{\eps_1}{\eta^{1/p}} \cdot w_{j^*}^{1/p}.\label{eq:guarantees-on-balpha}
\end{align}
In other words, we use (\ref{eq:E-condition-3}) to say
\begin{align*}
\left(1-\eps_1 - \frac{\eps_1}{\eta^{1/p}}\right) \cdot w_{j^*}^{1/p} \leq \balpha_{j^*} \qquad\text{and, for all $j \neq j^*$,}\quad \balpha_{j} \leq \balpha_{j^{**}} &\leq (1+\eps_1) w_{j^{**}}^{1/p} + \frac{\eps_1}{\eta^{1/p}} \cdot w_{j^{*}}^{1/p} \\
				&\leq \left(\frac{1}{1 + \eta /(2p)} + \eps_1 + \frac{\eps_1}{\eta^{1/p}} \right) w_{j^*}^{1/p}.
\end{align*}
Hence, the fact $\argmax_{j \in [d]} \balpha_j = j^*$ follows from the fact
\[ \frac{1}{1 + \eta/(2p)} + \eps_1 + \frac{\eps_1}{\eta^{1/p}} \leq 1  - \eps_1 - \frac{\eps_1}{\eta^{1/p}} \]
when $\eps_1 \leq c \eta^{1+1/p} / p$, for a small constant $c > 0$. The condition on $\balpha_{j^*}$ then follows from (\ref{eq:guarantees-on-balpha}) since $\eps_1 \left( 1 + 1/\eta^{1/p}\right) \leq \eps_2$.
\end{proof}

\begin{claim}\label{cl:claim-tres}
Let $u_1,\dots, u_d \in \R_{> 0}$ be a fixed setting where $\calE$ is satisfied. Then, let $y_1,\dots, y_n$ be given from $x_1,\dots, x_n$ according to (\ref{eq:transform-x-to-y}), and notice $\sum_{i=1}^n \lambda_i y_i = 0$. Run the algorithm of Lemma~\ref{lem:p-stable-recovery-opt} while given access to $\bS^{(2)}$, $\sum_{i=1}^n \bS^{(2)}_i (\lambda_i^{1/p} y_i)$, the parameters $\lambda_1,\dots, \lambda_n$,$j^*$, and a parameter $\gamma \in \R_{> 0}$ (which is set to $\hat{\balpha}$) satisfying
\[(1-\eps_2) \left(\sum_{i=1}^n \lambda_i |y_{ij^*}|^p \right)^{1/p} \leq \gamma \leq (1+\eps_2) \left( \sum_{i=1}^n \lambda_i |y_{ij^*}|^p\right)^{1/p}. \]
With probability at least $1 - \delta_2$, the algorithm outputs a parameter $\bbeta \in \R_{\geq 0}$ which satisfies
\[ (1-\eps) \min_{z \in \R} \left( \sum_{i=1}^n \lambda_i |y_{ij^*} - z|^p\right)^{1/p} \leq \bbeta \leq (1+\eps) \min_{z \in \R} \left( \sum_{i=1}^n \lambda_i |y_{ij^*} - z|^p \right)^{1/p} \]
\end{claim}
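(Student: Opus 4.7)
The plan is to invoke Lemma~\ref{lem:p-stable-recovery-opt} directly on the inputs $\bS^{(2)}$, $\sum_{i=1}^n \bS^{(2)}_i (\lambda_i^{1/p} y_i)$, the weights $\lambda_1,\dots,\lambda_n$, the index $j_0 = j^*$, and the given parameter $\gamma$, and then to convert the resulting additive error into a multiplicative one using event $\calE$ together with the centering inequality (\ref{eq:ineq}).

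First I would verify the hypotheses of Lemma~\ref{lem:p-stable-recovery-opt}. The vectors $y_1,\dots,y_n$ are centered because $y_{ij} = x_{ij}/\bu_j^{1/p}$ is a coordinatewise rescaling of $x$, so $\sum_{i=1}^n \lambda_i y_i = 0$ follows from $\sum_{i=1}^n \lambda_i x_i = 0$; the approximation guarantee on $\gamma$ with respect to $\bigl(\sum_i \lambda_i |y_{ij^*}|^p\bigr)^{1/p}$ is exactly the stated hypothesis. Writing $\mathrm{OPT} = \min_{z \in \R} \bigl(\sum_{i=1}^n \lambda_i |y_{ij^*} - z|^p\bigr)^{1/p}$, the lemma then yields, with probability at least $1 - \delta_2$, an estimate $\bbeta$ satisfying
\[ (1-\eps_2)\,\mathrm{OPT} - \err \;\leq\; \bbeta \;\leq\; (1+\eps_2)\,\mathrm{OPT} + \err, \]
where $\err \leq \eps_2 \bigl(\sum_{j=1}^d \sum_{i=1}^n \lambda_i |y_{ij}|^p\bigr)^{1/p} = \eps_2 \bigl(\sum_{j=1}^d w_j\bigr)^{1/p}$ (using the notation $w_j$ from Definition~\ref{def:event-e}).

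The main obstacle, which is the heart of the claim, is converting this additive $\err$ into a multiplicative $(1\pm\eps)$ error on $\mathrm{OPT}$. I would chain two facts. Condition (\ref{eq:E-condition-2}) of event $\calE$ gives $\sum_{j=1}^d w_j \leq w_{j^*}/\eta$, and therefore $\err \leq \eps_2 \eta^{-1/p} w_{j^*}^{1/p}$. On the other hand, the centering inequality (\ref{eq:ineq}), applied to the centered sequence $(y_{ij^*})_{i \in [n]}$, gives $\mathrm{OPT}^p \geq 2^{-p} \sum_{i=1}^n \lambda_i |y_{ij^*}|^p = 2^{-p} w_{j^*}$, so $\mathrm{OPT} \geq \tfrac{1}{2} w_{j^*}^{1/p}$. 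Dividing yields $\err/\mathrm{OPT} \leq 2 \eps_2 / \eta^{1/p}$. Plugging in $\eta = \eps\delta 2^{-p}/1000$ and $\eps_2 = c_2 \eps^{1+1/p}\delta^{1/p}$, this ratio simplifies to $O(c_2 \eps)$ (since the $\eps^{1/p}\delta^{1/p}$ factors cancel and $2^{p/p}$ is a constant for $p\in[1,2]$), which is at most $\eps/4$ for $c_2$ sufficiently small. Combined with the $(1\pm\eps_2)$ factor already present, the total multiplicative error is at most $(1\pm\eps)$ on $\mathrm{OPT}$, yielding the claimed conclusion.
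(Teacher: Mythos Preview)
Your proposal is correct and follows essentially the same argument as the paper: invoke Lemma~\ref{lem:p-stable-recovery-opt}, use condition~(\ref{eq:E-condition-2}) of event $\calE$ to bound $\sum_j w_j \leq w_{j^*}/\eta$, and then apply the centering inequality~(\ref{eq:ineq}) to convert the resulting additive error $\eps_2 \eta^{-1/p} w_{j^*}^{1/p}$ into the multiplicative bound $2\eps_2/\eta^{1/p}$ on $\mathrm{OPT}$. The paper concludes by noting $\eps_2 + 2\eps_2/\eta^{1/p} \leq \eps$, which is exactly your final parameter check.
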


\begin{proof}
The proof proceeds similarly to above. Namely, the algorithm of Lemma~\ref{lem:p-stable-recovery-opt} outputs a parameter $\hat{\bbeta}$ which is a multiplicative $(1\pm \eps_2)$-approximation to $\min_{z \in \R} ( \sum_{i=1}^n |y_{ij^*} - z|^p )^{1/p}$.
with an additive error of $\eps_2 ( \sum_{j=1}^d \sum_{i=1}^n \lambda_i| y_{i,j}|^p)^{1/p}$. Hence, we will upper bound the additive error.
\begin{align*}
\sum_{j=1}^d \sum_{i=1}^n \lambda_i |y_{i, j}|^p &= \sum_{j=1}^d w_{j} \leq \frac{1}{\eta} \cdot w_{j^*},
\end{align*}
and the right-most expression is exactly $\frac{1}{ \eta} \sum_{i=1}^n \lambda_i |y_{ij^*}|^p$. Specifically, the additive error incurred is at most
\[ \frac{\eps_2}{\eta^{1/p}} \left(\sum_{i=1}^n \lambda_i |y_{ij^*}|^p \right)^{1/p} \leq \frac{2\eps_2}{\eta^{1/p}} \cdot \min_{z \in \R}\left(\sum_{i=1}^n \lambda_i |y_{ij^*} - z|^p \right)^{1/p}.\]
In particular, since $\eps_2 + 2\eps_2 / \eta^{1/p} \leq \eps$, we obtain the desired claim.
\end{proof}

We note that combining Claims~\ref{cl:claim-uno},~\ref{cl:claim-uno-y-medio},~\ref{cl:claim-dos}, and~\ref{cl:claim-tres} will give the desired lemma. With probability $1 - \eps\delta 2^{-p}/3$, event $\calE$ is satisfied by Claim~\ref{cl:claim-uno-y-medio}, and hence Claim~\ref{cl:claim-dos} imply that in the triple $(\hat{\bk}, \hat{\balpha}, \hat{\bbeta})$, we have $\hat{\bj} = j^*$ with probability at least $1 - \delta_1 = 1 - \eps\delta 2^{-p} / 3$. Furthermore, we always have, by definition of $\by$,
\begin{align*}
\bu_{j^*}^{1/p} \left( \sum_{i=1}^n \lambda_i |\by_{ij^*}|^p\right)^{1/p} &= \left(\sum_{i=1}^n \lambda_i|x_{ij^*}|^p \right)^{1/p} \\
\bu_{j^*}^{1/p} \min_{z \in \R} \left( \sum_{i=1}^n \lambda_i |\by_{ij^*} - z|^p \right)^{1/p} &= \min_{z \in \R} \left(\sum_{i=1}^n \lambda_i|x_{ij^*} - z|^p \right)^{1/p}.
\end{align*}
In particular, Claim~\ref{cl:claim-dos} implies $\hat{\balpha}$ is a $(1\pm\eps_2)$-multiplicative approximation. This establishes the approximation requirement of $\hat{\balpha}$, but it also allows us to utilize it in Claim~\ref{cl:claim-tres}, which implies the desired approximation on $\hat{\bbeta}$. Finally, the probability that $\hat{\bj} = j^*$ is at least the probability that event $\calE$ is satisfied, and the sketch of Lemma~\ref{lem:p-stable-recovery} succeeds, which both occur with probability at least $1 - \eps 2^{-p}$ by the setting of the failure probabilities on these events. By Claim~\ref{cl:claim-uno}, the variation distance of the random variable $\hat{\bj}$ from $\calD$ is at most $\eps 2^{-p}$.

\subsection{Storing the randomness}\label{sec:randomness}

The previous section described the $\ell_p^p$-median sketch assuming infinite precision on the random variables, as well as complete access to the randomness used. Here, we remark on how to: 1) utilize bounded precision on the entries of $\bS$ (to avoid storing infinite-precision real numbers), and 2) utilize Nisan's pseudorandom generator in order to maintain $\bS$ implicitly. This technique, first used in \cite{I06} for the $\ell_p$-sketch, is by now standard in the streaming literature. 

\paragraph{Bounded Precision} We assume that the input vectors $x_1,\dots, x_n \in \R^d$ and the weights $\lambda_1,\dots, \lambda_n$ have polynomially bounded entries (i.e., every number lies $[-\poly(nd), \poly(nd)]$ and is specified by $O(\log(nd))$ bits). The only infinite-precision numbers are the $p$-stable random variables and the exponential random variables, so we will discretize them so they require only $O(\log(nd))$ random bits to generate them. Similarly to Claim~3 of \cite{I06} our algorithm's output, as a function of the $p$-stable and exponential random variables, always considers polynomially bounded entries and has bounded derivatives in all but a $1/\poly(nd)$-fraction of the probability space. With probability $1 / \poly(nd)$, the $p$-stable random variables may become too large or exponentials too small. Furthermore, for any fixed setting of the $p$-stable random variables, a discontinuity in the algorithm's output occurs when the exponential random variables cause $\bk$ may suddenly switch, when the maximum of $\balpha_k$ in (\ref{eq:find-max}) is not unique. When we discretize the exponential distribution, the probability this occurs is at most $1/\poly(nd)$.

\paragraph{Using Nisan's Pseudorandom Generator} Instead of utilizing $s nd \cdot O(\log(nd))$ random bits to generate the matrix $\bS$, we show that we may utilize Nisan's pseudorandom generator. This technique is standard when the sketch is linear: we may re-order the elements of the stream so as to update the algorithm with a finite-state machine which reads the random bits in $O(\log(nd))$-sized chunks. The minor subtlety is that algorithm of Lemma~\ref{lem:main-sketch-lemma} requires access to $\bS$ in order to produce its output $(\bj, \balpha, \bbeta)$, and hence refers back to previous random bits. 

There is a simple fix: for each fixed setting of the input $x_1,\dots, x_n \in (\{0,1\}^{O(\log(nd))})^{nd}$ and weights $\lambda_1,\dots, \lambda_n \in (\{0,1\}^{O(\log(nd))})^{n}$, and $j' \in [d]$ and $y \in \R$ specified by $O(\log(nd))$ bits, there exists a low-space finite state machine which reads the randomness in $O(\log(nd))$ chunks and checks whether $\bj = j'$, and if so, checks that $\balpha$ satisfies the guarantees of Lemma~\ref{lem:main-sketch-lemma}, and accurately estimates $\bbeta$ if $z = y$. The finite state machine receives its input in the order of the coordinates; this way, it can temporarily store $\bu_j$ to compute $\balpha_j$ and find (\ref{eq:find-max}). Furthermore, while processing coordinate $j'$, it checks the cost of using $y$ as a center. Since Nisan's pseudorandom generator fools every such finite state machine, and we try at most $O(d) \cdot \poly(\log d , 1/\eps)$ possible coordinates $j'$ and centers $y$, a union bound will show that the output of the algorithm under true random bits and under Nisan's pseudorandom generator differs by at most $1/\poly(nd/\eps)$ in total variation distance. 

\section{Streaming $(k,p)$-Clustering Costs in $\ell_p$}\label{sec:streaming}

We give the following application of our $\ell_p^p$-median sketch to estimating the $(k,p)$-clustering in $\ell_p$ spaces on a stream for $p \in [1,2]$. The space complexity of our streaming algorithm will be $\poly(\log(nd), k, 1/\eps)$, and will work in insertion-only streams.

\begin{theorem}\label{thm:streaming-alg}
Fix $n, d, k \in \N$, as well as $p \in [1,2]$ and $\eps \in (0,1)$. There exists an insertion-only streaming algorithm which processes a set of $n$ points $x_1,\dots, x_n \in \R^d$ utilizing $\poly(\log(nd), k, 1/\eps)$ space, and outputs a parameter $\boldeta \in \R$ which satisfies
\[ (1-\eps) \min_{\substack{C_1,\dots, C_k \\ \text{partition $[n]$}}} \sum_{\ell=1}^k \min_{c_{\ell} \in \R^d} \sum_{i \in C_{\ell}} \| x_i - c_{\ell} \|_p^p \leq \boldeta \leq (1+\eps) \min_{\substack{C_1,\dots, C_k \\ \text{partition $[n]$}}} \sum_{\ell=1}^k \min_{c_{\ell} \in \R^d} \sum_{i \in C_{\ell}} \| x_i - c_{\ell} \|_p^p \]
with probability at least $0.9$.
\end{theorem}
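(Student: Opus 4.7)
The plan is to combine the $\ell_p^p$-median sketch from Theorem~\ref{thm:median-sketch} with a coreset construction in a merge-and-reduce framework. Recall that a $(1\pm\eps)$-coreset for $(k,p)$-clustering is a weighted set $(S, w)$ with $S \subset \R^d$ and $w : S \to \R_{\geq 0}$ of size $s = \poly(k, \log n, 1/\eps)$ such that, for \emph{every} tuple of $k$ centers $c_1, \dots, c_k \in \R^d$, the weighted cost on $(S, w)$ approximates the clustering cost on the original dataset. In particular, the optimal $(k,p)$-clustering cost is preserved up to $1\pm \eps$.

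First I would maintain such a coreset on the stream using the merge-and-reduce framework of \cite{BS80, AHV05} instantiated with the coreset of \cite{HV20}. The crucial structural property (as noted in the paper introduction) is that the \cite{HV20} construction only queries pairwise distances between dataset points, so it can be executed entirely from linear sketches of the points rather than their explicit $d$-dimensional representation. Concretely, I would keep, for each of the $O(\log n)$ buckets of the merge-and-reduce tree, a collection of compressed representations of the $\poly(k,\log n, 1/\eps)$ coreset points currently held, where the compressed representation of a point $x$ is a sketch from which $\|x - x'\|_p$ can be estimated to $(1\pm\eps)$ against any other compressed point $x'$. Merges are computed in sketch space, and each bucket stores only sketches.

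Once the stream ends, I have a weighted coreset of $s$ points, each represented by a sketch. To estimate the $(k,p)$-clustering cost, I would enumerate all partitions of the coreset into $k$ parts; there are at most $k^s = 2^{\poly(\log n, k, 1/\eps)}$ of them. For each part $C_\ell$ with associated weights, I would invoke the weighted $\ell_p^p$-median sketch (Theorem~\ref{thm:median-sketch} and Lemma~\ref{lem:main-sketch-lemma}) to obtain a $(1\pm\eps/2)$-approximation to $\min_{c_\ell}\sum_{i \in C_\ell} \lambda_i \|x_i - c_\ell\|_p^p$, sum these across the $k$ parts of the partition, and output the minimum over all partitions. Critically, since the $\ell_p^p$-median sketch is linear and the partition of the coreset is chosen \emph{after} seeing the sketches, I must amplify the failure probability of the sketch so that it simultaneously succeeds on all $k^s \cdot \binom{s}{\leq k}$ sub-collections of coreset points that could appear in any partition. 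Setting $\delta = 2^{-\Theta(s \log k)}$ in Theorem~\ref{thm:median-sketch} costs only a $\poly(s, \log k) = \poly(\log(nd), k, 1/\eps)$ factor in space, after which a union bound guarantees correctness.

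The main obstacle I anticipate is ensuring that the coreset construction composes cleanly with the sketched representation, i.e.\ that the merge-and-reduce framework can be executed without ever instantiating the $d$-dimensional points, while still providing a coreset whose cost can be estimated via the $\ell_p^p$-median sketch. The subtlety is that \cite{HV20}'s coreset is defined with respect to the \emph{true} points, whereas we only retain sketches; thus we need the coreset construction to be robust to $(1\pm\eps)$ distance estimates and to produce, as output, the actual sketches (plus weights) of the selected points, so that the $\ell_p^p$-median sketch can subsequently be applied. A second, more minor point is handling the weights $\lambda_i$ produced by the coreset: since Lemma~\ref{lem:main-sketch-lemma} already supports arbitrary $\lambda_i \in [0,1]$ summing to $1$ (after renormalization of each cluster), this fits directly. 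Putting everything together yields total space $\poly(\log(nd), k, 1/\eps)$, and a final repetition of $O(1)$ times with a median trick boosts the success probability to $0.9$.
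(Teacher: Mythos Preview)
Your proposal is correct and follows essentially the same approach as the paper: maintain an \cite{HV20} coreset via merge-and-reduce using only $\ell_p$-distance sketches, then enumerate all $k^M$ partitions of the coreset and evaluate each cluster's cost with the weighted $\ell_p^p$-median sketch, union-bounding with $\delta = k^{-M}/\poly(k)$. The one detail the paper makes more explicit than you do is that, in addition to the distance sketch, one must store for each surviving coreset point the vectors $\bS_j^{(m)} x_i$ for every cluster size $m\le M$ and every slot $j\in[m]$ (Lemma~\ref{lem:ell-p-median}), since the weights and slot assignments are only fixed after the stream; this is precisely what resolves the ``main obstacle'' you flagged.
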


\subsection{Preliminaries: Coreset of \cite{HV20} and $\ell_p^p$-median sketch}

We will utilize the following result of \cite{HV20}.
\begin{definition}[Strong Coresets]\label{def:strong-coreset}
Let $x_1,\dots, x_n \in \R^d$ be a set of points and $p \in [1, 2]$. The set $S \subset [n]$ and $w \colon S \to \R_{\geq 0}$ is an $\eps$-\emph{strong coreset} for $(k, z)$-clustering in $\ell_p$ if for any set of $k$ points $c_1,\dots ,c_{\ell} \in \R^d$, 
\begin{align*}
\sum_{i=1}^n \min_{\ell \in[k]} \| x_i - c_{\ell} \|_p^z \approx_{1\pm \eps} \sum_{i \in S} w(i) \min_{\ell \in [k]} \| x_i - c_{\ell} \|_p^z.
\end{align*}
\end{definition}

\begin{corollary}[Corollary~5.18 of \cite{HV20}]\label{cor:strong-coresets}
For any $p \in [1,2]$, there exists a randomized algorithm that, given $O(1)$-approximate $\ell_p$-distances between $n$ points $x_1,\dots, x_n \in \R^d$, an integer $k \in \N$, $z \in [1,2]$, and $\eps, \delta \in (0, 1/2)$, outputs a (random) subset $\bS \subset [n]$ of $\poly(k\log(1/\delta)/\eps)$ points and a set of (random) weights $\bw \colon \bS \to \R_{\geq 0}$ which is a strong coreset for $(k, p)$-clustering in $\ell_p$ with probability at least $1-\delta$.\footnote{Corollary~5.18 is stated with $z \geq 2$; however, the proof works for any $z \in [1,2]$ as well.}
\end{corollary}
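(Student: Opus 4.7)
The plan is to invoke the Feldman-Langberg sensitivity-sampling framework, implemented so that the algorithm touches the inputs only through the $O(1)$-approximate distance oracle. At a high level, I will (i) compute a coarse approximate clustering that upper-bounds each point's contribution to the optimal cost, (ii) importance-sample points proportionally to these upper bounds, and (iii) prove a uniform-convergence bound so that a single weighted sample approximates $\sum_i \min_\ell \|x_i - c_\ell\|_p^z$ simultaneously for every $k$-tuple of centers $(c_1,\ldots,c_k) \in (\R^d)^k$.

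For step (i) I would run a Gonzalez-type farthest-point traversal (or a $k$-means\texttt{++}-style seeding) against the approximate distance oracle to produce a set $A \subset \{x_1,\ldots,x_n\}$ of $O(k)$ representatives with $\sum_i \min_{a \in A}\|x_i - a\|_p^z = O(1)\cdot \mathrm{OPT}_{k,z}$; only pairwise distance queries are needed, and the $O(1)$-approximation of distances only costs additional constant factors. Define the sensitivity of point $i$ by the usual Feldman-Langberg formula $\sigma_i \propto \min_{a \in A}\|x_i - a\|_p^z / \mathrm{cost}(A) \;+\; 1/|C_i(A)|$, where $C_i(A)$ is $i$'s cluster under $A$; these quantities are computable from the oracle and satisfy $\sum_i \sigma_i = O(k)$. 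For step (ii), I draw $m$ i.i.d.\ indices with $\Pr[\text{index } i] \propto \sigma_i$ and reweight each sampled point by $1/(m \cdot \Pr[i])$, so that for any fixed $k$-tuple of centers the resulting estimator is unbiased. The sensitivity normalization caps each summand's range, and a Bernstein-type inequality then gives $(1\pm\eps)$-concentration for any \emph{single} centering with probability $1 - \exp(-\Omega(m\eps^2/k))$.

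The main obstacle is step (iii): upgrading the single-centering guarantee to a \emph{uniform} guarantee over all $(c_1,\ldots,c_k) \in (\R^d)^k$ without paying a $d$-dependent factor. A naive $\eps$-net on center tuples has cardinality $\exp(\mathrm{poly}(kd))$, which would force an undesirable $d$-dependence in the sample size. The fix is to bound the pseudo-dimension (or the shatter function) of the class $\mathcal{F} = \{ x \mapsto \min_{\ell \in [k]}\|x - c_\ell\|_p^z : c_\ell \in \R^d\}$, when restricted to the active sample and to the normalized measure induced by the sensitivities, by $\widetilde{O}(k)$ independently of $d$. Following the Huang-Vishnoi approach, the natural route is a chaining argument over scales dictated by the sensitivities combined with the observation that, restricted to finitely many sample points, $\ell_p$-balls for $p \in [1,2]$ form a concept class of bounded combinatorial complexity (essentially because a $k$-center cost function is a $k$-fold $\min$ of norm thresholds, and the relevant level sets are handled via terminal-type arguments or $p$-stable representations).

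Plugging this pseudo-dimension bound into the generalized sensitivity-sampling concentration theorem (in the form of Feldman-Langberg and its sharpenings by Feldman-Schulman-Sohler and Braverman-Feldman-Lang) gives the target sample size $m = \mathrm{poly}(k \log(1/\delta)/\eps)$. A final union bound with the $O(1)$-probability success of the bicriteria step yields overall success probability $1-\delta$. The conceptual work concentrates entirely in step (iii); steps (i) and (ii) are standard and carry through even with only approximate distance access.
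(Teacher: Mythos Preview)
The paper does not prove this corollary; it is cited as Corollary~5.18 of \cite{HV20}. The paper's only contributions around the statement are (a) the remark that the \cite{HV20} algorithm can be run using only $O(1)$-approximate pairwise $\ell_p$-distances, verified by inspecting their Algorithm~1, and (b) a short supporting Fact showing that the $O(1)$-approximate center set needed in Step~2 of that algorithm can be chosen among the dataset points themselves, so that no $d$-dimensional vectors ever need to be stored. Your proposal, by contrast, attempts to reprove the coreset theorem from scratch via sensitivity sampling.

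Steps (i) and (ii) of your outline are standard and are compatible with the paper's observation that approximate distance access suffices; in particular your Gonzalez/bicriteria step plays the same role as the paper's Fact. The genuine gap is step (iii). You assert that the pseudo-dimension of $\{x \mapsto \min_{\ell} \|x - c_\ell\|_p^z : c_\ell \in \R^d\}$ is $\widetilde{O}(k)$ independently of $d$, appealing vaguely to ``terminal-type arguments or $p$-stable representations.'' This is precisely the nontrivial content of \cite{HV20}; it does not fall out of the Feldman--Langberg framework plus a routine shatter-function calculation. For $\ell_p$ with $p \neq 2$ the level sets of $\|\cdot - c\|_p$ are not halfspaces, and there is no off-the-shelf dimension-free VC/pseudo-dimension bound for them; \cite{HV20} obtains the dimension-independent coreset size through a substantially more involved iterative size-reduction argument rather than a direct pseudo-dimension bound. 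As written, your step (iii) restates the conclusion rather than proving it, so the proposal has a real gap exactly where all the work lies.
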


It is important that the algorithm of Corollary~\ref{cor:strong-coresets} only needs (approximate) access to distances between the $n$ points, as even storing a single point requires $\Omega(d)$ space. In particular, the $d$-dimensional representation of the coreset points is never stored. Even though \cite{HV20} does not explicitly mention this in Corollary~5.18, it can be verified by inspecting Algorithm~1. One additional fact is needed in Step~2 of Algorithm~1 of \cite{HV20}: a $O(1)$-approximate center set $C^*$ may be found using a subset of the dataset points when $z \geq 1$ is not too large.

\begin{fact}
Consider any $k \in \N$, $z \geq 1$ and $p \in [1, \infty)$, as well as points $x_1,\dots, x_n \in \R^d$. There exists a set of centers $c_1^*, \dots, c_k^* \in \{ x_1,\dots, x_n \}$ which give a $2^z$-approximation to the $(k,z)$-clustering in $\ell_p$.
\end{fact}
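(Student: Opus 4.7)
The plan is a standard center-replacement argument. Let $c_1, \dots, c_k \in \R^d$ be an optimal set of centers for the unrestricted $(k,z)$-clustering cost on $x_1,\dots, x_n$, and let $C_1, \dots, C_k$ be the induced partition where each $x_i$ is assigned to a nearest $c_\ell$ (breaking ties arbitrarily). Denote the optimal cost by $\mathrm{OPT} = \sum_{\ell=1}^k \sum_{x_i \in C_\ell} \|x_i - c_\ell\|_p^z$. For each nonempty cluster $C_\ell$, I will define $c_\ell^* \in \{x_1,\dots,x_n\}$ as a point of $C_\ell$ that is closest (in $\ell_p$) to the true optimal center $c_\ell$; for empty clusters, I will set $c_\ell^* = x_1$, which will not affect the cost.

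Next I will bound the clustering cost induced by the centers $c_1^*,\dots,c_k^*$, which is $\sum_{i=1}^n \min_{\ell \in [k]} \|x_i - c_\ell^*\|_p^z$. This minimum is bounded above by the cost of assigning each $x_i$ to the same cluster $C_\ell$ that it belonged to under the optimal partition, so it suffices to bound $\sum_\ell \sum_{x_i \in C_\ell} \|x_i - c_\ell^*\|_p^z$. For any $x_i \in C_\ell$, the triangle inequality gives $\|x_i - c_\ell^*\|_p \leq \|x_i - c_\ell\|_p + \|c_\ell - c_\ell^*\|_p$, and by the choice of $c_\ell^*$ as the closest point of $C_\ell$ to $c_\ell$, we have $\|c_\ell - c_\ell^*\|_p \leq \|x_i - c_\ell\|_p$. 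Hence $\|x_i - c_\ell^*\|_p \leq 2\|x_i - c_\ell\|_p$, and raising to the $z$-th power (using $z \geq 1$ so the bound is monotone) yields $\|x_i - c_\ell^*\|_p^z \leq 2^z \|x_i - c_\ell\|_p^z$.

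Summing over all clusters and points, the cost incurred by $c_1^*,\dots,c_k^*$ is at most $2^z \cdot \mathrm{OPT}$, establishing the claim. There is no real obstacle here: the only subtlety is the handling of empty clusters (which we sidestep by picking any dataset point as a harmless placeholder) and ensuring that $c_\ell^*$ is required to come from $C_\ell$ (not merely from $\{x_1,\dots,x_n\}$) so that the closeness inequality $\|c_\ell - c_\ell^*\|_p \leq \|x_i - c_\ell\|_p$ for $x_i \in C_\ell$ is immediate.
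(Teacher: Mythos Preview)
Your proof is correct and in fact more direct than the paper's. Both arguments start from an optimal center set $c_1,\dots,c_k$ with induced partition $C_1,\dots,C_k$, but they diverge in how they select the dataset surrogate $c_\ell^*$. You pick $c_\ell^*$ deterministically as the point of $C_\ell$ closest to $c_\ell$, which immediately gives the per-point bound $\|x_i - c_\ell^*\|_p \le 2\|x_i - c_\ell\|_p$ and hence the $2^z$ factor after raising to the $z$-th power. The paper instead picks $c_\ell^*$ uniformly at random from $C_\ell$ and bounds the \emph{expected} cost via the triangle inequality together with the convexity bound $(a+b)^z \le 2^{z-1}(a^z + b^z)$; the existence of a good surrogate then follows from the probabilistic method. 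Your approach is more elementary (no randomness, no convexity inequality needed) and yields the same constant; the paper's averaging argument is a slightly different classical route that trades the ``closest point'' observation for a symmetric expectation computation.
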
 

\begin{proof}
Let $c_1,\dots, c_k \in \R^d$ denote the optimal centers for $(k,z)$-clustering $x_1,\dots, x_n$, and let $C_1,\dots,C_{k}$ denote the partition of $[n]$ from the center set $c_1,\dots, c_k$, i.e., for $\ell \in [k]$,
\[ C_{\ell} = \left\{ i \in [n] : \forall \ell ' \in [k], \| x_i - c_{\ell} \|_p \leq \| x_i - c_{\ell'} \|_p  \right\}, \]
with ties broken arbitrarily. For each $\ell \in [k]$, let $\bi_{\ell} \sim C_{\ell}$ be a uniformly random index from $C_{\ell}$, and let $\by_{\ell} = x_{\bi_{\ell}}$. Then, we show that the expected cost of using $\by_{\ell}$ as a center for cluster $C_{\ell}$ cannot increase the cost significantly,
\begin{align*}
\Ex_{\bi_{\ell} \sim C_{\ell}}\left[ \sum_{i \in C_{\ell}} \| x_i - \by_{\ell} \|_p^z \right] &\leq \Ex_{\bi_{\ell} \sim C_{\ell}}\left[ \sum_{i \in C_{\ell}} \left( \| x_i - c_{\ell} \|_p + \| \by_{\ell} - c_{\ell} \|_p \right)^z \right]  \qquad \text{(H\"{o}lder Inequality)}\\
&\leq 2^{z-1} \sum_{i \in C_{\ell}} \| x_i - c_{\ell} \|_p^z + \Ex_{\bi_{\ell} \sim C_{\ell}}\left[ |C_{\ell}| \cdot 2^{z-1} \| \by_{\ell} - c_{\ell} \|_p^z\right] = 2^{z} \sum_{i \in C_{\ell}} \| x_i - c_{\ell} \|_p^z.
\end{align*}
Hence, for each cluster $\ell \in [k]$, there exists a dataset point $c_{\ell}^*$ such that, utilizing $c_{\ell}^*$ instead of the optimal center $c_{\ell}$ incurs at most a factor of $2^z$ in the cost of that cluster, and therefore, the $(k,z)$-clustering cost can be at most $2^{z}$ times more than with the optimal centers $c_1,\dots, c_k$.
\end{proof}

We now re-state Theorem~\ref{thm:median-sketch} for the case that the weights $\lambda_1,\dots, \lambda_n$ are unknown at time of processing.
\begin{lemma}\label{lem:ell-p-median}
Fix $m , d \in \N$, as well as $p \in [1,2]$ and $\eps, \delta \in (0,1)$, let $s = \poly(\log d, 1/\eps, \log(1/\delta))$. There exists a distribution $\calS$ over $s \times (md)$ matrices and an algorithm which satisfies the following.
\begin{itemize}
\item For any set of points $x_1,\dots, x_m \in \R^d$, and any set of weights $\lambda_1,\dots, \lambda_n \in [0,1]$ with $\sum_{i=1}^n \lambda_i = 1$, we sample $\bS = [\bS_1,\dots, \bS_n] \sim \calS$ and give the algorithm as input the matrix $\bS$, the weights $\lambda_1,\dots, \lambda_m$, and the vectors
\[ \bS_j x_i \in \R^s \qquad\text{ for all $i, j \in [m]$}.\]
\item With probability $1-\delta$ over the draw of $\bS$, the algorithm outputs a parameter $\boldeta \in \R_{\geq 0}$ which satisfies
\[ \min_{y \in \R^d} \sum_{i=1}^n \lambda_i \| y - x_i\|_p^p \leq \boldeta \leq (1+\eps) \min_{y \in \R^d} \sum_{i=1}^n \lambda_i \|y - x_i \|_p^p.\]
\end{itemize}
\end{lemma}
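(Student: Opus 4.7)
The plan is to reduce Lemma~\ref{lem:ell-p-median} to Theorem~\ref{thm:median-sketch} by exploiting the linearity of the sketch. In Theorem~\ref{thm:median-sketch}, the algorithm needs only a single vector in $\R^{s}$, namely $\sum_{i=1}^m \lambda_i^{1/p}\bS_i(x_i - \bar{x})$ where $\bar{x} = \sum_{h=1}^m \lambda_h x_h$, together with the matrix $\bS$ and the weights. Since we have already been handed $\bS_j x_i$ for every pair $(i,j) \in [m]^2$ and receive the weights $\lambda_1,\dots,\lambda_m$ at query time, we will assemble the required vector by a deterministic linear combination and then invoke the algorithm from Theorem~\ref{thm:median-sketch} as a black box.

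Concretely, the first step is to expand
\[ \sum_{i=1}^m \lambda_i^{1/p} \bS_i \Bigl(x_i - \sum_{h=1}^m \lambda_h x_h\Bigr) \;=\; \sum_{i=1}^m \lambda_i^{1/p} (\bS_i x_i) \;-\; \sum_{i=1}^m \sum_{h=1}^m \lambda_i^{1/p} \lambda_h (\bS_i x_h). \]
Every summand on the right is one of the stored cross-sketches $\bS_j x_i$, scaled by a known weight, so the assembled vector can be formed exactly from the input. Call this assembled vector $v \in \R^s$. By construction $v$ is distributed identically to the sketch vector that Theorem~\ref{thm:median-sketch} would receive on the centered weighted inputs $\lambda_i^{1/p}(x_i - \bar{x})$, using the same draw of $\bS \sim \calS$. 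The second step is to feed $(\bS, v, \lambda_1,\dots,\lambda_m)$ into the algorithm of Theorem~\ref{thm:median-sketch}, and output its estimate $\boldeta$.

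The correctness then follows because centering does not change the $\ell_p^p$-median cost: a translation by $\bar{x}$ shifts every candidate center $y$ by $\bar{x}$ as well, so $\min_{y} \sum_i \lambda_i\|y - x_i\|_p^p = \min_{y'} \sum_i \lambda_i \|y' - (x_i - \bar{x})\|_p^p$. Applying the guarantee of Theorem~\ref{thm:median-sketch} to the centered points yields the claimed multiplicative approximation with probability at least $1-\delta$. There is essentially no technical obstacle beyond bookkeeping; the only thing to verify is that deferring both the weighting (by $\lambda_i^{1/p}$) and the centering to query time is equivalent to performing them before sketching, which is immediate from linearity of $\bS$. The price paid for deferring the weights is that the storage now consists of $m^2$ vectors of dimension $s$ rather than a single one, and this is precisely the form in which downstream coreset-based applications (Theorem~\ref{thm:streaming-alg}) wish to consume the sketch.
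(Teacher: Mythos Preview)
Your proposal is correct and follows essentially the same approach as the paper: expand $\sum_{i}\lambda_i^{1/p}\bS_i(x_i-\bar{x})$ into a linear combination of the stored cross-sketches $\bS_j x_i$, then invoke the underlying median-cost estimator on this assembled vector. The only minor discrepancy is that the paper appeals directly to Lemma~\ref{lem:main-sketch-lemma} (which is stated for arbitrary weights $\lambda_1,\dots,\lambda_m$) rather than Theorem~\ref{thm:median-sketch} (which, as stated, fixes $\lambda_i=1/n$); since the estimator you actually need must accept the weights, citing Lemma~\ref{lem:main-sketch-lemma} is the cleaner reference.
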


\begin{proof}
The above is a simple consequence of Lemma~\ref{lem:main-sketch-lemma}. As in the proof of Theorem~\ref{thm:median-sketch}, the algorithm utilizes the algorithm of Lemma~\ref{lem:main-sketch-lemma} with the vector
\begin{align*}
\sum_{i=1}^m \lambda_i^{1/p} \bS_i x_i - \sum_{i=1}^m \sum_{j=1}^m \lambda_i \lambda_j^{1/p} \bS_j x_i &= \sum_{i=1}^m \lambda_i^{1/p} \bS_i \left( x_i - \sum_{j=1}^m \lambda_j x_j \right),
\end{align*}
which it can compute given access to $\bS_j x_i \in \R^d$ for all $i, j \in [m]$. The collection of vectors $x_i - \sum_{j=1}^m \lambda_j x_j$ is centered with respect to $\lambda_1,\dots, \lambda_m$, and is exactly the vector that Lemma~\ref{lem:main-sketch-lemma} expects to see.
\end{proof}

\subsection{Proof of Theorem~\ref{thm:streaming-alg}}

Let $M$ be an upper bound on the size of the coreset given from running the merge-and-reduce framework for coresets in the streaming model \cite{BS80} with $n$ points, where we use the coreset construction of Corollary~\ref{cor:strong-coresets}. Notice that $M = \poly(k \log(n)/\eps)$, where the extra $\poly(\log n)$-factors come from the merge-and-reduce framework. For each $m \in [M]$, we instantiate the distribution $\calS_{m}$ given by Lemma~\ref{lem:ell-p-median} with parameters $m, d$, $p$, $\eps$, and $\delta = k^{-M}/(100 k)$. We sample $\bS^{(1)},\dots, \bS^{(M)}$, where each $\bS^{(m)} = [\bS_1^{(m)},\dots, \bS_m^{(m)}] \sim \calS_m$. Note that for each $m \in [M]$, $\bS^{(m)}$ is an $s \times (md)$ matrix with $s = \poly(k \log(n)/\eps)$.

When a new vector $x_i \in \R^d$ comes in the stream, we generate the following:
\begin{itemize}
\item An $\ell_p$-sketch of $x_i \in \R^d$ for computing $O(1)$-approximate $\ell_p$-distances which succeeds with probability $1 - o(1/n^2)$ \cite{I06}. This sketch is used to compute distances between dataset points, as these are needed in order to compute the coreset of Corollary~\ref{cor:strong-coresets}.
\item For each $m \in [M]$ and $j \in [m]$, we store the vector $\bS_j^{(m)} x_i \in \R^s$. This prepares the $\ell_p^p$-median sketch of Lemma~\ref{lem:ell-p-median} to be evaluated on a cluster of size $m$. 
\end{itemize}

While processing the stream, we maintain a coreset $S \subset [n]$ with weights $w \colon S \to \R_{\geq 0}$ by using the merge-and-reduce framework of Bentley and Saxe \cite{BS80} for the streaming model. We maintain the $\ell_p$-sketch of $x_i$ and the $O(M^2)$ vectors $\bS_j^{(m)} x_i$ for all $m \in [M]$ and $j \in [m]$ whenever $i \in S$. If $i \notin S$, we no longer maintain the $\ell_p$-sketch or the vectors $\bS_j^{(m)} x_i$. This ensures that the total space complexity is $\poly(\log(nd), k, 1/\eps)$ because we store: 1) $O(M^3)$ vectors in $\R^s$, since there are $M$ points in the coreset, and each $i \in S$ stores the $O(M^2)$ vectors $\bS^{(m)}_j x_i \in \R^s$, 2) the weights $w(i)$ for each $i \in S$, and 3) the $\ell_p$-sketches for the points $x_i$.

When we finish processing the stream, we will evaluate the weighted cost of every clustering of $S$ coreset points. For each clustering, and for each cluster in the clustering, we will utilize the (weighted) $\ell_p^p$-median sketch in order to estimate its cost. First, notice that it suffices to compute the cost of all clustering of the coreset:
\begin{align*}
\min_{\substack{C_1,\dots, C_k \\ \text{partition $[n]$}}} \sum_{\ell=1}^k \min_{c_{\ell} \in \R^d} \sum_{i \in C_{\ell}} \| x_i - c_{\ell}\|_p^p &= \min_{c_1,\dots, c_k \in \R^d} \sum_{i=1}^n \min_{\ell \in [k]} \| x_i - c_{\ell} \|_p^p \\
	&\approx_{1\pm \eps} \min_{c_1,\dots, c_k \in \R^d} \sum_{i\in S} w(i) \min_{\ell \in [k]} \|x_i - c_{\ell} \|_p^p \quad\text{(def.~\ref{def:strong-coreset})}\\
	&= \min_{\substack{C_1,\dots, C_k \\ \text{partition $S$}}} \sum_{\ell=1}^k \min_{c_{\ell} \in \R^d} \sum_{i \in C_{\ell}} w(i) \| x_i - c_{\ell} \|_p^p.
\end{align*}
Consider a fixed partition $C_1,\dots, C_k$ of $S$. We evaluate the weighted cost of clustering $S$ with $C_1,\dots, C_k$ by evaluating the cost of each cluster $C_{\ell}$ for $\ell \in [k]$ and summing these costs up. Consider $C_{\ell} = \{ i_1,\dots, i_m \} \subset S$, and for $t \in [m]$, let
\[ \lambda_t = \frac{w(i_t)}{\sum_{j=1}^t w(i_j)}. \]
Recall that we stored the vectors $\bS^{(m)}_j x_{i_t} \in \R^s$, for all $j \in [m]$, so that we may utilize the algorithm of Lemma~\ref{lem:ell-p-median} to obtain an $(1\pm \eps)$-approximation to
\[ \boldeta \approx_{1\pm \eps} \min_{c_{\ell} \in \R^d} \sum_{t=1}^m \lambda_t \| x_{i_t} - c_{\ell} \|_p^p. \]
Then, $\boldeta \sum_{t=1}^m w(i_t)$ gives us a $(1\pm O(\eps))$-approximation to the cost of the cluster $C_{\ell}$. Since the $\ell_p^p$-median sketch succeeds with very high probability, we may union bound over all evaluations of at most $k$ cluster costs, for all $O(k^{M})$ possible clusterings of $S$. Hence, outputting the smallest cost gives us the desired approximation.

\appendix 

\section{Proof of (\ref{eq:ineq})}\label{app:importance-sampling-probs}

Here, we show that, for $y \in \R^d$ being the minimizer of $\sum_{i=1}^n \lambda_I \| y - x_i\|_p^p$ and $\sum_{i=1}^n \lambda_i x_i = 0$, the quantity $\sum_{i=1}^n \lambda_i |y_j - x_{ij}|^p$ is up to a constant factor, the same as $\sum_{i=1}^n \lambda_i |x_{ij}|^p$. Formally, for every $j \in [d]$,
\begin{align*}
\sum_{i=1}^n \lambda_i |y_j - x_{ij}|^p &\leq \sum_{i=1}^n \lambda_i |x_{ij}|^p \quad\text{(cost of $y_j$ smaller than $0$)}
\\&= \sum_{i=1}^n \lambda_i \left|x_{ij} - \sum_{i'=1}^n \lambda_{i'} x_{i'j}\right|^p \quad\text{($x_{1j}, \dots, x_{nj}$ are centered)}\\
&\leq \sum_{i=1}^n \lambda_i \left(\sum_{i'=1}^n \lambda_{i'} |x_{ij} - x_{i'j}|\right)^p \quad\text{(Triangle Inequality)} \\
&\leq \sum_{i=1}^n \sum_{i'=1}^n \lambda_i \lambda_{i'} |x_{ij} - x_{i'j}|^p, \quad\text{(Jensen's Inequality)}
\end{align*}
Furthermore,
\begin{align*}
\sum_{i=1}^n \sum_{i'=1}^n \lambda_i \lambda_{i'} |x_{ij} - x_{i'j}|^p &= \sum_{i=1}^n \sum_{i'=1}^n \lambda_i \lambda_{i'} |x_{ij} - y_{j} + y_j - x_{i'j}|^p \\
&\leq \sum_{i=1}^n \sum_{i'=1}^n \lambda_i \lambda_{i'} \left(|y_j - x_{ij}| + |y_j - x_{i'j}| \right)^p \quad\text{(Triangle Inequality)}\\
&\leq \sum_{i=1}^n \sum_{i'=1}^n 2^{p-1} \lambda_i \lambda_{i'} \left(|y_j - x_{ij}|^p + |y_j - x_{i'j}|^p \right) \quad\text{(H\"{o}lder Inequality)}\\
&= 2^p \sum_{i=1}^n \lambda_i |y_j - x_{ij}|^p.
\end{align*}


\section{Sketching Medoid Costs}\label{app:medoid-cost}

Another notion of centrality of a set of points is the medoid: for a metric space $(X, d_X)$ and a set of $n$ points $x_1, \dots, x_n \in X$, the medoid of the set of points is 
\[ \argmin_{y \in \{ x_1,\dots, x_n \}} \sum_{i=1}^n d_{X}(x_i, y). \]
Analogously, we may define the $\ell_p^p$-medoid cost of a set of points $x_1,\dots, x_n \in \R^d$ as 
\[ \min_{y \in \{ x_1,\dots, x_n \}} \sum_{i=1}^n \| x_i - y\|_p^p.\]
Similarly to the case of $\ell_p^p$-medians, the hope is to leverage sketching algorithms for $\ell_p$-norms in order to sketch the $\ell_p^p$-medoid cost. However, as we now show, the $\ell_p^p$-median and $\ell_p^p$-medoid are very different from the sketching perspective. 

\begin{theorem}[Two-Pass Streaming Algorithm for $\ell_p^p$-Medoid]
Fix $n, d \in \N$, as well as $p \in [1,2]$ and $\eps \in (0, 1)$. There exists a two-pass, insertion-only streaming algorithm using space $\poly(\log(nd), 1/\eps)$ which processes a set of $n$ points $P = \{ x_1,\dots, x_n\} \in \R^d$, and outputs a parameter $\boldeta \in \R$ which satisfies
\begin{align*}
\min_{z \in P} \sum_{i=1}^n \| x_i - z\|_p^p \leq \boldeta \leq (1+\eps) \min_{z \in P} \sum_{i=1}^n \| x_i -z\|_p^p
\end{align*} 
with probability at least $1 - o(1)$.
\end{theorem}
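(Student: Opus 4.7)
The plan is to combine the strong coreset construction of \cite{HV20} (Corollary~\ref{cor:strong-coresets}) with the $\ell_p$-sketch of \cite{I06}, exploiting the fact that once we have a weighted coreset $\bS \subset [n]$, $w\colon \bS \to \R_{\geq 0}$, evaluating the approximate cost $\sum_{i \in \bS} w(i)\|x_i - x_j\|_p^p$ of any candidate medoid $x_j$ reduces to $|\bS|$ pairwise $\ell_p$-distance queries, each of which is supported by a linear sketch. Iterating this evaluation over all $j \in [n]$ in a second pass then identifies an approximate medoid point and its cost.

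In pass~1 the algorithm does two things in parallel. First, it fixes a single $\ell_p$-sketch matrix $\bPi$ with $s = O(\log(n/\delta)/\eps^2)$ rows, generated on the fly through Nisan's pseudorandom generator so that its $sd$ entries are never stored; as each $x_i$ arrives it computes $\bPi x_i \in \R^s$. Second, it runs the merge-and-reduce framework around Corollary~\ref{cor:strong-coresets} to maintain a strong coreset $\bS \subset [n]$ of size $\poly(\log(n)/\eps)$ for $(1,p)$-clustering; every $\ell_p$-distance query demanded by the coreset construction is answered within a $(1\pm\eps)$-factor from $\bPi x_i - \bPi x_j = \bPi(x_i - x_j)$, which is more than enough for the $O(1)$-approximate distances that Corollary~\ref{cor:strong-coresets} requires. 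Whenever a point is evicted from the active coreset its stored sketch is discarded, so at the end of pass~1 we retain only the indices $\bS$, the weights $w$, and the sketches $\{\bPi x_i : i \in \bS\}$, a total of $\poly(\log(nd),1/\eps)$ bits.

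In pass~2 the algorithm iterates over every $x_j$ in the stream as a candidate medoid. For each $x_j$ it computes $\bPi x_j$, then for every $i \in \bS$ feeds $\bPi x_i - \bPi x_j$ to the $\ell_p$-sketch recovery algorithm to obtain $\hat{d}_{ij} = (1\pm\eps)\|x_i - x_j\|_p$, and forms the empirical cost $\sum_{i \in \bS} w(i)\,\hat{d}_{ij}^{\,p}$ while maintaining a running minimum. A union bound over the $|\bS|\cdot n$ distance queries (absorbed into the $\log(1/\delta)$ dependence of $s$) ensures every estimate is correct simultaneously, and composed with the coreset guarantee applied to each fixed center $z = x_j$, this yields $\sum_{i \in \bS} w(i)\hat{d}_{ij}^{\,p} = (1\pm O(\eps))\sum_{i=1}^n \|x_i - x_j\|_p^p$ uniformly in $j$. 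Taking the minimum over $j$ and rescaling by a constant close to $1$ to make the bound one-sided yields the stated $(1+\eps)$-approximation of the medoid cost.

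The main obstacle is not depth but consistency across the two passes: the matrix $\bPi$ must be reproducible in pass~2 to recompute $\bPi x_j$, which is handled exactly as in Section~\ref{sec:randomness} via Nisan's pseudorandom generator, and the merge-and-reduce procedure must carry each coreset point's sketch $\bPi x_i$ alongside its index throughout pass~1, which is straightforward because $\bPi x_i$ is computed once when $x_i$ arrives and never needs to be recomputed. There is no new sketching ingredient; the algorithm is a clean combination of a black-box coreset and a black-box $\ell_p$-sketch, made possible by the observation that the medoid objective factors over pairwise distances to coreset points.
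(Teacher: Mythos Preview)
Your approach is correct but considerably heavier than the paper's. The paper exploits a single observation: the medoid cost at center $x_j$ equals $\|x - y_j\|_p^p$, where $x\in\R^{nd}$ vertically stacks all $n$ input points and $y_j\in\R^{nd}$ stacks $n$ copies of $x_j$. Hence one linear $\ell_p$-sketch $\bS$ on $\R^{nd}$ suffices: pass~1 maintains $\bS x$, and in pass~2 one regenerates $\bS$ (via Nisan) to compute $\bS y_j$ for each arriving $x_j$, queries the sketch on $\bS(x-y_j)$, and keeps the minimum over the $n$ candidates. A union bound over these $n$ queries is all that is needed.

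Your route instead builds a strong $(1,p)$-coreset in pass~1 via merge-and-reduce, keeps per-point $\ell_p$-sketches of the coreset points, and in pass~2 evaluates every candidate's weighted cost as $|\bS|$ pairwise distance queries. This is valid: applying the coreset guarantee with a single center $c=x_j$ replaces the full sum by the weighted coreset sum, and a union bound over all $O(n^2)$ sketch queries keeps the error uniform. The extra machinery buys nothing here, since there is no need to reduce the number of candidate centers (you sweep all $n$ of them in pass~2 regardless) and the $p$-th power structure already lets a single stacked $\ell_p$-sketch aggregate the entire medoid cost in one shot. Conversely, the paper's approach is tied to $z=p$ via the stacking trick, whereas your coreset-based argument would generalize with minor changes to $(1,z)$-medoid costs for $z\neq p$.
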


The streaming algorithm will use the fact that there exists a linear sketch for $\ell_p$ norms. 
\begin{theorem}[Sketching $\ell_p$ \cite{I06}]\label{thm:ell_p-sketch}
For any $m \in \N$, $p \in [1,2]$, and $\eps, \delta \in (0,1)$, let $s = O(\log(1/\delta)/\eps^2)$. There exists a distribution $\calM$ over $s \times m$ matrices, and an algorithm which takes as input a vector in $\R^s$. For any $x \in \R^m$, with probability at least $1 - \delta$ over $\bS \sim \calM$, the algorithm on input $\bS x$, outputs $\boldeta \in \R_{\geq 0}$ satisfying
\begin{align*}
(1-\eps) \| x\|_p \leq \boldeta \leq (1+\eps) \|x\|_p. 
\end{align*}
\end{theorem}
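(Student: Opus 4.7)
The plan is to follow Indyk's construction based on $p$-stable distributions. For $p \in (0, 2]$ there is a distribution $\mathcal{P}_p$ on $\R$ (the standard $p$-stable distribution; Gaussian for $p=2$ and Cauchy for $p=1$) with the defining property that if $Z_1, \dots, Z_m \sim \mathcal{P}_p$ are independent and $x \in \R^m$, then $\sum_{i=1}^m x_i Z_i$ has the same distribution as $\|x\|_p \cdot Z$ with $Z \sim \mathcal{P}_p$. I would let $\calM$ be the distribution over $s \times m$ matrices with i.i.d.\ entries drawn from $\mathcal{P}_p$. Applying $p$-stability row by row, the vector $\bS x \in \R^s$ has coordinates that are i.i.d.\ as $\|x\|_p \cdot Z$, so estimating $\|x\|_p$ reduces to estimating the scale parameter of an i.i.d.\ sample from $\mathcal{P}_p$.

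For the estimator, let $\mu_p > 0$ denote the median of $|Z|$ when $Z \sim \mathcal{P}_p$; this is a constant depending only on $p$ and can be precomputed to arbitrary precision offline. Given the sketch $\bS x$, the algorithm outputs $\boldeta = \mathrm{median}_{j \in [s]} |(\bS x)_j|/\mu_p$. By the distributional identity above, $\boldeta$ is distributed as $\|x\|_p \cdot (\mathrm{median}_j |Z_j|/\mu_p)$, so it suffices to show that the sample median of $s$ i.i.d.\ copies of $|Z|$ lies in $[(1-\eps)\mu_p, (1+\eps)\mu_p]$ with probability at least $1-\delta$.

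The main technical obstacle is concentrating the sample median for a heavy-tailed distribution: for $p \in [1, 2)$, $\mathcal{P}_p$ has no finite variance (and no finite mean at $p=1$), so moment-based tail bounds are useless. The ingredients I would use instead are purely distributional: the density $f_p$ of $|Z|$ is continuous and strictly positive at $\mu_p$, so the CDF $F_p$ is bi-Lipschitz on a small neighborhood of $\mu_p$. This yields a constant $c_p > 0$ such that, for all sufficiently small $\eps$,
\begin{align*}
\Pr\!\left[|Z| \le (1-\eps)\mu_p\right] \le \tfrac{1}{2} - c_p \eps \qquad \text{and} \qquad \Pr\!\left[|Z| \ge (1+\eps)\mu_p\right] \le \tfrac{1}{2} - c_p \eps.
\end{align*}
A standard Chernoff bound on the count of samples on the wrong side of each threshold then shows that, with $s = O(\log(1/\delta)/\eps^2)$ samples, the sample median falls inside $[(1-\eps)\mu_p, (1+\eps)\mu_p]$ except with probability $\delta$, after a union bound over the two failure events.

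Finally, the entries of $\bS$ are real-valued $p$-stable draws, but truncating each entry to $O(\log(m/(\eps\delta)))$ bits perturbs $(\bS x)_j$ by only a $1/\poly(m/\eps\delta)$ factor when the inputs are polynomially bounded, and this is absorbed into the $(1\pm\eps)$ slack of the median estimator (the argument is essentially the ``Bounded Precision'' paragraph earlier in this paper). The only pieces of randomness the algorithm needs after processing the stream are those used to generate $\bS$, which can be stored implicitly via Nisan's pseudorandom generator, yielding the claimed $O(\log(1/\delta)/\eps^2)$ sketch size.
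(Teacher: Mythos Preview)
The paper does not prove this theorem; it is stated as a known result imported from \cite{I06} and used as a black box. Your proposal is a correct and faithful sketch of Indyk's original argument: i.i.d.\ $p$-stable entries, the median-of-magnitudes estimator normalized by the distributional median $\mu_p$, and concentration of the sample median via the positive density of $|Z|$ at $\mu_p$ together with a Chernoff bound on indicator counts. Nothing more is needed here.
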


\begin{proof}
We let $\calM$ be the distribution of Theorem~\ref{thm:ell_p-sketch} with accuracy parameter $\eps$ and failure probability $o(1/n)$. The sketch proceeds in the following way:
\begin{itemize}
\item We sample $\bS \sim \calM$. In the first pass, we consider the vector $x \in \R^{nd}$ which stacks the $n$ vectors $x_1, \dots, x_n \in \R^d$ and we maintain $\bS x$.
\item In the second pass, we process the vectors $x_1,\dots, x_n \in \R^d$. For $i = 1, \dots, n$, we let $y_i \in \R^{nd}$ be the vector which vertically stacks $x_i$ for $n$ repetitions. We query the algorithm on $\bS (x - y_i)$ to obtain an estimate $\boldeta_i$, and we maintain the minimum such $\boldeta_i$. 
\end{itemize}
We note that by the amplification of the success probability to $o(1/n)$, we may union bound over the $n$ queries to the streaming algorithm in the second pass.The bounds on the value of $\boldeta$ then follows from the fact that $\boldeta_i$ is an $(1\pm \eps)$-approximation to the $\ell_p^p$-medoid cost which utilizes $x_i$ as the medoid.
\end{proof}

\begin{theorem}\label{thm:one-pass-lb}[Lower Bound for One-Pass $\ell_p^p$-Medoids] 
Fix $p \in [1,\infty)$ and $\eps \geq 10/n$. For large enough $n \in \N$, a one-pass, insertion-only streaming algorithm which processes a set $P$ of $n$ points in $\{0,1\}^{2n}$ and outputs a number $\boldeta \in \R_{\geq 0}$ satisfying
\[ \min_{z \in P} \sum_{x \in P} \| x- z\|_p^p \leq \boldeta \leq (2-\eps) \cdot \min_{z \in P} \sum_{x \in P} \| x - z\|_p^p \]
with probability at least $9/10$ must use $\Omega(\eps n)$ bits of space. 
\end{theorem}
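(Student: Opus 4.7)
The plan is to prove the lower bound by reduction from the one-way communication complexity of the \textsc{Index} problem, in which Alice holds $a \in \{0,1\}^{N}$ with $N = \Theta(\eps n)$, Bob holds an index $j \in [N]$, Alice sends a single message, and Bob must output $a_j$. It is classical that any such protocol succeeding with probability at least $2/3$ requires $\Omega(N)$ bits. Given any one-pass, insertion-only streaming algorithm $\mathcal{A}$ for $\ell_p^p$-medoid with space $s$, Alice would simulate $\mathcal{A}$ on a stream of $n-O(1)$ points in $\{0,1\}^{2n}$ built from $a$, transmit the $s$-bit memory state to Bob, and let Bob insert $O(1)$ further points depending on $j$ and read off $\boldeta$. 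If $\boldeta$ reveals $a_j$, then $s = \Omega(N) = \Omega(\eps n)$.

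The crux is a geometric gadget producing a multiplicative gap strictly exceeding $2-\eps$ in the medoid cost between $a_j = 0$ and $a_j = 1$, noting that on $\{0,1\}^{2n}$ all the $\ell_p^p$ distances equal the Hamming distance and thus coincide for every $p \in [1,\infty)$. The key observation is a ``hub-and-leaves'' gap: a configuration consisting of one hub at Hamming distance $1$ from each of $m-1$ leaves, which are mutually at Hamming distance $2$, has optimal medoid equal to the hub with cost $m-1$, while every non-hub medoid has cost $2m-3$; and if the hub is replaced by an extra leaf, giving $m$ pairwise distance-$2$ points, the best medoid cost becomes $2(m-1)$. This is essentially a factor-$2$ jump, which can be made arbitrarily close to $2$ by scaling $m$ large relative to lower-order terms.

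Concretely, I would encode $a$ by a family of hub candidates $h_1,\dots,h_N \in \{0,1\}^{2n}$, inserting $h_i$ into the stream precisely when $a_i = 1$ and otherwise inserting a ``leaf'' in the corresponding position, padding with uniformly placed leaves to reach exactly $n - O(1)$ points in Alice's portion. Bob's $O(1)$ query points would activate only the $j$-th gadget, for instance by flooding the stream with replicas of $h_j$ (or of distance-$1$ neighbors of $h_j$), so that $h_j$ is the only serious candidate for medoid: when $a_j=1$ the hub $h_j$ is in $P$ and the medoid cost is $\sim m-1$, whereas when $a_j=0$ the hub is absent, and the best medoid is forced to be a distance-$2$ leaf, with cost $\sim 2(m-1)$. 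A $(2-\eps)$-approximation to this cost therefore lets Bob recover $a_j$ with constant probability.

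The principal technical obstacle is ensuring that the multiplicative $(2-\eps)$ separation survives the contributions of the other $N-1$ hub candidates and of the padding leaves, which add a common offset to the medoid cost in both cases and threaten to wash out the factor-$2$ signal coming from the $j$-th gadget. Balancing these parameters---the number of replicas Bob inserts, the pairwise geometry between distinct gadgets (placing them in disjoint coordinate blocks so that inter-gadget distances are uniform and therefore cancel when comparing medoid candidates), and the scaling of $N$ against $n$---is the work required to drive the ratio cleanly above $2-\eps$; the threshold $\eps \ge 10/n$ in the statement is what this balancing permits. Once the gap is secured, a standard Yao-type argument against the uniform distribution on $(a,j)$, together with the $\Omega(N)$ \textsc{Index} lower bound, yields the claimed $\Omega(\eps n)$ space lower bound on $\mathcal{A}$.
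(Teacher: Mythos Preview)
Your high-level plan---reduce from one-way \textsc{Index} and exploit the hub-versus-leaves Hamming gap---is exactly what the paper does, but your allocation of points between Alice and Bob is inverted, and that inversion is fatal to the argument as written.

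In your proposal, Alice inserts $n - O(1)$ points distributed across $N = \Theta(\eps n)$ gadgets, and Bob inserts only $O(1)$ points to ``activate'' gadget $j$. You correctly identify the obstacle: the $\Theta(n)$ common offset coming from the other $N-1$ gadgets will swamp the $O(1)$-sized signal in gadget $j$, collapsing the multiplicative gap to $1+o(1)$. No balancing of your stated parameters can fix this, because with Bob contributing only $O(1)$ points, the medoid cost in both cases is $\Theta(n)$ plus a bounded perturbation. The paper sidesteps this entirely by reversing the roles: Alice inserts at most $\alpha n = \eps n/10$ points, namely the standard basis vectors $\{e_i : y_i = 1\}$ for $i \in [\alpha n]$, and it is Bob who inserts the bulk of the stream---roughly $n$ leaves of the form $e_i + e_j$ for $j$ in a fresh block of coordinates, all at Hamming distance $1$ from the single candidate hub $e_i$. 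There is only one ``gadget,'' built almost entirely by Bob around his query coordinate.

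With this allocation the arithmetic is clean: if $e_i \in P$ (i.e.\ $y_i = 1$), then taking $e_i$ as medoid gives cost at most $(k-1)\cdot 2 + (n-k) \le (1+2\alpha)n$, so $\boldeta < (2-\eps)(1+2\alpha)n < 2n-2$; if $e_i \notin P$, every pair of points in $P$ is at Hamming distance at least $2$, so the medoid cost is at least $2(n-1)$ and $\boldeta \ge 2n-2$. Alice's message consists of the $s$-bit memory plus $O(\log n)$ bits for the count $k$, so $s = \Omega(\alpha n) = \Omega(\eps n)$. The moral: let Bob, not Alice, supply the leaves.
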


In particular, this implies that any streaming algorithm which can approximate the medoid cost in one pass and $\polylog(nd)$ space achieves approximation no better than $2$. 

The proof follows a reduction to one-way communication complexity of indexing. 
\begin{definition}[Indexing Communication Problem] 
The Indexing communication problem is a two-party one-way communication game parametrized by $m \in \N$.
\begin{itemize}
\item Alice receives a bit-string $y \in \{0,1\}^m$, and Bob receives an index $i \in [m]$.
\item Alice and Bob may use public-randomness, and Alice must produce a single message to Bob so that Bob outputs $y_{i}$ with probability at least $9/10$.
\end{itemize}
\end{definition}

\begin{theorem}[One-way Communication of Indexing]
Any one-way communication protocol for the Indexing problem uses $\Omega(m)$ bits of communication. 
\end{theorem}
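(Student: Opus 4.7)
The plan is to apply a standard information-theoretic argument in two stages: first reduce randomized lower bounds to distributional ones via Yao's principle, and then bound the mutual information between Alice's message and her input.

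First, I would invoke Yao's minimax principle to reduce to deterministic protocols on the uniform distribution: let $\by \sim \{0,1\}^m$ and $\bi \sim [m]$ be sampled uniformly and independently. Any randomized protocol succeeding with probability $\geq 9/10$ on every input implies a deterministic protocol succeeding with probability $\geq 9/10$ over this distribution (by fixing the public randomness to an advantageous value). Let $M \colon \{0,1\}^m \to \{0,1\}^*$ denote Alice's (deterministic) message function; we aim to show $\mathbf{E}[|M(\by)|] \geq \Omega(m)$.

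Next, I would argue via mutual information. Since the length of $M(\by)$ upper-bounds its entropy, it suffices to show $I(\by; M(\by)) \geq \Omega(m)$. Because the coordinates of $\by$ are independent under the uniform distribution, the chain rule and the fact that conditioning reduces entropy give
\begin{align*}
I(\by; M(\by)) = m - H(\by \mid M(\by)) \geq m - \sum_{i=1}^m H(\by_i \mid M(\by)) = \sum_{i=1}^m I(\by_i; M(\by)).
\end{align*}
It remains to lower bound each term $I(\by_i; M(\by))$ by a constant. For each fixed $i$, Bob's decoder, which is a function of $M(\by)$ and $i$, produces a guess for $\by_i$ that is correct with probability at least $9/10$ (over $\by$; this follows by averaging over $\bi$ and discarding indices where the success probability drops below, say, $4/5$ --- these are few by a Markov argument). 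Fano's inequality then yields $H(\by_i \mid M(\by)) \leq H_2(1/10) < 1/2$, so $I(\by_i; M(\by)) = 1 - H(\by_i \mid M(\by)) \geq 1/2$ for at least a constant fraction of indices $i$. Summing gives $I(\by; M(\by)) \geq \Omega(m)$, and hence $\mathbf{E}[|M(\by)|] \geq H(M(\by)) \geq \Omega(m)$, which forces some input to require $\Omega(m)$ bits of communication.

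The main obstacle is the bookkeeping in the last step: applying Fano requires the decoder to succeed with high probability for each fixed $i$, whereas the protocol only guarantees success in expectation over the joint distribution of $(\by, \bi)$. The clean workaround is a two-step averaging: call $i$ ``good'' if the decoder succeeds for $\bi = i$ with probability $\geq 4/5$ over $\by$; by Markov at least, say, half the indices are good, and Fano applied to these indices still yields $\sum_i I(\by_i; M(\by)) \geq \Omega(m)$. Everything else is routine manipulation of entropies and chain rules.
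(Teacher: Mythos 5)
The paper does not prove this theorem; it states it as a standard, well-known fact (the $\Omega(m)$ one-way lower bound for Indexing, due to Ablayev and to Kremer--Nisan--Safra) and uses it as a black box in the reduction of Theorem~\ref{thm:one-pass-lb}. So there is no paper proof to compare against. Your information-theoretic argument — Yao's minimax to pass to a deterministic protocol on uniform inputs, then the chain rule with conditioning-reduces-entropy to split $I(\by; M(\by))$ across coordinates, then Fano per coordinate — is one of the two classical proofs of this fact (the other being a combinatorial packing/counting argument on Alice's message classes), and your structure is sound.

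One small numerical inconsistency to clean up: you first claim each coordinate decoder succeeds with probability $\geq 9/10$ and conclude $H(\by_i \mid M(\by)) \leq H_2(1/10) < 1/2$, but the Markov-averaging workaround you correctly describe afterward only guarantees success $\geq 4/5$ on the ``good'' (at-least-half) indices. So the per-coordinate Fano bound should read $H(\by_i \mid M(\by)) \leq H_2(1/5)$, and the final bound is $\sum_i I(\by_i; M(\by)) \geq \tfrac{m}{2}\bigl(1 - H_2(1/5)\bigr) = \Omega(m)$. Since $H_2(1/5) \approx 0.72 < 1$, the conclusion still holds; just make the constants consistent with the threshold you actually use. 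Also, to get a worst-case bit bound from $\mathbf{E}[|M(\by)|] = \Omega(m)$, it is slightly cleaner to note that the number of distinct messages is at least $2^{H(M(\by))} = 2^{\Omega(m)}$, so some message has length $\Omega(m)$; this avoids any quibble about variable-length encodings.
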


\begin{proof}[Proof of Theorem~\ref{thm:one-pass-lb}] 
We devise a protocol which utilizes a streaming algorithm for approximating the medoid in order to solve the indexing problem. Specifically, let $\alpha = \eps/10$ and suppose Alice receives a bit-string $y \in \{0,1\}^{\alpha n}$, where $\alpha n \in \N$. She considers the subset 
\[ P_A(y) = \{ e_{i} \in \{0,1\}^{2n} : y_i = 1 \}. \]
She process the stream of points $P_A(y)$ and communicates the contents of the memory and the number of points in $k =P_A(y)$. Bob, upon seeing $i \in [\alpha n]$ and $k \leq \alpha n$, needs to determine whether $e_{i} \in P_A(y)$ or $e_{i} \notin P_A(y)$. Bob considers the subset
\[ P_B(i,k) = \left\{ e_{i} + e_{j} : j \in \{ \alpha n + 1, \dots, \alpha n + n - k \}\right\},\]
so that $|P_B(i,k)| + |P_A(y)| = n$. Bob inserts the points $P_{B}(i, k)$ into the stream, which results in the subset of points $P = P_B(i,k) \cup P_A(y)$ of $n$ points in $\{0,1\}^{2n}$. The crucial point is that Bob may distinguish the two cases by considering the estimate $\boldeta$ of the medoid cost that the streaming algorithm produces. In particular,
\begin{itemize}
\item If $e_{i} \in P_A(y)$, then, by choosing a candidate center $z' = e_i$, we have
\[ \min_{z \in P} \sum_{x \in P} \| x - z\|_p^p \leq \sum_{x \in P} \|e_i - x\|_p^p \leq (k-1) \cdot 2 + (n-k) \leq  (1 + 2\alpha ) n.\]
In this case, the algorithm produces an estimate $\boldeta$ which is at most $(1+2\alpha) (2-\eps) n < 2n - 2$.
\item On the other hand, if $e_i \notin P_A(y)$, then every two points non-equal $x, x' \in P$ satisfy $\|x - x'\|_p^p \geq 2$, so that
\[ \min_{z \in P} \sum_{x \in P} \|x - z\|_p^p \geq 2\cdot (n-1).\]
\end{itemize}
Specifically, if the streaming algorithm was able to return a $(2-\eps)$-approximation, Bob could distinguish between the two cases. Since Bob communicates $O(\log_2(\eps n)) \ll \eps n$ bits to encode the number $k$, the space complexity of the streaming algorithm must be $\Omega(\eps n)$.
\end{proof}

\pagenumbering{arabic}
\setcounter{page}{1}

\bibliographystyle{alpha}
\bibliography{waingarten,cluster}

\end{document}